\documentclass{article}
\usepackage{kr}
\usepackage{times}
\usepackage{boondox-cal}
\usepackage[bb=boondox]{mathalfa}
\usepackage{xfrac}
\usepackage{stmaryrd}
\usepackage{iftex}

\usepackage{bm} %bold math symbols \bm{}
\usepackage{soul}
\usepackage{url}
\usepackage[shortlabels]{enumitem}
\usepackage[hidelinks]{hyperref}
\usepackage[T1]{fontenc}
\usepackage[utf8]{inputenc}
\usepackage[small]{caption}
\usepackage{graphicx}
\usepackage{mathtools}
\usepackage{amsmath}
\usepackage{amssymb}
\usepackage{amsthm}
\usepackage{booktabs}
\usepackage{algorithm}
\usepackage{algorithmic}
\usepackage{amsfonts}
\usepackage{xcolor}
\usepackage{multicol}
\usepackage{multirow}
\usepackage{verbatim}
\usepackage{thm-restate}
\usepackage[normalem]{ulem}
\usepackage[all]{xy}
\newcommand{\onehalf}{\overline{\tfrac{1}{2}}}
\newcommand{\one}{\overline{1}}
\newcommand{\zero}{\overline{0}}

\newcommand{\Var}{\mathtt{Var}}

\newcommand{\Prob}{\mathsf{Pr}}

\newcommand{\eprplus}{\mathsf{EPR}^+}
\newcommand{\emrplus}{\mathsf{EMR}^+}

\newcommand{\umodrule}{\mathsf{UMR}}
\newcommand{\uprobrule}{\mathsf{UPR}}
\newcommand{\uml}{\mathsf{UML}}
\newcommand{\upl}{\mathsf{UPL}}

\newcommand{\Luk}{{\normalfont{\textbf{{\L}}}}}
\newcommand{\LukProd}{\Luk\mathbf{\Pi}}
\newcommand{\KL}{{\normalfont\textbf{K{\L}}}}

\newcommand{\KLukProdProb}{\KL\bm{\Pi}\mathbf{\tfrac{1}{2}}^\mathsf{Pr}}

\newcommand{\KLukProd}{\KL\bm{\Pi}\onehalf}

\newcommand{\fb}{\mathsf{fb}}
\newcommand{\sfive}{\mathbf{S5}}

\newcommand{\prea}{\mathsf{EA^{Pr}}}
\newcommand{\preafb}{\mathsf{EA^{Pr}_{fb}}}
\newcommand{\EALukProd}{\mathsf{EA}^{\Luk\mathbf{\Pi}}}
\newcommand{\EALukProdfb}{\mathsf{EA}^{\Luk\mathbf{\Pi}}_{\mathsf{fb}}}
\newcommand{\EA}{\mathsf{EA}}

\newcommand{\Imbb}{\mathbb{I}}
\newcommand{\Nmbb}{\mathbb{N}}

\newcommand{\Smbb}{\mathbb{S}}

\newcommand{\Cmsf}{\mathsf{C}}

\newcommand{\Emsf}{\mathsf{E}}

\newcommand{\Lmsf}{\mathsf{L}}

\newcommand{\Rmsf}{\mathsf{R}}
\newcommand{\Smsf}{\mathsf{S}}

\newcommand{\Wmsf}{\mathsf{W}}

\newcommand{\Bmc}{\mathcal{B}}

\newcommand{\Fmc}{\mathcal{F}}

\newcommand{\Imc}{\mathcal{I}}

\newcommand{\Kmc}{\mathcal{K}}

\newcommand{\Mmc}{{\mathcal{M}}}
\newcommand{\Nmc}{{\mathcal{N}}}

\newcommand{\Smc}{{\mathcal{S}}}
\newcommand{\Tmc}{{\mathcal{T}}}

\newcommand{\lmc}{\mathcal{l}}

\newcommand{\Efrak}{\mathfrak{E}}
\newcommand{\Ffrak}{\mathfrak{F}}

\newcommand{\Mfrak}{{\mathfrak{M}}}

\newcommand{\Xfrak}{{\mathfrak{X}}}

\newcommand{\Fmbf}{\mathbf{F}}
\newcommand{\Kmbf}{\mathbf{K}}
\newcommand{\Lmbf}{\mathbf{L}}

\newcommand{\pspace}{\mathsf{PSPACE}}

\newcommand{\TEALukProd}{\Tmc\!(\EALukProd_\fb)}

\newcommand{\TKLukProd}{\Tmc\!(\KLukProd_\fb)}

\newcommand{\atom}{\mathsf{At}}

\newcommand{\real}{\mathsf{rl}}

\newcommand{\ClusterSet}{\mathfrak{Cl}}
\newcommand{\cluster}{\mathfrak{cl}}
\newcommand{\ActionSet}{\mathfrak{Ac}}

\newcommand\myoverset[2]{\overset{\scriptstyle #1\mathstrut}{\scriptstyle #2\mathstrut}}
\newcommand{\subformulas}{\mathsf{SF}}

\newcommand{\Ac}{\mathtt{Act}} % the set of basic actions
\newcommand{\Ag}{\mathtt{Agt}} % the set of agents
\newcommand{\Lev}{\mathcal{L}_{\mathbf{evt}}} % event language; multi-agent S5 + actions (K)
\newcommand{\Lpr}{\mathcal{L}^{\Prob}_\mathsf{EA}} % (full) probabilistic language
\newcommand{\Lmod}{\mathcal{L}_\mathsf{EA}} % the language of (S5+K)LukProd
\newcommand{\sneg}{\mathord{\sim}} % Boolean negation in event formulas

\newcommand{\knows}{\mathsf{K}} % a proposal for the knowledge modality
\newcommand{\ActBox}[1]{[#1]}
\newcommand{\ActLozenge}[1]{\langle #1\rangle}
\newcommand{\knowsLozenge}[1]{\widehat{\mathsf{K}}_{#1}}

\newtheorem{lemma}{Lemma}
\newtheorem{definition}{Definition}
\newtheorem{convention}{Convention}
\newtheorem{example}{Example}
\theoremstyle{definition}
\theoremstyle{remark}
\newtheorem{remark}{Remark}

\title{Reasoning About Probabilities, Actions, and Knowledge in Fuzzy Modal Logic}

\author{%
Daniil Kozhemiachenko$^1$ \and Igor Sedlár$^2$\\
\affiliations
$^1$Aix Marseille Univ, CNRS, LIS, Marseille, France\\
$^2$Institute of Computer Science, The Czech Academy of Sciences, Prague, Czech Republic
\emails
daniil.kozhemiachenko@lis-lab.fr, sedlar@cs.cas.cz}

\begin{document}
\maketitle
\begin{abstract}
We explore a~fuzzy modal logic that can formalise probabilistic reasoning about actions and knowledge. In particular, we deal with contexts involving statements about events expressed via modal formulas, e.g., ‘after doing~$a$, the probability of $A$~knowing that $p$ holds increases~/ decreases / is equal to~$0.25$’, ‘according to~$A$, $p$~is equally likely to happen after doing $a$~or~$b$’, etc. We define the semantics of the logic on Kripke frames equipped with probability measures. We analyse the complexity of deciding the satisfiability of formulas of our logic over finitely branching models, for the full language and its fragments of varying expressivity. In particular, we identify several fragments of our logic where satisfiability is decidable in polynomial time. %polynomially tractable.
\end{abstract}
\allowdisplaybreaks
\section{Introduction\label{sec:introduction}}
Probabilistic reasoning is a~well-established field in artificial intelligence and knowledge representation and reasoning. It has found numerous applications in logic programming~\cite{NgSubrahmanian1992,Lukasiewicz1998,RiguzziSwift2018}, ontologies~\cite{Lukasiewicz2008,Penaloza2020}, and formal argumentation~\cite{LiOrenNorman2012,FazzingaFlescaParisi2015}. An important approach to the formalisation of probabilistic reasoning is \emph{probabilistic logics}. These are formal systems that formalise reasoning with probabilistic statements such as ‘if the probability of~$p$ is twice as much as the probability of~$q$, then the probability of~$q$ is less than~$0.5$’.

In this paper, we consider a~fuzzy modal logic that can formalise probabilistic reasoning about actions and knowledge. We study the complexity of the full language and its fragments, including those for which satisfiability is decidable in polynomial time. This presents a~perspective on the computational properties of formal verification and automated reasoning with probabilistic specifications that describe agents' knowledge and actions.

\paragraph{Probabilistic Logics}
Probabilistic logics were first proposed by Nilsson~\shortcite{Nilsson1986}. The propositional fragment of Nilsson's logic was further studied by Fagin et al.~\shortcite{FaginHalpernMegiddo1990}. Later, Fagin and Halpern~\shortcite{FaginHalpern1994} considered an expansion of probabilistic logic with epistemic modalities. Bacchus et al.~\shortcite{BacchusEtAl1999} propose a framework for a probabilistic logic of action and belief based on Reiter's situation calculus. Halpern and Pucella~\shortcite{HalpernPucella2002} considered logics with operators expressing upper probabilities to model qualitative uncertainty about probabilities. Halpern and O'Neill~\shortcite{HalpernO’Neill2005} apply probabilistic epistemic logic to reasoning about anonymity in multi-agent systems.

Since then, even more expressive systems of probabilistic logics have been extensively studied. In particular, Ognjanovic et al.~\shortcite{OgnjanovicMarkovicRaskovicDoderPerovic2012}, Doder and Perovic~\shortcite{DoderPerovic2020}, and Doder and Ognjanovic~\shortcite{DoderOgnjanovic2024} consider temporal probabilistic logic. Tomovic et al.~\shortcite{TomovicOgnjanovicDoder2020} studied an epistemic expansion of the \emph{first-order} probabilistic logic, and Doder et al.~\shortcite{DoderSavicOgnjanovic2020} explored a~multi-agent logic for reasoning about lower and upper probabilities.

\paragraph{Probabilistic Reasoning in Fuzzy Logics}
An important alternative approach to probabilistic logics is to use so-called \emph{fuzzy logics} to reason about probabilities. Fuzzy logics were introduced by Zadeh~(\citeyear{Zadeh1965,Zadeh1975}) to reason about imprecise statements such as ‘it is cold outside’, ‘the symptoms are severe’, etc. In fuzzy logics, formulas are evaluated over the interval $[0,1]$, and their values are interpreted as \emph{degrees of truth} from $0$~(absolutely false) to~$1$ (absolutely true). Fuzzy logics have multiple applications in artificial intelligence. In particular, He et al.~\shortcite{HeLeungJennings2003} develop an autonomous bidding strategy based on fuzzy logic. Semantics of fuzzy logic is used to support learning and reasoning in real-world domains~\cite{DiligentiGoriSacca2017,BadreddinedAvila-GarcezSerafiniSpranger2022} and statistical relational learning~\cite{BachBroechlerHuangGetoor2017}. Furthermore, van Krieken et al.~\shortcite{vanKriekenAcarvanHarmelen2022} analyse the interaction between different fuzzy logic semantics and learning. Chen et al.~\shortcite{ChenHuSun2022} use fuzzy logic in reasoning problems with knowledge graphs.%, and Hanikov\'{a} et al.~\shortcite{HanikovaManyaVidal2023} apply it to the MaxSAT problem for MV-algebras.

Reasoning about uncertainty in fuzzy logics is usually formalised with \emph{two-layered formulas}. Namely, the probability of an event $\alpha$ (\emph{inner} or \emph{event formula}) is the truth degree of the \emph{probabilistic atom}~--- the formula $\Prob(\alpha)$, interpreted as ‘$\alpha$ is probable’. Initially~\cite{HajekHarmancova1995,HajekEtAl1995}, the events were formalised using classical propositional formulas, and probabilistic atoms were combined with connectives of {\L}ukasiewicz logic into \emph{outer formulas}. H\'{a}jek et al.~\shortcite{HajekEtAl2000} expand this framework into the language of Product fuzzy logic with rational constants~\cite{EstevaEtAl2001} to express thresholds and conditional probabilities. It is shown by Baldi et al.~\shortcite{BaldiCintulaNoguera2020} that propositional fuzzy probabilistic logics have the same expressivity as logics of Fagin et al.~\shortcite{FaginHalpernMegiddo1990}. It is argued that the same expressive power comes with a much simpler syntax and axiomatisation.

Modal expansions of fuzzy probabilistic logics have also been extensively studied. In particular, Godo et al.~\shortcite{GodoHajekEsteva2003} expressed Dempster--Shafer belief functions with formulas of the form $\Prob(\Box\alpha)$. Using the work of Marchioni~\shortcite{Marchioni2008}, Corsi et al.~\shortcite{CorsiFlaminioGodoHosni2023} showed that lower probabilities can be expressed via formulas of the form $\Box\Prob(\alpha)$. They considered a~logic where epistemic modalities are used to both express events and statements about their probabilities. Majer and Sedl\'{a}r~\shortcite{MajerSedlar2025} and Kozhemiachenko and Sedl\'{a}r~\shortcite{KozhemiachenkoSedlar2025} considered a~fuzzy probabilistic modal logic with arbitrary $\Box$-like modalities and propositional event formulas, allowing for a broader model of lower probabilities where uncertainty is not necessarily epistemic in nature, but may arise as a result of non-determinism in outcomes of actions.

\paragraph{Contributions}
Although probabilistic fuzzy modal logics have received much attention, to the best of our knowledge, the framework proposed by Corsi et al.~\shortcite{CorsiFlaminioGodoHosni2023} is the only one where modalities can appear on both layers. It also seems that the computational properties of such logics are considerably less studied. Thus, in this paper, we bring together the frameworks of~\cite{CorsiFlaminioGodoHosni2023} and~\cite{MajerSedlar2025} and explore a~logic that we call $\prea$, whose language is based upon a~modal expansion of the combined Łukasiewicz and Product logics where epistemic and action modalities appear both in inner- and outer-layer formulas. As we will see in Section~\ref{sec:expressivity}, this enables us to reason with a~wide range of events and express various relations between their probabilities.

We show that the language can (i)~express probabilistic statements about modal events involving knowledge or effects of actions (e.g.~statements about the probability of an agent knowing a proposition or~a proposition holding after a~specific action is performed) and (ii)~represent lower and upper probabilities of events arising from epistemic uncertainty or from uncertainty about outcomes of actions. As we discuss in more detail below, system specifications of these forms are important in areas such as multi-agent systems, cybersecurity, and robotics.

The main technical result in the paper is an extensive study of the complexity of $\prea$. We establish that the validity (and satisfiability) problem for arbitrary formulas over finitely-branching models is $\pspace$-complete. We also consider several fragments of our language where \emph{validity is decidable in polynomial time}. Our contributions shed light on the computational features of formal verification and automated reasoning for probabilistic action-epistemic specifications within our expressive framework.

\paragraph{Plan of the Paper}
The text is structured as follows. In Section~\ref{sec:language}, we present the language and semantics of $\prea$. We discuss its expressivity in Section~\ref{sec:expressivity}. In Sections~\ref{sec:decidability} and~\ref{sec:polynomial}, we study the complexity of $\prea$ and its fragments. In Section~\ref{sec:relatedwork}, we discuss related work in modal probabilistic logics. We summarise our results and outline the future work in Section~\ref{sec:conclusion}. Omitted proofs are put in the appendix of the technical report~(\cite{KozhemiachenkoSedlar2026arxiv}).
\section{Language and Semantics\label{sec:language}}
We begin with the definition of the \emph{event language}.
\begin{definition}\label{def:Lev}
Let $\Ac$ and $\Ag$ be two countable sets of labels, representing the set of basic actions and the set of agents, respectively. Let $\Var$ be a countable set of propositional variables, expressing basic events. The set $\Lev$ of \emph{event formulas} is defined using the following grammar:
\begin{align*}
\Lev\ni\alpha& \coloneqq p \mid \sneg \alpha \mid \alpha \land \alpha \mid \knows_A \alpha \mid \ActBox{a}\alpha
\end{align*} 
where $p \in \Var$, $A \in \Ag$ and $a \in \Ac$.
\end{definition}
As one can see, $\Lev$ is a~variant of the classical multi-modal language containing knowledge modalities $\knows_A$ for agents $A \in \Ag$ (where $\knows_A \alpha$ means that `the agent $A$ knows that $\alpha$') in combination with action modalities $\ActBox{a}$ for actions $a \in \Ac$ (where $\ActBox{a}\alpha$ means that `$\alpha$ holds in all states that can be obtained by doing $a$'). %possible
\begin{definition}[Event frames and models]~
\begin{itemize}
\item An \emph{event frame} is a~tuple $\Ffrak=\langle W,\Emsf,\Rmsf\rangle$ where $W\neq\varnothing$, $\Emsf$~is a~function from $\Ag$ to the set of equivalence relations on~$W$, and $\Rmsf:\Ac\rightarrow2^{W\times W}$.
\item An \emph{event model} is a tuple $\Efrak=\langle W,\Emsf,\Rmsf,V\rangle$ with $\langle W,\Emsf,\Rmsf\rangle$ being an event frame and $V:\Var\rightarrow2^W$.
\end{itemize}
\end{definition}
Given a~model~$\Efrak$, we will write $W_\Efrak$, $\Emsf_\Efrak$, $\Rmsf_\Efrak$, and $V_\Efrak$ to denote, respectively, the set of states of~$\Efrak$, its epistemic and action relations, and its valuation.

Informally, $W$ is a set of possible states of the environment (or possible worlds). The equivalence relation $\Emsf_A$ for $A \in \Ag$ represents the epistemic state of agent $A$: $\langle w, u \rangle \in \Emsf_A$ (written also as $\Emsf_A wu$ or $w\Emsf_Au$) means that, in state $w$, the agent $A$ lacks sufficient information to distinguish state $w$ from state $u$. Similarly, the relation $\Rmsf_a$ represents possible outcomes of performing action $a \in \Ac$: $\langle w, u \rangle \in \Rmsf_a$ (written also as $\Rmsf_a wu$ or $w\Rmsf_au$) means that $u$~is a~possible outcome of performing~$a$ in state~$w$. That is, we do not assume that actions are deterministic. We define $\Rmsf_a(w) \coloneqq \{w'\mid w\Rmsf_aw'\}$ and $\Emsf_A(w)\coloneq\{w'\mid w\Emsf_Aw'\}$. Finally, the valuation function~$V$ assigns to each propositional variable $p \in \Var$ an \emph{event}, that is, a~subset of~$W$. This event will also be denoted as~$\| p \|_\Efrak$. Events $\| \alpha \|_\Efrak$ for complex event formulas are defined as usual in modal logic:
\begin{align*}
\| \sneg \alpha \|_\Efrak&=W \setminus \| \alpha \|_\Efrak\\
\| \alpha_1 \land \alpha_2 \|_\Efrak&= \| \alpha_1 \|_\Efrak \cap \| \alpha_2 \|_\Efrak\\
\| \knows_A \alpha \|_\Efrak&= \Kmc_A\| \alpha \|_\Efrak\coloneqq \{ w \mid  \Emsf_A(w) \subseteq \| \alpha \|_\Efrak\}\\
\|\ActBox{a}\alpha \|_\Efrak&= \mathcal{R}_a \| \alpha \|_\Efrak \coloneqq \{ w \mid \Rmsf_a(w) \subseteq \| \alpha \|_\Efrak \}
\end{align*}
\begin{definition}[Probability space]\label{def:probabilityspace}
A \emph{probability space} over an event model $\Efrak{=}\langle W,\Emsf,\Rmsf,V\rangle$ is a pair $\Smbb{=}\langle\Fmc,\mu\rangle$ s.t.:
\begin{itemize}[noitemsep,topsep=0pt]
\item $\Fmc$ is a collection of subsets of $W$ which contains $W$ and $V(p)$ for all $p \in \Var$; and is closed under complements, finite unions and the operators $\Kmc_A$, $\mathcal{R}_a$ for all $A \in\Ag$ and $a \in \Ac$;
\item $\mu$ is a finitely additive probability measure on $\Fmc$.
\end{itemize}    
\end{definition}

\begin{definition}[Probabilistic models]\label{def:probabilitsticmodel}
A \emph{probabilistic model} is a tuple $\Ffrak=\langle W,\Fmbf,\Emsf,\Rmsf,\mu,V\rangle$ s.t.\ $\langle W,\Emsf,\Rmsf,V\rangle$ is an event model and $\langle\Fmbf,\mu\rangle=\langle\Fmc_{w,A},\mu_{w,A}\rangle_{w\in W,A\in\Ag}$ is a~tuple of probability spaces over $\langle W,\Emsf,\Rmsf,V\rangle$ for each $w \in W$ and $A\in\Ag$.
\end{definition}
%%%%%

Intuitively, $\mu_{w,A}(X)$ is the subjective probability of~$X$ for agent~$A$ in state~$w$. To achieve generality, we do not assume that the same events are measurable for each agent in each state. This design choice gives rise to the plurality of probability spaces $\langle \Fmbf, \mu\rangle$. To express \emph{reasoning} about events and their probabilities, we will use a~language based on the fuzzy modal logic. We thus undertake the approach proposed by~Hájek et al.~(\citeyear{HajekEtAl1995,Hajek1998,HajekEtAl2000}). The propositional fragment of our language is based on that of $\Luk\bm{\Pi}\onehalf$~--- the logic that combines connectives of Łukasiewicz and Product logics. We further expand it with epistemic and action modalities corresponding to those in the event language.
\begin{definition}\label{def:Lpr}
The \emph{full probabilistic language} $\Lpr$ is defined using the following grammar:
\begin{align*}
\phi&\coloneqq \Prob_A(\alpha) \mid \neg \phi \mid \phi {\to} \phi \mid \phi {\bullet} \phi \mid \phi {\to_\Pi} \phi \mid \onehalf\mid \knows_A \phi \mid \ActBox{a} \phi
\end{align*}
where $\alpha \in \Lev$, $A \in \Ag$ and $a \in \Ac$.
\end{definition}

\begin{definition}\label{def:preasemantics}
Given a probabilistic model $\Mfrak$, we define \emph{$\Mfrak$-interpretation} as a~function $\Imc_\Mfrak:\Lpr\times W_\Mfrak\to [0,1]$ s.t.:
\begin{align*}
\Imc_\Mfrak(\Prob_A(\alpha),w)&= \mu_{w,A} (\| \alpha \|_\Mfrak)\\
\Imc_\Mfrak(\neg\phi,w)&= 1 - \Imc_\Mfrak(\phi,w)\\
\Imc_\Mfrak(\phi{\rightarrow}\chi,w)&=\min(1,1-\Imc_\Mfrak(\phi,w)+\Imc_\Mfrak(\chi,w))\\
\Imc_\Mfrak(\phi\bullet\chi,w)&=\Imc_\Mfrak(\phi,w)\cdot\Imc_\Mfrak(\chi,w)\\
\Imc_\Mfrak(\phi{\rightarrow_\Pi}\chi,w)&=\begin{cases}1\text{ if }\Imc_\Mfrak(\phi,w)\leq\Imc_\Mfrak(\chi,w)\\\dfrac{\Imc_\Mfrak(\chi,w)}{\Imc_\Mfrak(\phi,w)}\text{ otherwise;}\end{cases}\\
\Imc_\Mfrak(\onehalf,w)&= \tfrac{1}{2}\\
\Imc_\Mfrak(\knows_A \phi,w)&= \inf \{ \Imc_\Mfrak(\phi,w') \mid w \Emsf_A w' \}\\
\Imc_\Mfrak(\ActBox{a}\phi,w)&= \inf \{ \Imc_\Mfrak(\phi,w') \mid w \Rmsf_a w' \}
\end{align*}
\end{definition}
\begin{definition}\label{def:satisfiability}
Let $\Gamma\cup\{\chi\}\subseteq\Lpr$ be finite. We say that
\begin{itemize}[noitemsep,topsep=0pt]
\item $\chi$ is \emph{$\prea$-satisfiable} (on a~frame~$\Ffrak$) iff there is a~probabilistic model~$\Mfrak$ (on~$\Ffrak$) and~$w\in\Mfrak$ s.t.\ $\Imc_\Mfrak(\chi,w)=1$;
\item $\chi$ is \emph{$\prea$-valid} (on~$\Ffrak$) iff for every probabilistic model~$\Mfrak$ (on~$\Ffrak$) and every $w\in\Mfrak$, it holds that $\Imc_\Mfrak(\chi,w)=1$;
\item $\Gamma$ \emph{entails}~$\chi$ iff $\Imc_\Mfrak(\chi,w)=1$ for every probabilistic model~$\Mfrak$ and $w\in\Mfrak$ s.t.\ $\Imc_\Mfrak(\phi,w)=1$ for all $\phi\in\Gamma$.
\end{itemize}
\end{definition}
\begin{convention}\label{conv:boxesdiamonds}
We will further use $\ActLozenge{a}\phi$ and $\knowsLozenge{A}\phi$ as shorthands for $\neg\ActBox{a}\neg\phi$ and $\neg\knows_A\neg\phi$, respectively. Modalities $\knows_A$ and $\ActBox{a}$ are called \emph{box-like}, and modalities $\knowsLozenge{A}$ and $\ActLozenge{a}$ \emph{diamond-like}. We will also use $\blacksquare$ to denote (sequences of) arbitrary box-like modalities: $\ActBox{a}\knows_B\ActBox{c}$, $\knows{A}\knows{B}\ActBox{c}$, etc. Similarly, $\blacklozenge$~stands for (sequences) of diamond-like modalities. In addition, we define
\begin{align*}
\one&\coloneqq\onehalf{\rightarrow}\onehalf&\phi{\oplus}\chi&\coloneqq\neg\phi{\rightarrow}\chi&\triangle\phi&\coloneqq\neg(\phi{\rightarrow_\Pi}\zero)\\
\zero&\coloneqq\neg\one&\phi{\odot}\chi&\coloneq\neg(\phi{\rightarrow}\neg\chi)
\end{align*}
The semantics of these connectives is derived from Definition~\ref{def:preasemantics}:
\begin{align*}
\Imc_\Mfrak(\knowsLozenge{A}\phi,w)&=\sup\{\Imc_\Mfrak(\phi,w')\mid w\Emsf_Aw'\}\\
\Imc_\Mfrak(\ActLozenge{a}\phi,w)&=\sup\{\Imc_\Mfrak(\phi,w')\mid w\Rmsf_aw'\}\\
\Imc_\Mfrak(\phi\oplus\chi,w)&=\min(1,\Imc_\Mfrak(\phi,w)+\Imc_\Mfrak(\chi,w))\\
\Imc_\Mfrak(\phi\odot\chi,w)&=\max(0,\Imc_\Mfrak(\phi,w)+\Imc_\Mfrak(\chi,w)-1)
\end{align*}
\begin{align*}
\begin{matrix}\Imc_\Mfrak(\one,w)=1\\\Imc_\Mfrak(\zero,w)=0\end{matrix}&&\Imc_\Mfrak(\triangle\phi,w)=
\begin{cases}
1&\text{if }\Imc_\Mfrak(\phi,w)=1\\0&\text{otherwise}
\end{cases}
\end{align*}
\end{convention}

Let us briefly explain the semantics of connectives in Definition~\ref{def:preasemantics} and Convention~\ref{conv:boxesdiamonds}. One can see that both implications, $\rightarrow$ and $\rightarrow_\Pi$ are order-preserving. $\oplus$~is the truncated sum on~$[0,1]$, and $\triangle$~helps form crisp statements. Moreover, $\neg$, $\odot$, $\oplus$, and $\rightarrow$ interact in the classical way: $\neg\phi\oplus\chi$ is equivalent to $\phi\rightarrow\chi$, and $\neg(\phi\oplus\chi)$ is equivalent to $\neg\phi\odot\neg\chi$. Notice also that $\blacksquare\Prob_A(\alpha)$ and $\blacklozenge\Prob_A(\alpha)$ express \emph{lower} and \emph{upper probability} of~$\alpha$ in accessible states.
\begin{definition}\label{def:lowerupperprobability}
Let $W\neq\varnothing$ and $\Fmc\subseteq2^W$ be given as in Definition~\ref{def:probabilityspace}. Let further, $\Mmc$ be a~set of probability measures on~$\Fmc$ and $X\in\Fmc$. We define the \emph{lower} and \emph{upper} probabilities of~$X$ ($\mu_*(X)$ and $\mu^*(X)$, respectively) as follows:
\begin{align*}
\mu_*(X){=}\inf\{\mu(X){\mid}\mu{\in}\Mmc\}&&
\mu^*(X){=}\sup\{\mu(X){\mid}\mu{\in}\Mmc\}
\end{align*}
\end{definition}

We will conclude the section with a few brief observations. First, for every rational number~$\tfrac{m}{n}$, one can define an $\Lpr$ formula~$\overline{\tfrac{m}{n}}$ s.t.\ $\Imc_\Mfrak(\overline{\tfrac{m}{n}},w)=\tfrac{m}{n}$ in all probabilistic models, using $\{\neg,\oplus,\bullet,\rightarrow_\Pi,\onehalf\}$ (cf.~\cite[Theorem~7]{EstevaEtAl2001}). Second, note that satisfiability and validity in $\prea$ are reducible to one another in an expected way. Namely, $\phi$ is satisfiable iff $\neg\triangle\phi$ is \emph{not valid}, and, conversely, $\phi$ is valid iff $\neg\triangle\phi$ is \emph{unsatisfiable}. Third, finitary entailment is reducible to validity: $\phi$ entails $\chi$ iff $\triangle\phi\rightarrow\chi$ is valid.
\section{Expressivity\label{sec:expressivity}}

The key feature of our language is its use of modal operators both in the event layer and in the probabilistic layer. Importantly, the modal operators are not limited to $\sfive$-style epistemic operators expressing `agent $A$ knows that\ldots', but they include general $\Kmbf$-style action operators expressing `\ldots holds after action $a$ is performed'. This combined language is rather expressive, as we show in this section.

We first point out that the event layer of our language can express various kinds of \emph{modal events}.
\begin{example}[Some modal events]\label{example:modalevents}~
\begin{itemize}[noitemsep,topsep=0pt]
    \item `After action $a$ is performed, $p$ will hold':\mbox{}\hfill $\ActBox{a}p$.
    \item `$A$ knows that $p$':\mbox{}\hfill $\knows_A p$.
    \item `After~$a$ is performed, $A$~will know that~$p$':\mbox{}\hfill $\ActBox{a}\knows_A p$.
    \item `$A$ knows that after~$a$ is performed, $p$ will hold':\mbox{}\hfill $\knows_A \ActBox{a}p$.
\end{itemize}
\end{example}

Our probabilistic operator $\Prob_A$ that expresses \emph{subjective probability} of agent~$A$ can then be used to represent probabilities of these events. We will treat objective probability as a~particular case of subjective probability. Formally, we will assume that $\mathtt{ob}\in\Ag$ and write $\Prob_\mathtt{ob}$ in our examples when dealing with objective probability.

Reasoning about probabilities of such modal events is crucial in various areas. In multi-agent systems, e.g., card games~\cite{BrownSandholm2019}, agents need to reason about what other agents know, and this reasoning is often based on probabilistic information. In cybersecurity, a security system needs to be able to reason probabilistically about what the attacker might know~\cite{BartheKopfOlmedoZanella-Beguelin2013}. Probabilistic robotics \cite{ThrunEtAl2005} is well-studied. In particular, in human-robot interaction scenarios, human controllers often need to take into account what robots with possibly faulty sensors are likely to know in a~given situation~\cite{AkkaladeviPlaschHofmannPichler2021}.

Using propositional connectives of the fuzzy logic $\Luk\Pi\onehalf$, we can express comparisons between probabilistic statements, including threshold comparisons.
\begin{itemize}[noitemsep,topsep=0pt]
    \item `According to~$A$, it's not much more likely that $p$ is going to happen if action $a$ is performed than if action $b$ is performed':
    \begin{align*}
    \Prob_A (\ActBox{a}p) \to \Prob_A(\ActBox{b}p)
    \end{align*}
    \item `Agent $A$ likely knows that $p$':
    \begin{align*}
    \onehalf \to \Prob_\mathtt{ob}(\knows_A p)
    \end{align*}
    \item `The probability that~$A$ knows that $p$ is at least~$0.7$':
    \begin{align*}
    \triangle(\overline{0.7}\to\Prob_\mathtt{ob}(\knows_A p))
    \end{align*}
    \item `According to agent $A$, it's twice as likely for $p$ to hold after $a$ is performed that it is for $p$ to fail':
    \begin{align*}
    \triangle (\onehalf \bullet \Prob_A(\ActBox{a}p) \leftrightarrow \Prob_A(\ActBox{a}\neg p))
    \end{align*}
\end{itemize}

The following examples illustrate the expressivity of our logic in the setting of automated bidding agents.
\begin{example}\label{example:bidding1}
Consider two competing automated bidding agents $A$ and $B$. Agent $A$ needs to decide whether to place a~high bid. The decision depends on $A$'s subjective probability of $B$'s knowing that the item they bid for is rare.

The subjective probability is represented by the atomic probabilistic formula $\Prob_A (\knows_B \mathtt{rare})$. The truth degree of the implicational formula $\overline{0.7} \to \Prob_A (\knows_B \mathtt{rare})$ is~$1$ if $A$'s subjective probability of $B$ knowing $\mathtt{rare}$ is no less that $0.7$ and is the difference between $A$'s subjective probability of $B$~knowing $\mathtt{rare}$ and $0.7$ otherwise (cf.\ Definition \ref{def:preasemantics}).

Using the delta operator, we can express the crisp statement that $A$'s subjective probability of $B$ knowing $\mathtt{rare}$ is at least~$0.7$. This is expressed by $\triangle (\overline{0.7} \to \Prob_A (\knows_B \mathtt{rare}))$. Compare this to the imprecise statement that $A$'s subjective probability of $B$ knowing $\mathtt{rare}$ is not \emph{much} less than $0.7$: $\overline{0.7}\to \Prob_A (\knows_B \mathtt{rare})$.
\end{example}

\begin{example}\label{example:bidding2}
Assume that there is an auction with two participants: $A$~and~$B$. One of the things $A$~can do is to place an \emph{aggressive bid} ($a$). Now $A$'s subjective probability that $B$ fill fold after $A$ places an aggressive bid is formalised by the formula $\Prob_A (\ActBox{a} (B\, \mathtt{folds}))$. Using rational constants, we can express bounds on subjective probabilities. E.g., the truth degree of formula $\overline{0.7} \to \Prob_A (\ActBox{a} (B\, \mathtt{folds}))$ is inversely proportional to the amount to which $0.7$ exceeds $A$'s subjective probability that $B$ fill fold after $A$ places an aggressive bid, and $ \triangle (\overline{0.7} \to \Prob_A (\ActBox{a} (B\, \mathtt{folds})))$ says that $A$'s subjective probability that $B$ fill fold after $A$ places an aggressive bid is not less than $0.7$.

Moreover, we can express comparisons between subjective probabilities of the outcomes of different actions. Let $m$ represent agent $A$ placing a~\emph{modest bid}; then $\Prob_A (\ActBox{m} (B \, \mathtt{folds})) \to \Prob_A (\ActBox{a} (B \, \mathtt{folds}))$ formalises the imprecise statement that $A$'s subjective probability that $B$~will fold after $A$~places an aggressive bid is not much less than $A$'s subjective probability that $B$~will fold after $A$~places a~modest bid. Note that the delta operator $\triangle$ can turn an imprecise comparison into a precise one~--- the formula $\triangle (\Prob_A (\ActBox{m} (B \, \mathtt{folds})) \to \Prob_A (\ActBox{a} (B \, \mathtt{folds})))$ says that $A$'s subjective probability that $B$ fill fold after $A$ places an aggressive bid is not less than $A$'s subjective probability that $B$ fill fold after $A$ places a modest bid.   
\end{example}

Modal operators in the probabilistic layer can, in turn, formalise upper and lower probabilities (recall Definition~\ref{def:lowerupperprobability}) expressing qualitative uncertainty about probability. This uncertainty can arise from the lack of information on an agent's side, or from the fact that outcomes of actions are not determined, for example. Upper and lower probabilities are also useful when specifying the effects of actions on probabilities of events.
\begin{itemize}[noitemsep,topsep=0pt]
    \item `Performing $b$ cannot substantially decrease the probability that performing $a$ will lead to $p$ being true':
    \begin{align*}
    \Prob_\mathtt{ob}(\ActBox{a}p) \to\ActBox{b} \Prob_\mathtt{ob}(\ActBox{a}p)
    \end{align*}
    \item `Performing $a$ cannot substantially increase the probability of agent $A$ knowing that $p$':
    \begin{align*}
    \ActBox{a}\Prob_\mathtt{ob} (\knows_A p) \to \Prob_\mathtt{ob}(\knows_A p)
    \end{align*}
    \item `The amount of qualitative uncertainty of agent $A$ concerning event $p$ is not much higher than $0.2$':
    \begin{align*}
    \overline{0.2} \to (\knows_A \Prob_\mathtt{ob}(p)\leftrightarrow \knowsLozenge{A}\Prob_\mathtt{ob}(p))
    \end{align*}
\end{itemize}

\begin{example}\label{example:bidding3}
    In the context of the previous bidding example, assume that $A$ made a subtle, revealing bid. Agent $A$ does not know what $B$'s subjective probability of `$A$ has a strong hand' ($A\,\mathtt{strong}$) is after the revealing bid, $A$ only has a~range of possibilities. The formula $\knows_A (\Prob_B (A\,\mathtt{strong}))$ expresses the infimum of the values of $\Prob_B (A\,\mathtt{strong})$ across these possibilities~--- $B$'s lower subjective probability of $A\,\mathtt{strong}$ across the possibilities left open by agent $A$.
    
    Using {\L}ukasiewicz implication, we can express comparisons between the lower probability and the `actual' probability, for instance. This is done by the formula $\Prob_B (A\,\mathtt{strong}) \to \knows_A (\Prob_B (A\,\mathtt{strong}))$ that expresses the truth degree of the imprecise statement that $B$'s subjective probability of $A\,\mathtt{strong}$ is not much lower than lower $B$'s probability of $A\,\mathtt{strong}$ (i.e.\ `what $A$ considers possible').
    
    We can explicitly express the fact that we are looking at probabilities \emph{after the revealing bid} using the action modality~$\ActBox{r}$. In particular, the formula $\ActBox{r} \Prob_B (A\,\mathtt{strong})$ expresses the infimum of $B$'s subjective probabilities of $A\,\mathtt{strong}$ in all possible outcomes of the action $r$. Similarly, the formula $\ActBox{r} \knows_A (\Prob_B (A\,\mathtt{strong}))$ expresses the infimum of $B$'s lower subjective probability of $A\,\mathtt{strong}$ across the possibilities left open by agent $A$ after $r$. As before, we can use {\L}ukasiewicz implication to express a comparison of these infima: $\ActBox{r} \Prob_B (A\,\mathtt{strong}) \to \ActBox{r}\knows_A (\Prob_B (A\,\mathtt{strong}))$.
\end{example}

We finish the section with the following observation. It is known from~\cite{MajerSedlar2025} that the logic $\KLukProdProb$ proposed there lacks the finite model property (FMP). In particular, $\phi=\triangle\ActLozenge{a}\Prob_A(p)\rightarrow\ActLozenge{a}\triangle\Prob_A(p)$ is valid in all finite frames but can be falsified in a~model~$\Mfrak$ that contains a~state~$w$ s.t.\ $\Rmsf_a(w)=\{w_i\mid i\in\Nmbb\}$; it suffices to set $\Imc_\Mfrak(\Prob_A(p),w_i)=\tfrac{i}{i+1}$. Dually, $\neg\triangle\phi$ is unsatisfiable on every finite frame, but can be satisfied on the same infinitely branching model that falsifies~$\phi$. As $\KLukProdProb$ is a~fragment of~$\prea$, it follows that the latter also lacks FMP.

In most practical applications, however, it is sufficient to consider finite models. Still, the technical results presented in the next section apply to a~slightly more general class of models: the \emph{finitely branching} ones. These models represent situations in which an action can have only a finite number of possible outcomes, or in which an agent considers only a finite number of possibilities (i.e.\ their information in any state is consistent with only a finite number of states).

A~similar restriction on models is often used in fuzzy description logics. Namely (cf.~work by H\'{a}jek~\shortcite{Hajek2007}, Bobillo et al.~\shortcite{BobilloDelgadoGomez-RamiroStraccia2009,BobilloDelgadoGomez-RamiroStraccia2012} and Borgwardt et al.~\shortcite{BorgwardtDistelPenaloza2014KR,BorgwardtDistelPenaloza2014DL,BorgwardtCeramiPenaloza2015,BorgwardtCeramiPenaloza2017}), one may consider \emph{witnessed} interpretations, i.e., those with the following property. If a~quantified assertion $[\forall\Rmsf.\Cmsf](a)$ has value~$x$, then there must be an individual~$b$ s.t.\ $\Rmsf(a,b)$ and $\Cmsf(b)$ has value~$x$.

In the remainder of the paper, we will mostly consider the satisfiability of $\Lpr$~formulas over \emph{finitely branching} frames. In what follows, we will use $\preafb$ to stand for the logic $\prea$ interpreted over finitely-branching frames. We will also say that $\phi\in\Lpr$ is \emph{$\preafb$-valid} (\emph{$\preafb$}-satisfiable) meaning that $\phi$~is $\prea$-valid on all ($\prea$-satisfiable on some) finitely branching frames.
\section{Decidability of~$\preafb$\label{sec:decidability}}
In this section, we will establish the decidability of $\preafb$-satisfiability (and hence, validity). To do that, we will proceed as follows. We begin with constructing a~terminating tableaux calculus for $\EALukProdfb$~--- the fragment of~$\preafb$ \emph{without event formulas}. Then we will use these tableaux to devise a~decision procedure for $\preafb$.

Let us first define $\Lmod$~--- the language of~$\EALukProdfb$.
\begin{definition}\label{def:Lmod}
Let $p \in \Var$, $A \in \Ag$, and $a \in \Ac$, we define $\phi\in\Lmod$ as follows:
\begin{align*}
\phi&\coloneqq p\mid\neg\phi\mid\phi\rightarrow\phi\mid\phi\bullet\phi\mid\phi\rightarrow_\Pi\phi\mid\onehalf\mid\knows_A\phi\mid\ActBox{a}\phi
\end{align*}
The semantics of~$\EALukProdfb$ can be obtained in the expected manner by restricting the interpretation functions from Definition~\ref{def:preasemantics} onto \emph{finitely branching event frames}.
\end{definition}
\begin{definition}\label{def:EALukProdsemantics}
An~$\EALukProdfb$-model is $\Mfrak=\langle W,\Emsf,\Rmsf,v\rangle$ with $\langle W,\Emsf,\Rmsf\rangle$ being a~finitely branching event frame and $v\colon\Var\rightarrow[0,1]$. The  \emph{$\EALukProdfb$-interpretation induced by~$\Mfrak$} is $\Imc_\Mfrak\colon W\times\Lmod\rightarrow[0,1]$ s.t.\ $\Imc_\Mfrak(p,w)=v(p,w)$, and the values of complex formulas are computed according to Definition~\ref{def:preasemantics}. The notions of $\EALukProdfb$-sa\-tis\-fi\-a\-bi\-li\-ty, validity, and entailment are also adapted from Definition~\ref{def:preasemantics}.
\end{definition}
\begin{remark}\label{rem:fewermodels}
We note that $\EALukProd$ and $\prea$ are connected in the following way. Given $\phi{\in}\Lpr$, one can think of it as an $\Lmod$-formula $\phi^\uparrow$ where each probabilistic atom $\Prob_A(\alpha)$ is replaced by a~fresh variable~$p_{A,\alpha}$. In this interpretation, $\phi$~is satisfied in a~probabilistic model~$\Mfrak=\langle W,\Fmbf,\Emsf,\Rmsf,\mu,V\rangle$ iff $\phi^\uparrow$ is satisfied in the \emph{$\EALukProd$-model} $\Mfrak^\uparrow=\langle W,\Emsf,\Rmsf,V^\uparrow\rangle$ on the same frame s.t.\ $V^\uparrow(p_{A,\alpha},w)=\Imc_\Mfrak(\Prob_A(\alpha))$.  
\end{remark}

\begin{convention}\label{conv:formulalength}
Given a~formula~$\phi$ and a~set of formulas~$\Gamma$, we use $\lmc(\phi)$ and~$\lmc[\Gamma]$ to denote their \emph{lengths}, i.e., the number of symbols in them.
\end{convention}

Let us now present a~tableaux calculus for~$\EALukProdfb$. We adapt the idea of ‘constraint tableaux’ by H\"{a}hnle~\shortcite{Haehnle1999} for the modal logic framework. We first define \emph{constraints} that encode the values of formulas in states and \emph{relational terms} that represent relations between different states on the branch of the tableau.
\begin{definition}[Constraints]\label{def:tableauxlanguage}
We fix a~countable set $\Wmsf=\{w,w',w_0,w_1,\ldots\}$ of \emph{state labels} and define:
\begin{itemize}[noitemsep,topsep=0pt]
\item a~\emph{labelled formula} as $w{:}\phi$ with $w\in\Wmsf$ and $\phi\in\Lmod$;
\item a~\emph{formulaic constraint} as $w{:}\phi\triangledown P$ s.t.\ $w{:}\phi$ is a~labelled formula, $\triangledown\in\{\leq,\geq\}$ and $P$~a~polynomial over~$[0,1]$ with integer coefficients;
\item a~\emph{numerical constraint} as $w{:}P\triangleleft P'$ with $w{\in}\Wmsf$, $\triangleleft{\in}\{\leq,<\}$, and~$P$ and~$P'$ polynomials.
\end{itemize}
\end{definition}
\begin{definition}[Relational terms]\label{def:relationalterms}
For sets $\Ac$ and $\Ag$ of \emph{actions} and \emph{agents}, we fix sets of \emph{action} and \emph{epistemic relational labels}~--- $\ActionSet=\{\Rmsf_a\mid a\in\Ac\}$ and $\ClusterSet=\{\cluster^A_i\mid A\in\Ag,i\in\Nmbb\}$ --- and define:
\begin{itemize}[noitemsep,topsep=0pt]
\item an \emph{action relational term} to be $w\Rmsf_aw'$ with $w,w'\in\Wmsf$ and $\Rmsf_a\in\ActionSet$;
\item an \emph{epistemic relational term} to be $w\in\cluster^A_i$ with $w\in\Wmsf$, $i\in\Nmbb$, and $\cluster^A_i\in\ClusterSet$.
\end{itemize}
\end{definition}

Note from the definition above that epistemic labels correspond to \emph{subsets} of the model. Namely, given a~model~$\Mfrak$ each~$\cluster^A_i$ is an equivalence class on~$\Mfrak$ under~$E_A$.

We are now ready to present the rules and formally define the notion of a~\emph{tableaux proof}. Our calculus $\TEALukProd$ expands~$\TKLukProd$ by Kozhemiachenko and Sedl\'{a}r~\shortcite{KozhemiachenkoSedlar2025} with rules for epistemic modalities.
\begin{figure*}[t]
\centering
\begin{align*}
\knows_\leq{:}\dfrac{\begin{matrix}w{:}\knows_A\phi\leq P\\w\in\cluster^A_i\end{matrix}}{\begin{matrix}w'{:}\phi\leq P;~w'\in\cluster^A_i\\w'\in\cluster^{A'}_j(A'\in\Ag)\end{matrix}}
&&
\knows_\geq{:}\dfrac{\begin{matrix}w{:}\knows_A\phi\geq P\\w\in\cluster^A_i~u\in\cluster^A_i\end{matrix}}{u{:}\phi\geq P}&&\Box_\leq{:}\dfrac{w{:}\ActBox{a}\phi\leq P}{\begin{matrix}w\Rmsf_aw';~w'{:}\phi\leq P\\w'\in\cluster^{A'}_j(A'\in\Ag)\end{matrix}}&&\Box_\geq{:}\dfrac{\begin{matrix}w{:}\ActBox{a}\phi\geq P\\w\Rmsf_au\end{matrix}}{u{:}\phi\geq P}
\end{align*}
\begin{align*}
\neg_\leq{:}\dfrac{w{:}\neg\phi\leq P}{w{:}\phi\geq1-P}&&\neg_\geq{:}\dfrac{w{:}\neg\phi\geq P}{w{:}\phi\leq1-P}&&
\rightarrow_\leq{:}\dfrac{w{:}\phi\rightarrow\chi\leq P}{P\geq1\left|\begin{matrix}w{:}\phi\geq1-P+y\\w{:}\chi\leq y\\w{:}y\leq P\end{matrix}\right.}&&\rightarrow_\geq{:}\dfrac{w{:}\phi\rightarrow\chi\geq P}{\begin{matrix}w{:}\phi\leq1-P+y\\w{:}\chi\geq y\end{matrix}}
\end{align*}
\begin{align*}
\bullet{:}\dfrac{w{:}\phi\bullet\chi\triangledown P}{\begin{matrix}w{:}\phi=y_1\\w{:}\chi=y_2\\w{:}y_1\cdot y_2\triangledown P\end{matrix}}
&&
{\rightarrow_\Pi}_\leq{:}\dfrac{w{:}\phi\rightarrow_\Pi\chi\leq P}{P\geq1\left|\begin{matrix}w{:}\phi=y_1\\w{:}\chi=y_2\\w{:}y_2/y_1\leq P\end{matrix}\right.}
&&
{\rightarrow_\Pi}_\geq{:}\dfrac{w{:}\phi\rightarrow_\Pi\chi\geq P}{\begin{matrix}w{:}\phi\geq y\\w{:}\chi\leq y\end{matrix}\left|\begin{matrix}w{:}\phi=y_1\\w{:}\chi=y_2\\w{:}y_2/y_1\geq P\end{matrix}\right.}
\end{align*}
\caption{Tableaux rules: $\triangledown\in\{\leq,\geq\}$; $y$, $y_1$, and $y_2$ are new variables; $w'$ is fresh on the branch, $u$ is present on the branch; $w{:}\psi=y$ is a~shorthand for $\{w{:}\psi\geq y,w{:}\psi\leq y\}$; for each state generated by $\knows_\leq$ and $\Box_\leq$, we add a~fresh relational term $w\in\cluster^{A'}_j$ for every $A'\in\Ag$ and $j\in\Nmbb$ being new on the branch.}
\label{fig:tableauxrules}
\end{figure*}
\begin{definition}[$\TEALukProd$ --- constraint tableaux for $\EALukProd_\fb$]\label{def:KLuktableaux}
A~\emph{constraint tableau} is a~downward branching tree of \emph{formulaic constraints}, \emph{numerical constraints}, and \emph{relational terms}. A~branch can be extended by an application of a~rule from Fig.~\ref{fig:tableauxrules}. Given a~tableau branch~$\Bmc$, we let $\Wmsf_\Bmc=\{w\mid w\in\Wmsf,\,w\text{ occurs on }\Bmc\}$ and define for every $w\in\Wmsf_\Bmc$ the sets of \emph{formulaic}~($\Fmc_w$) and \emph{numerical}~($\Nmc_w$) constraints as well as their corresponding systems of inequalities~$\Fmc^\Imbb_w$ and~$\Nmc^\Imbb_w$ (below, $x_{w{:}\phi}$ is a~variable over $[0,1]$):
\begin{align*}
\Fmc_w&=\{w{:}\phi\triangledown P\mid w{:}\phi\triangledown P\in\Bmc\}\\
\Nmc_w&=\{w{:}P\triangledown P'\mid w{:}P\triangledown P'{\in}\Bmc\}{\cup}\{w{:}\onehalf\triangledown P\mid w{:}\onehalf\triangledown P{\in}\Bmc\}\\
\Fmc^\Imbb_w&=\{x_{w:\phi}\triangledown P\mid w{:}\phi\triangledown P{\in}\Fmc_w\}\\
\Nmc^\Imbb_w&=\{P\triangledown P'\!\mid w{:}(P\triangledown P'){\in}\Nmc_w\}
\end{align*}
Furthermore, we set
\begin{align*}
\Fmc&=\bigcup_{w\in\Wmsf_\Bmc}\Fmc_w&\Nmc&=\bigcup_{w\in\Wmsf_\Bmc}\Nmc_w&\Bmc^\Imbb_w&=\Fmc^\Imbb_w\cup\Nmc^\Imbb_w\\
\Fmc^\Imbb&=\bigcup_{w\in\Wmsf_\Bmc}\Fmc^\Imbb_w&\Nmc^\Imbb&=\bigcup_{w\in\Wmsf_\Bmc}\Nmc^\Imbb_w&\Bmc^\Imbb&=\bigcup_{w\in\Wmsf_\Bmc}\Bmc^\Imbb_w
\end{align*}

A~tableau branch $\Bmc$ is called \emph{open} if its corresponding system of inequalities $\Bmc^\Imbb$ has a~solution over $[0,1]$ and is \emph{closed} otherwise. An open branch is \emph{complete} if for every premise of a~rule present on the branch, a~conclusion of the rule also occurs on the branch. A~tableau is closed when all its branches are closed. A~formula $\phi$ \emph{has a~$\TEALukProd$-proof} if there is a~closed tableau starting with $\{w{:}\phi{\leq}y,w{:}y{<}1\}\cup\{w\in\cluster^A_0\mid A\in\Ag\}$ with $y$~being a~variable.
\end{definition}

Observe briefly that for each generated state~$w'$ on the branch, we add terms $w'\in\cluster^A_i$ where $\cluster^A_i$ is a~\emph{fresh} epistemic relational label and $A\in\Ag$. This works as follows. E.g., we want to apply the $\Box_\leq$ rule to the constraint $w{:}\ActBox{a}\phi\leq P$ on~$\Bmc$. Say that $\Ag=\{A,B\}$, and we already have labels $\cluster^A_1$ and $\cluster^B_1$ on the branch. Then, we extend $\Bmc$ with $w\Rmsf_aw'$ and $w'{:}\phi\leq P$ with $w'$~being a~fresh state label. We also add relational terms $w'\in\cluster^A_2$ and $w'\in\cluster^B_2$. This ensures that the accessibility relations $\Emsf_A$ and $\Emsf_B$ that correspond to labels $\cluster^A_i$ and $\cluster^B_j$ are reflexive. The rule~$\knows_\leq$ works similarly. Note that $\Box_\leq$ and $\knows_\leq$ are sound only w.r.t.\ finitely-branching frames because in general, $\inf\{\Imc_\Mfrak(\phi,w')\mid w\Rmsf_aw'\}\leq x$ does not imply that there is some $w'\in\Rmsf_a(w)$ s.t.\ $\Imc_\Mfrak(\phi,w')\leq x$.

To prove the completeness of the tableaux, we will show that every \emph{complete open} branch~$\Bmc$ has a~model that \emph{realises} constraints on it. The following definition is an application of the standard notion to the semantics of~$\EALukProdfb$.
\begin{definition}[Realising model]\label{def:realisingmodel}
Let~$\Bmc$ be a~tableau branch, $\Bmc^\Imbb$~its corresponding system of inequalities, and $\Smc_{\Bmc^\Imbb}$ a~solution of~$\Bmc^\Imbb$. Given $x_{w{:}\phi}$ occurring in~$\Fmc^\Imbb$, we use $\Smc_{\Bmc^\Imbb}(x_{w{:}\phi})$ to denote its value under~$\Smc_{\Bmc^\Imbb}$. Let further, $\Mfrak=\langle W,\Emsf,\Rmsf,v\rangle$ be an $\EALukProd$-model. $\Mfrak$~\emph{realises~$\Bmc$ under $\Smc_{\Bmc^\Imbb}$} if there is a~map $\real:\Wmsf_\Bmc\rightarrow W$ s.t.\ $\Imc_\Mfrak(\phi,\real(w))=\Smc_{\Bmc^\Imbb}(x_{w{:}\phi})$. $\Mfrak$ is \emph{a~realising model of~$\Bmc$} if there is a~solution $\Smc_{\Bmc^\Imbb}$ s.t.\ $\Mfrak$~realises~$\Bmc$ under $\Smc_{\Bmc^\Imbb}$.
\end{definition}

The next statement can be proved in a~standard manner. For soundness, we show that if~$\Mfrak$ realises a~premise of a~rule, then $\Mfrak$~realises its conclusion. For completeness, we prove that \emph{complete open} branches have realising models.
\begin{restatable}{theorem}{tableauxcompleteness}\label{theorem:tableauxcompleteness}
$\TEALukProd$ is sound and complete w.r.t.~$\EALukProd_\fb$:
\begin{enumerate}[noitemsep,topsep=0pt]
\item if $\phi$ has a~$\TEALukProd$-proof, then it is $\EALukProd_\fb$-valid;
\item if $\phi$ is $\EALukProd_\fb$-valid, then it has a~$\TEALukProd$-proof.
\end{enumerate}
\end{restatable}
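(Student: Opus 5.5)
We prove soundness and completeness separately, in the standard way for constraint tableaux.

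\emph{Soundness.} We show that every rule preserves realisability. Suppose an $\EALukProdfb$-model $\Mfrak$ realises a branch $\Bmc$ under a solution $\Smc_{\Bmc^\Imbb}$ via $\real$, and a rule is applied. Then for at least one conclusion set there is an extended solution (assigning values to the new variables $y,y_1,y_2$) and an extended map $\real$ under which $\Mfrak$ still realises the branch. We also require that the epistemic labels $\cluster^A_i$ on the branch are mapped to $\Emsf_A$-equivalence classes, i.e.\ $w\in\cluster^A_i$ and $u\in\cluster^A_i$ imply $\real(w)\Emsf_A\real(u)$, and that $w\Rmsf_au$ implies $\real(w)\Rmsf_a\real(u)$.

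The propositional rules are checked by case analysis on the semantics. For $\rightarrow_\leq$, either $P\geq1$, or the value is below $1$; in the latter case take $y=\Imc_\Mfrak(\chi,\real(w))$. For ${\rightarrow_\Pi}_\geq$, split on whether $\Imc(\phi)\leq\Imc(\chi)$. The rules $\knows_\geq$ and $\Box_\geq$ are immediate from the infimum semantics and the relational invariant.

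For $\Box_\leq$ and $\knows_\leq$ we use finite branching: the infimum over a finite, nonempty successor set (nonempty because of reflexivity of $\Emsf_A$ in the $\knows$ case) is attained at some $u$. We set $\real(w')=u$ and map the fresh cluster labels to the $\Emsf_{A'}$-classes of $u$. The case of an empty $\Rmsf_a(w)$ needs care: there the infimum is $1$, so the premise forces $P\geq1$ in the solution. I would handle this by observing that the rule is then applied with $P\geq1$, and the conclusion must still be realisable. If this fails, one adds the standard side condition, or notes that the proof's starting constraint $y<1$ only propagates values $\leq P<1$ through $\leq$-constraints.

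Since the root $\{w{:}\phi\leq y,\,w{:}y<1\}$ is realised by any countermodel, every tableau for a non-valid $\phi$ has an open branch. Hence a closed tableau yields validity.

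\emph{Completeness.} Let $\phi$ be non-provable. A systematic fair tableau then has a complete open branch $\Bmc$. We fix a solution $\Smc$ of $\Bmc^\Imbb$ and build the model $\Mfrak$ as follows.
\begin{itemize}
\item $W=\Wmsf_\Bmc$.
\item $\Rmsf_a=\{\langle w,w'\rangle\mid w\Rmsf_aw'\in\Bmc\}$.
\item $\Emsf_A$ relates $w,u$ iff $w,u\in\cluster^A_i$ for some $i$.
\item $v(p,w)=\Smc(x_{w:p})$ if $x_{w:p}$ occurs, and an arbitrary value otherwise.
\end{itemize}

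For $\Emsf_A$ to be an equivalence relation, each state must lie in exactly one $A$-cluster. This holds because each state is created with one fresh label per agent, and $\knows_\leq$ places the new $w'$ into the existing $\cluster^A_i$ for $A$ and into fresh clusters for $A'\neq A$. (Reading the figure literally, $\knows_\leq$ also adds a fresh $A$-cluster for $w'$; I would read the rule as intending fresh labels only for $A'\neq A$.) Finite branching follows because only finitely many rule applications create successors of a given state on a branch built from a finite formula. Termination of the tableau also needs an argument, but completeness only needs a complete branch.

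We then prove by induction on $\psi$ that $\Imc_\Mfrak(\psi,w)$ satisfies every constraint $w{:}\psi\triangledown P$ on $\Bmc$ under $\Smc$. More precisely, we show $\Imc_\Mfrak(\psi,w)\geq\Smc(P)$ for lower constraints and $\leq$ for upper ones. Strengthening this to exact equality with $\Smc(x_{w:\psi})$ is needed for the realising-model definition; this is the standard trick, taking the value $x_{w:\psi}$ and noting that the constraints for $\psi$ pin down its subformulas.

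The propositional cases follow from completeness of the branch, since the decomposed constraints and the solution encode exactly the truth-functions. The modal $\geq$ cases use the fact that $\knows_\geq$ and $\Box_\geq$ were applied to every related state. The $\leq$ cases use the witness $w'$ introduced on the branch.

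The main obstacle is this induction for the modal and implication cases. There, the variable introduced for $\phi$ at $w$ must agree with the value computed in the model, and the $\geq$-constraints from all boxes must be simultaneously compatible with the $\leq$-witnesses. This is guaranteed precisely because all constraints were collected into the single system $\Bmc^\Imbb$, which has the solution $\Smc$. Finally, the root constraints give $\Imc_\Mfrak(\phi,w)\leq\Smc(y)<1$, so $\phi$ is not valid.
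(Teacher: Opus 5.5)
Your route is the paper's: soundness by showing that each rule preserves realisability, and completeness by reading a model off a complete open branch and checking the constraints by induction. Your extra invariant, that $\real$ respects $w\Rmsf_au$ and shared clusters, is in fact needed for $\Box_\geq$ and $\knows_\geq$.

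\textbf{Main gap: $\Box_\leq$ at a dead end.} The case you flagged but did not close is $\Box_\leq$ when $\Rmsf_a(\real(w))=\varnothing$. Neither escape route you offer works.
\begin{enumerate}
\item Your claim that $\leq$-constraints only carry bounds below $1$ is false. $\neg_\geq$ produces $w{:}\psi\leq 1-P$ and $\rightarrow_\geq$ produces $w{:}\psi\leq 1-P+y$, and both bounds can be $\geq 1$.
\item Your other claim, that the conclusion is still realisable, also fails, because the rule as printed is unsound.
\end{enumerate}
For a counterexample, take $\theta=(\ActBox{a}r\rightarrow\ActBox{a}r)\rightarrow\neg(\ActBox{a}p\odot\ActBox{a}\neg p)$. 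It has value $0$ at any point without $a$-successors, so it is not $\EALukProdfb$-valid. Now build the tableau from $w{:}\theta\leq y$, $w{:}y<1$.
\begin{itemize}
\item Every branch except one closes via $y\geq 1$ or $z\geq 1$.
\item The surviving branch contains $w{:}\ActBox{a}p\geq 1-z+u$ and $w{:}\ActBox{a}\neg p\geq 1-u$, with $z\leq y$.
\item On that branch, $\rightarrow_\geq$ applied to $w{:}\ActBox{a}r\rightarrow\ActBox{a}r\geq 1-y+z$ gives $w{:}\ActBox{a}r\leq y-z+v$.
\item $\Box_\leq$ then creates a successor $w'$. $\Box_\geq$ and $\neg_\geq$ give $1-z+u\leq x_{w'{:}p}\leq u$, i.e.\ $z\geq 1$, so this branch closes too.
\end{itemize}
So item~1 holds only after $\Box_\leq$ is given an alternative branch $P\geq 1$, exactly as in $\rightarrow_\leq$. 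With that change, soundness splits on whether $\Rmsf_a(\real(w))$ is empty. Completeness is unaffected, since on the $P\geq 1$ branch the constraint holds trivially. $\knows_\leq$ needs no such branch because $\Emsf_A$ is reflexive. Your ``side condition'' option is therefore the right fix, but it changes the calculus, so you must commit to it. The paper's sketch has the same hole: it asserts that a witness in $\Rmsf_a(w)$ must exist.

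\textbf{Second gap: finite branching.} Your justification that only finitely many rule applications create successors of a state is false. Inside one $A$-cluster the rules can add members forever.
\begin{itemize}
\item Suppose $w{:}\knows_A(\knows_Ap\rightarrow q)\geq P$ is on the branch, as in the tableau for $\neg\knows_A(\knows_Ap\rightarrow q)$.
\item $\knows_\geq$ gives every member $u$ the constraint $u{:}\knows_Ap\rightarrow q\geq P$.
\item $\rightarrow_\geq$ turns this into $u{:}\knows_Ap\leq 1-P+y_u$ with a fresh $y_u$.
\item No existing member witnesses that constraint, so $\knows_\leq$ adds a new member, which receives the same constraint, and so on.
\end{itemize}
A fair systematic tableau then has an infinite open branch with an infinite cluster, so the model you read off is not finitely branching. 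The paper's sketch is silent on this too. You need a loop check. For example, treat $\knows_A$-formulas at cluster level: their value is constant on a cluster, so one shared value variable and one witness per cluster and formula suffice.

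\textbf{Minor point.} Equality $\Imc(\psi,w)=\Smc(x_{w{:}\psi})$ with your original solution fails in general. A box carrying only $\geq$-constraints at a state without successors has value $1$. You do not need equality: the inequality version already gives $\Imc(\phi,w)\leq\Smc(y)<1$. If you want a realising model anyway, take the model's own values as a new solution.
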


Now, let us show that $\preafb$-satisfiability is decidable. For this, we will define an alternative semantics of~$\preafb$ based on \emph{sample-independent models} (SI models) proposed by Kozhemiachenko and Sedlar~\shortcite{KozhemiachenkoSedlar2025}.
\begin{definition}[Sample-independent semantics]\label{def:SImodelsS5events}~
\begin{itemize}[noitemsep,topsep=0pt]
\item An \emph{SI probabilistic frame over $\Xfrak$} (SI-frame) is a~tuple $\Ffrak=\langle W,\Emsf,\Rmsf,\Xfrak,\mu\rangle$ with $\langle W,\Emsf,\Rmsf\rangle$ and $\Xfrak=\langle W_\Xfrak,\Emsf_\Xfrak,\Rmsf_\Xfrak\rangle$ being event frames, and $\mu=\langle\mu_{w,A}\rangle_{w\in W,A\in\Ag}$ being a~tuple of finitely additive probability measures on~$2^{W_\Xfrak}$.
\item An \emph{SI-$\prea$-model} is a~tuple $\Mfrak=\langle W,\Emsf,\Rmsf,\Xfrak,\mu,V\rangle$ with $\langle W,\Emsf,\Rmsf,\Xfrak,\mu\rangle$ being an~SI-frame and $V:\Var\rightarrow2^{W_\Xfrak}$.
\end{itemize}
Given an SI-$\prea$-model $\Mfrak=\langle W,\Emsf,\Rmsf,\Xfrak,\mu,V\rangle$, we call $\Xfrak$ the \emph{inner frame} and $\langle W,\Emsf,\Rmsf\rangle$ the \emph{outer frame}. States of inner and outer frames are called inner and outer states, respectively. The notions of probabilistic interpretations induced by~$\Mfrak$, validity, and satisfiability are adapted from Definition~\ref{def:preasemantics}.
\end{definition}
From the definition above, it is clear that every ‘standard’ probabilistic model (as in Definition~\ref{def:probabilitsticmodel}) can be treated as a~sample-independent model. Thus, if $\phi\in\Lpr$ is valid on all SI-models, it is valid on all ‘standard’ probabilistic models. For the converse direction, we can take an SI-model $\Mfrak=\langle W,\Emsf,\Rmsf,\Xfrak,\mu,V\rangle$ with $\Xfrak=\langle W_\Xfrak,\Emsf_\Xfrak,\Rmsf_\Xfrak,V\rangle$ s.t.\ $\Imc_\Mfrak(\phi,w){<}1$ for some $w{\in}W$. We then define the pro\-ba\-bi\-lis\-tic model $\Mfrak'=\langle W',\Fmbf',\Emsf',\Rmsf',\mu',V'\rangle$:
\begin{itemize}[noitemsep,topsep=0pt]
\item $W'=W\uplus W_\Xfrak$;
\item $\Fmc_{w,A}'=2^{W'}$ for all $w\in W'$ and $A\in\Ag$;
\item $\Emsf'_A=\Emsf_A\uplus \Emsf_{A_\Xfrak}$ for every $A\in\Ag$;
\item $\Rmsf'_a=\Rmsf_a\uplus\Rmsf_{a_\Xfrak}$ for every $a\in\Ac$;
\item $\mu'_{w,A}(\{s\}){=}\mu_{w,A}(\{s\})$ if $s{\in}W_\Xfrak$, $\mu'_{w,A}(\{s\}){=}0$ else;
\item $V'=V$
\end{itemize}
Now, we can show by induction on $\phi\in\Lpr$ that $\Imc_{\Mfrak'}(\phi,w)=\Imc_\Mfrak(\phi,w)$ for every $w\in W$. This gives us the following statement.
\begin{restatable}{proposition}{SIequivalence}\label{prop:SIequivalence}
A~formula $\phi\in\Lpr$ is $\prea$-valid iff it is valid on SI-frames.
\end{restatable}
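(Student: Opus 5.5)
We prove the two directions separately, following the sketch given just before the statement.

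\emph{From SI-validity to $\prea$-validity.} Every probabilistic model $\Mfrak=\langle W,\Fmbf,\Emsf,\Rmsf,\mu,V\rangle$ is treated as an SI-model whose inner frame is its own event frame $\langle W,\Emsf,\Rmsf\rangle$. One subtlety is that in an SI-frame the measures $\mu_{w,A}$ must be defined on all of $2^{W_\Xfrak}$, whereas in $\Mfrak$ they are only defined on $\Fmc_{w,A}$. We handle this by extending each finitely additive $\mu_{w,A}$ from the algebra $\Fmc_{w,A}$ to a finitely additive probability on $2^W$. This extension exists by the standard Łoś--Marczewski (Hahn--Banach type) extension theorem for finitely additive measures. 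Since $\Fmc_{w,A}$ contains $V(p)$ and is closed under Boolean operations, $\Kmc_A$, and $\mathcal{R}_a$, every event $\|\alpha\|_\Mfrak$ lies in $\Fmc_{w,A}$, so the extension agrees with $\mu_{w,A}$ on all probabilistic atoms. A straightforward induction on $\phi$ then shows that the interpretations coincide. Hence a formula valid on all SI-frames is valid on all probabilistic models.

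\emph{From $\prea$-validity to SI-validity.} We argue contrapositively. Take an SI-model $\Mfrak$ and a state $w$ with $\Imc_\Mfrak(\phi,w)<1$, and build $\Mfrak'$ exactly as described above, as the disjoint union of outer and inner frames with full powerset algebras. We first check that $\Mfrak'$ is a probabilistic model. Each $\Emsf'_A$ is an equivalence relation, being a disjoint union of equivalence relations. Each $\Fmc'_{w,A}=2^{W'}$ trivially satisfies the closure conditions. The measure $\mu'_{w,A}$ should be defined by $\mu'_{w,A}(X)=\mu_{w,A}(X\cap W_\Xfrak)$ rather than pointwise on singletons, since $\mu_{w,A}$ is only finitely additive; this is clearly a finitely additive probability on $2^{W'}$. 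For outer states $u\in W'\setminus W$ we may choose $\mu'_{u,A}$ arbitrarily.

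The key lemma is that for every event formula $\alpha$ we have $\|\alpha\|_{\Mfrak'}\cap W_\Xfrak=\|\alpha\|_\Xfrak$. We prove this by induction on $\alpha$, using the fact that $W_\Xfrak$ is closed under $\Emsf'_A$ and $\Rmsf'_a$ because the union is disjoint, so the modal clauses only consult inner successors. It follows that $\mu'_{w,A}(\|\alpha\|_{\Mfrak'})=\mu_{w,A}(\|\alpha\|_\Xfrak)$ for $w\in W$. A second induction, on $\phi\in\Lpr$, shows that $\Imc_{\Mfrak'}(\phi,w)=\Imc_\Mfrak(\phi,w)$ for all $w\in W$. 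The base case is the identity just established. The propositional cases are immediate. For the modal cases we use that the successors of $w\in W$ under $\Emsf'_A$ and $\Rmsf'_a$ are exactly its successors in the outer frame, so the infima coincide. Therefore $\phi$ fails in $\Mfrak'$, and it is not $\prea$-valid.

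The main obstacle is the measure-theoretic care. One half is the extension step in the first direction, which needs the existence of finitely additive extensions rather than mere reinterpretation. The other half is defining $\mu'$ via intersection in the second direction, because singleton values need not determine a finitely additive measure. The remaining steps are routine inductions.
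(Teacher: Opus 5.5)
Your proof is correct and follows the paper's route. For one direction you read standard models as SI-models, and for the other you build the disjoint-union model $\Mfrak'$ and show by induction that it agrees with the SI-model on outer states. Your two extra measure-theoretic steps repair points the paper's proof glosses over: extending each $\mu_{w,A}$ from $\Fmc_{w,A}$ to the full powerset via the \L{}o\'{s}--Marczewski theorem, and setting $\mu'_{w,A}(X)=\mu_{w,A}(X\cap W_\Xfrak)$ instead of prescribing values only on singletons. The one slip is that you call the states of $W'\setminus W=W_\Xfrak$ ``outer'' when they are the inner ones; this is harmless, since their measures can indeed be chosen arbitrarily.
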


The idea of the sample-independent semantics is to decouple the event model from the states over which probabilistic formulas are evaluated. Relying on the equivalence between sample-independent semantics and standard semantics, we can prove the $\pspace$-completeness of the satisfiability of $\Lpr$~formulas over finitely branching frames.
\begin{restatable}{theorem}{generalcasedecidability}\label{theorem:generalcasedecidability}
Given $\phi\in\Lpr$, it is $\pspace$-complete to decide whether it is satisfiable over finitely branching frames.
\end{restatable}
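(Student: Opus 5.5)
Hardness and membership are handled separately; membership takes most of the work.

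\emph{Hardness.} $\pspace$-hardness is inherited from the event layer. For a classical formula $\alpha\in\Lev$ built with a single action modality $\ActBox{a}$ (a $\mathbf{K}$-formula), $\alpha$ is $\mathbf{K}$-satisfiable iff $\triangle\neg\Prob_{\mathtt{ob}}(\sneg\alpha)$ is $\preafb$-satisfiable. For the left-to-right direction, take a finite Kripke model of $\alpha$, let $\mu_{w,\mathtt{ob}}$ be the Dirac measure at a world satisfying $\alpha$, and use all subsets as measurable sets. For the converse, any measure with $\mu(\|\sneg\alpha\|)=0$ forces $\|\alpha\|\neq\varnothing$. $\mathbf{K}$-satisfiability is $\pspace$-hard (Ladner), and the reduction is clearly polynomial.

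\emph{Membership.} By Proposition~\ref{prop:SIequivalence}, it suffices to decide satisfiability over SI-models. Its proof preserves finite branching, since the disjoint union of two finitely branching frames is finitely branching. In an SI-model, the outer layer is an $\EALukProdfb$-model over the fresh variables $p_{A,\alpha}$ (Remark~\ref{rem:fewermodels}). The inner layer only needs to supply, for each pair $(w,A)$, a probability distribution over $2^{W_\Xfrak}$, and every inner frame may be used.

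Let $\alpha_1,\dots,\alpha_n$ be the event formulas occurring in~$\phi$. A measure on the inner model is then determined by the weights it gives to the Boolean atoms over $\{\alpha_i\}$ (the conjunctions $\bigwedge\pm\alpha_i$), restricted to those atoms that are realised in the inner model. Taking the inner model to be a disjoint union of witnesses, we may realise any set of $\mathbf{K}/\mathbf{S5}$-satisfiable atoms simultaneously. So an assignment of values $x_{w,A,i}=\Imc(\Prob_A(\alpha_i),w)$ is realisable iff there is a probability distribution over the satisfiable atoms whose marginals give the~$x_{w,A,i}$. By Carathéodory, such a distribution needs support of size at most $n+1$.

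The procedure combines this with the tableau of Theorem~\ref{theorem:tableauxcompleteness}. Run $\TEALukProd$ on $\phi^\uparrow$ (starting from $w{:}\phi^\uparrow\geq1$), exploring it depth-first in the usual Ladner style, so that only one branch of modal depth $O(\lmc(\phi))$, with polynomially many constraints per state, is kept at a time. The $\knows_\geq$ rules act on clusters, so the relevant states for them are those sharing a cluster label; these also stay polynomial along a path.

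For each state $w$ and agent $A$ on the current path, guess $n+1$ atoms, check each for modal satisfiability in $\pspace$ (using the multimodal $\mathbf{K}$+$\mathbf{S5}$ logic, which is $\pspace$), and add linear constraints expressing $x_{w{:}p_{A,\alpha_i}}$ as a convex combination of the chosen atoms. The resulting polynomial system over the reals is of polynomial size. Its solvability is decidable in $\pspace$ by the existential theory of the reals (Canny), and nondeterminism is removed by Savitch.

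\emph{Main obstacle.} The hardest part is making the open-branch check local. Branch openness concerns the whole system $\Bmc^\Imbb$, but in depth-first search we only hold one path. I would show that the constraint sets of distinct successors interact only through the numerical variables in their parent's constraints, namely the polynomial $P$ passed down by $\Box_\leq$ and $\knows_\leq$. I would then guess the values of these variables as rational intervals or sign conditions, or quantify existentially over them within one existential-theory-of-the-reals call per path segment. That way each subtree can be checked independently, in the spirit of the $\pspace$ procedure for $\TKLukProd$ of Kozhemiachenko and Sedl\'{a}r~\shortcite{KozhemiachenkoSedlar2025}, which I would adapt while adding the epistemic cluster bookkeeping.
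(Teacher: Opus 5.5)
\paragraph{Where the proposal stands.} Your hardness reduction is essentially the paper's, which uses $\Prob_A(\alpha)$ directly. Your membership skeleton also matches the paper: SI-models, a support bound of $|\atom_A(w)|+1$ consistent sets per $\mu_{w,A}$ (the paper cites Chv\'atal's Theorem~9.3 where you cite Carath\'eodory), polynomial-space consistency checks, a depth-first tableau for $\phi^\uparrow$, and polynomial-space solvability of the local systems $\Smsf_w$. The genuine gap is the step you single out yourself. You never show how to certify solvability of the \emph{global} system $\Bmc^\Imbb$ while storing only polynomially much, and neither remedy you sketch works.

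\paragraph{Why a per-path check fails.} Take $\Phi=(\ActBox{a}p\rightarrow\neg\ActBox{b}q)\odot\ActBox{a}(\overline{\tfrac{3}{4}}\rightarrow p)\odot\ActBox{b}(\overline{\tfrac{3}{4}}\rightarrow q)$. It is unsatisfiable, since it forces $\ActBox{a}p\geq\tfrac{3}{4}$, $\ActBox{b}q\geq\tfrac{3}{4}$ and $\ActBox{a}p+\ActBox{b}q\leq 1$. Yet $\rightarrow_\geq$ on the first conjunct introduces a single variable $y$ with $w_0{:}\ActBox{a}p\leq y$ and $w_0{:}\ActBox{b}q\leq 1-y$. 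The $a$-successor then essentially carries only $p\leq y$ and $p\geq\tfrac{3}{4}$. The $b$-successor carries only $q\leq 1-y$ and $q\geq\tfrac{3}{4}$. Every root-to-leaf path, even together with all its ancestor constraints, is solvable; only the two sibling subtrees jointly are not. So one existential-theory-of-the-reals call per path (or path segment) is unsound, because each path gets to choose its own $y$.

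\paragraph{Why guessing the shared values fails.} The shared values can be irrational: requiring $\ActBox{a}p\bullet\ActBox{a}p$ to equal $\tfrac{1}{2}$ hands $1/\sqrt{2}$ to every $a$-successor. More generally, the common point is pinned down by constraints lying deep inside several sibling subtrees. Guessing an interval or a sign cell decouples the children only if each child's projected feasible set is a union of such cells. Obtaining those cells means eliminating the variables of entire subtrees, and there is no polynomial bound on that.

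\paragraph{Epistemic successors.} For $\knows_\leq$-successors the interface is wider than the polynomial passed down. A new member of $\cluster^A_i$ carrying $\knows_A\psi\geq P$ sends $\psi\geq P$, via $\knows_\geq$, back to its parent and to all its cluster siblings. So your claim that successors interact only through the parent's numerical variables already fails there.

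\paragraph{The paper does not close this gap.} The paper checks each $\Smsf_w$ in isolation, then deletes the state and marks the generating constraint `safe'. That is exactly the per-state decoupling that $\Phi$ defeats, so this step needs a genuinely new argument rather than one borrowed from the paper.
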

\begin{proof}[Proof sketch]
We begin with $\pspace$-hardness. For that, we consider a~polynomial reduction from the satisfiability of $\Lev$-formulas to $\preafb$-satisfiability. Namely, $\alpha\in\Lev$ is satisfiable iff $\Prob_A(\alpha)$ is $\preafb$-satisfiable.

Let us now consider $\pspace$-membership. Our proof combines the ideas of~\cite{FaginHalpern1994} and~\cite{HajekTulipani2001}. First of all, we note that all tableaux terminate: all rules have the branching factor of at most~$2$ and decompose formulas into their subformulas.

Now, we proceed as follows. Using a~standard tableaux procedure akin to the one by Blackburn et al.~\shortcite{BlackburndeRijkeVenema2010}, we begin to construct the frame $\langle W,\Emsf,\Rmsf\rangle$ by building a~tableau for $\{w_0{:}\phi{\geq}1\}$. Note from Remark~\ref{rem:fewermodels} that the tableaux rules are \emph{sound for $\preafb$}. We apply the rules depth-first until we generate a~state~$w$ s.t.\ no labelled formula $w{:}\psi$ contains an outer-layer modality. Then, we decompose formulas labelled with~$w$ and construct a~system of polynomial inequalities $\Bmc^\Imbb_w$ (cf.~Definition~\ref{def:KLuktableaux}).

Now, for each $A\in\Ag$ on~$\Bmc$, we construct two sets
\begin{align*}
\atom_A(w)=&\{\alpha\mid w{:}\Prob_A(\alpha)\triangledown P\in\Bmc\}\\
\subformulas^\sim_A(w)=&\{\beta\mid\exists\alpha\in\atom_A(w):\beta\text{ is a subformula of }\alpha\}\cup\\
&\{\neg\beta\mid\exists\alpha\in\atom_A(w):\beta\text{ is a subformula of }\alpha\}
\end{align*}
We then consider the following event model:
\begin{align*}
\Xfrak_{w,A}&=\langle W^\subformulas,\Emsf^\subformulas,\Rmsf^\subformulas,V^\subformulas\rangle
\end{align*}
The states of~$\Xfrak_{w,A}$ correspond to the maximal consistent subsets of $\subformulas^\sim_A(w)$, the relations are defined as follows:
\begin{align*}
\Rmsf_a^\subformulas(\Xi,\Xi')&\text{ iff }\forall\beta\in\subformulas^\sim_A(w):\ActBox{a}\beta\in \Xi\Rightarrow\beta\in \Xi'\\
\Emsf_A^\subformulas(\Xi,\Xi')&\text{ iff }\forall\beta\in\subformulas^\sim_A(w):\knows_A\beta\in \Xi\Leftrightarrow\knows_A\beta\in \Xi'
\end{align*}
and $V^\subformulas(p)=\{\Xi\mid p\in \Xi\}$ for all $p\in\Var(\phi)$. $\Xfrak_{w,A}$~has exponential size, but for any $\Xi\subseteq\subformulas^\sim_A(w)$, one can check whether $\Xi\in W^\subformulas$ using polynomial space since event formulas contain only $\sfive$ and $\Kmbf$ modalities. Moreover, one can show as in~\cite[Theorem~4.1]{FaginHalpern1994} that for every $\beta\in\subformulas^\sim_A(w)$ and $\Xi\in W^\subformulas$, we have $\Xfrak_{w,A},\Xi\vDash\beta$ iff $\beta\in \Xi$.

We now need to ensure that the values of~$\Prob_A(\alpha)$'s at~$w$ are coherent with a~measure on~$\Xfrak_{w,A}$, i.e., that there is some measure $\mu_{w,A}$ s.t.\ $\mu_{w,A}(\|\alpha\|_{\Xfrak_{w,A}})=\Prob_A(\alpha)$ for all $\alpha\in\atom_A(w)$. We add fresh variables $y^A_{\alpha}$ for every $\alpha\in\atom_A(w)$ and $u^A_\Xi$ for every $\Xi\subseteq\subformulas^\sim_A(w)$. Here, $y$'s are auxiliary variables that will stand for the values of $\Prob_A(\alpha)$'s at~$w$, and $u$'s are the values of $\mu_{w,A}(\{\Xi\})$'s (i.e., we encode the \emph{probability distribution} induced by~$\mu_{w,A}$). Furthermore, we set $a_{\Xi,\alpha}=1$ if $\alpha\in \Xi$ and $a_{\Xi,\alpha}=0$, otherwise. Now, we consider the following system of linear equations:
\begin{align}\label{equ:biglineq}
\left\{\sum_{\Xi\,\subseteq\,\subformulas^\sim_A(w)}u^A_\Xi=1\right\}&\cup\tag{$\mathsf{big}^A_w$}\\
\left\{\sum_{\Xi\,\subseteq\,\subformulas^\sim_A(w)}(a_{\Xi,\alpha}\cdot u^A_\Xi)=y^A_\alpha~\middle|~\alpha\in\atom_A(w)\right\}\nonumber
\end{align}
In this system, the top equation ensures that the measures of all states in~$\Xfrak_{w,A}$ add up to~$1$, and the bottom equations guarantee that $\Prob_A(\alpha)$'s are obtained as the sums of measures of states where $\alpha$'s are true. Moreover, we note that it takes linear time w.r.t.\ $\lmc(\phi)$ to determine the value of a~given~$a$ and that \eqref{equ:biglineq}~contains $|\atom_A(w)|+1$ equations. Thus, applying~\cite[Theorem~9.3]{Chvatal1983}, we can guess a~list $\Lmsf$ containing $|\atom_A(w)|+1$ variables $u^A_\Xi$ s.t.\ $\Xi$'s~are maximal consistent sets and for every $\alpha\in\atom_A(w)$, there is some~$\Xi$ s.t.~$\alpha\in \Xi$. Then we set $u^A_\Xi=0$ for every $u^A_\Xi\notin\Lmsf$. This gives us the following system of equations whose size is \emph{polynomial} w.r.t.\ $\atom_A(w)$:
\begin{align}\label{equ:smalllineq}
\left\{\sum_{u^A_\Xi\,\in\,\Lmsf}u^A_\Xi=1\right\}&\cup\tag{$\mathsf{small}^A_w$}\\
\left\{\sum_{u^A_\Xi\,\in\,\Lmsf}(a_{\Xi,\alpha}\cdot u^A_\Xi)=y^A_\alpha~\middle|~\alpha\in\atom_A(w)\right\}\nonumber
\end{align}
We repeat this process, obtaining \eqref{equ:smalllineq}'s for every $A\in\Ag$. Then, we consider the following system of polynomial inequalities:
\begin{align*}
\Smsf_w&=\Bmc^\Imbb_w\cup\bigcup_{A\in\Ag}\eqref{equ:smalllineq}\cup\bigcup_{\myoverset{\alpha\in\atom_A(w)}{A\in\Ag}}\left\{y^A_\alpha=x_{w{:}\Prob_A(\alpha)}\right\}
\end{align*}
Note that the size of~$\Smsf_w$ is polynomial in~$\lmc(\phi)$. Thus, it takes polynomial space to determine whether $\Smsf_w$~has solutions. If it \emph{does not}, then we close~$\Bmc$, and pick the next branch. If \emph{it does}, we mark the constraint that generated~$w$ (say, $w':\ActBox{a}\psi\leq P$) as ‘safe’, delete $w'\Rmsf_aw$, $\Smsf_w$, and all formulas labelled with~$w$ from~$\Bmc$. We then pick another formulaic constraint of the form $w':\blacksquare\psi\leq P'$ and repeat the process. Once all constraints of the form $w':\blacksquare\psi\leq P'$ are ‘safe’, we mark~$w'$ as ‘safe’. We proceed like this until $w_0$~is marked ‘safe’ (in which case, $\phi$~is $\preafb$-satisfiable) or all branches of the tableau are closed (i.e., $\phi$~is unsatisfiable).
\end{proof}

We finish the section with a~brief remark.
\begin{remark}\label{rem:WMP}
Recall from~\cite[Lemma~3]{Hajek2005} and~\cite[Lem\-ma~4.6]{Vidal2022} that modal Łukasiewicz logic has the~\emph{witnessed model property}. A~witnessed model is a~model $\Mfrak$ where $\Imc_\Mfrak(\ActBox{a}\phi,w)=x$ (or $\Imc_\Mfrak(\knows_A\phi',w')=x'$, respectively) implies the existence of $s\in \Rmsf_a(w)$ ($s'\in \Emsf_A(w)$) s.t.\ $\Imc_\Mfrak(\phi,s)=x$ ($\Imc_\Mfrak(\phi',s')=x'$). In particular, this, together with Remark~\ref{rem:fewermodels}, means that the tableaux rules (Fig.~\ref{fig:tableauxrules}) are sound w.r.t.\ $\{\bullet,\rightarrow_\Pi\}$-free $\Lpr$~formulas. Moreover, it follows that a~$\{\bullet,\rightarrow_\Pi\}$-free $\phi\in\Lpr$ is $\preafb$-satisfiable iff it is $\prea$-satisfiable. Thus, witnessed model property \emph{implies finite model property}.
\end{remark}

Thus, the next statement follows.
\begin{restatable}{theorem}{additivepreapspace}\label{theorem:additivepreapspace}
Given a~$\{\bullet,\rightarrow_\Pi\}$-free $\phi\in\Lpr$, it is $\pspace$-complete to determine whether it is $\prea$-satisfiable.
\end{restatable}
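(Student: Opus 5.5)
The plan is to reduce the statement to Theorem~\ref{theorem:generalcasedecidability}, using Remark~\ref{rem:WMP} to pass from arbitrary frames to finitely branching ones. Hardness and membership are handled separately.

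For $\pspace$-hardness, I reuse the reduction from the proof sketch of Theorem~\ref{theorem:generalcasedecidability}. An event formula $\alpha\in\Lev$ is satisfiable iff $\Prob_A(\alpha)$ is $\prea$-satisfiable. The formula $\Prob_A(\alpha)$ contains neither $\bullet$ nor $\rightarrow_\Pi$, so it falls in the fragment. Both directions should be straightforward:
\begin{itemize}[noitemsep,topsep=0pt]
\item If $\alpha$ holds at some state $s$ of an event model, put a Dirac measure on $s$ (with $\Fmc=2^W$) at every state; then $\Prob_A(\alpha)$ has value $1$.
\item If $\mu_{w,A}(\|\alpha\|)=1$, then $\|\alpha\|\neq\varnothing$, so $\alpha$ is satisfied somewhere.
\end{itemize}
No finite-branching assumption is needed in either direction. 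Since satisfiability of the multi-modal event language with $\sfive$ and $\Kmbf$ modalities is $\pspace$-hard (already for the $\Kmbf$ fragment), hardness follows.

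For membership, the key claim is that a $\{\bullet,\rightarrow_\Pi\}$-free $\phi$ is $\prea$-satisfiable iff it is $\preafb$-satisfiable. Given this, the $\pspace$ procedure of Theorem~\ref{theorem:generalcasedecidability} applies verbatim. The right-to-left direction is trivial, since finitely branching models are models. For left-to-right, I argue via Remark~\ref{rem:fewermodels}. Translate $\phi$ to $\phi^\uparrow\in\Lmod$; this is a formula of modal Łukasiewicz logic with constant $\onehalf$ over $\Kmbf$-style relations and $\sfive$-style relations. If $\Imc_\Mfrak(\phi,w)=1$ in some probabilistic model, then $\phi^\uparrow$ has value $1$ in $\Mfrak^\uparrow$.

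The witnessed model property (\cite{Hajek2005,Vidal2022}) then gives a witnessed, indeed finite, model $\Nmc$ of $\phi^\uparrow$. The values of the atoms $p_{A,\alpha}$ in $\Nmc$ may be arbitrary reals. They must be re-realised as genuine probabilities, and this is the main obstacle. My approach is to not transfer models at all, but to work at the level of the tableau. The tableau procedure in the proof of Theorem~\ref{theorem:generalcasedecidability} interleaves the outer-layer rules of Fig.~\ref{fig:tableauxrules} with the coherence systems $\Smsf_w$ built from the sample-independent semantics of Proposition~\ref{prop:SIequivalence}.

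I therefore check that this procedure is sound with respect to arbitrary, not necessarily finitely branching, SI-models when $\bullet$ and $\rightarrow_\Pi$ are absent. The only rules whose soundness used finite branching are $\Box_\leq$ and $\knows_\leq$. By the witnessed model property, applied to the outer frame with the probabilistic atoms treated as variables as in Remark~\ref{rem:fewermodels}, we may assume the original model is witnessed. In a witnessed model the infimum is attained, so both rules become sound. The coherence systems only concern the measures at individual outer states, which are untouched by replacing the outer frame.

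Hence an arbitrary satisfying model yields an open complete branch with solvable $\Smsf_w$'s. Completeness of the procedure then yields a finitely branching model, which proves the equivalence. The delicate point to verify is that the witnessed model property, stated for pure modal Łukasiewicz logic, indeed applies to our language with several $\sfive$ and $\Kmbf$ modalities and the constant $\onehalf$. For this I would follow Hájek's proof via the filtration-like, finitely-valued argument, and use that the $\sfive$ clusters are handled exactly as in the $\knows_\leq$ rule.
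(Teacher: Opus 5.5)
Your route is the same as the paper's. Hardness comes from $\alpha\mapsto\Prob_A(\alpha)$, and both of your directions there are fine. Membership comes from showing that $\prea$- and $\preafb$-satisfiability coincide on the $\{\bullet,\rightarrow_\Pi\}$-free fragment via the witnessed model property, after which Theorem~\ref{theorem:generalcasedecidability} applies.

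\textbf{The gap.} The step that fails is ``by the witnessed model property, applied to the outer frame with the probabilistic atoms treated as variables, we may assume the original model is witnessed''. Applied to $\phi^\uparrow$, the WMP yields only \emph{some} witnessed $\EALukProd$-model of $\phi^\uparrow$, and its values for the $p_{A,\alpha}$ need not be coherent. For example, $\neg\ActBox{a}(\Prob_A(p)\oplus\Prob_A(\sneg p))$ is unsatisfiable, but its translation holds in a witnessed model with a successor where $p_{A,p}=p_{A,\sneg p}=0$. This is exactly the obstacle you flagged a paragraph earlier. Moving to the tableau level does not bypass it: soundness of $\Box_\leq$ and $\knows_\leq$, and solvability of the $\Smsf_w$, needs a witnessed model that is still a probabilistic (SI) model.

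Your remark that the measures are ``untouched by replacing the outer frame'' is also not right. Witnessing in general requires \emph{new} states carrying limit values. In the example at the end of Section~\ref{sec:expressivity}, a witness for $\ActLozenge{a}\Prob_A(p)$ needs a successor with $\Prob_A(p)=1$, which no original state has. The paper's Remark~\ref{rem:WMP} passes over the same point with ``together with Remark~\ref{rem:fewermodels}'', but an argument is still needed.

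\textbf{The missing idea.} Coherence is a \emph{closed} condition. For the finitely many atoms $\Prob_A(\alpha)$ of $\phi$, the set $C$ of value vectors induced by measures is, agent by agent, the convex hull of the $0$--$1$ vectors of satisfiable $\Lev$-types, hence a rational polytope. By Proposition~\ref{prop:SIequivalence}, any assignment of points of $C$ to outer states is realised simultaneously by an SI-model whose inner frame is a disjoint union of one pointed model per type. You can then finish in either of two ways:
\begin{enumerate}[(a)]
\item Rerun the compactness proof of the WMP for $\prea$ itself. In this fragment all outer connectives are continuous, so by induction on modal depth the set of realisable value vectors of subformulas is compact. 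Infima and suprema over successor sets are then attained on the closure, and limits of vectors in $C$ stay in $C$.
\item Write $C$ as the one-set of a McNaughton function, i.e.\ $C=\{\vec x\mid\psi_C(\vec x)=1\}$ for a Łukasiewicz formula $\psi_C$ in the $p_{A,\alpha}$. Apply the cited WMP to $\{\phi^\uparrow\}\cup\{\blacksquare_\sigma\psi_C\mid\sigma\text{ a string of modalities of }\phi\text{ of length at most its modal depth}\}$. The resulting witnessed model is coherent at every state relevant to $\phi$ and converts into a witnessed SI-model.
\end{enumerate}
With either patch the rest of your argument goes through. The point you single out as delicate (several $\sfive$/$\Kmbf$ modalities and the constant $\onehalf$) is covered by the citations; coherence is the real issue.
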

\section{Polynomial Fragments of $\prea$\label{sec:polynomial}}
As is well-known, classical logic has many polynomially decidable fragments. One of the most important such fragments is \emph{Horn logic}. A~\emph{Horn theory} can be represented as a~set of \emph{rules} of two kinds: $\bigwedge^n_{i=1}p_i{\rightarrow}q$ and $\bigwedge^n_{i=1}p_i{\rightarrow}\bot$. In Łukasiewicz logic, there is a~similar polynomially decidable fragment studied by Bofill et al.~\shortcite{BofillManyaVidalVillaret2019}. It is called \emph{simple-clause fragment}: a~simple clause theory is a~theory that contains clauses of the form $\bigoplus^m_{i=1}p_i\oplus\bigoplus^n_{j=1}\neg q_j$. Using Definition~\ref{def:preasemantics} and Convention~\ref{conv:boxesdiamonds}, simple clauses can also be represented in the rule form.

In this section, we will consider two modal probabilistic expansions of the simple clause fragment. The difference is that instead of propositional literals, our clauses (rules) will be built over \emph{uniform probabilistic literals}.
\begin{definition}[Uniform modal literals ($\uml$'s)]\label{def:UML}
Consider the following grammar that defines a~fragment of~$\Lev$:
\begin{align*}
\pi&\coloneqq p\mid\neg p\mid\ActLozenge{a}\pi\mid\knowsLozenge{A}\pi\\
\nu&\coloneqq p\mid\neg p\mid\ActBox{a}\nu\mid\knows_A\nu
\end{align*}
Formulas of the type $\pi$ and $\nu$ are called \emph{$\blacklozenge$-literals} and \emph{$\blacksquare$-literals}, respectively. A~literal is \emph{proper} if it contains at least one modality and \emph{propositional} otherwise.
\end{definition}
\begin{definition}[Uniform probabilistic literals ($\upl$'s)]\label{def:UPL}
Consider the following grammar ($\lambda$ is a~$\neg$-free $\uml$):
\begin{align*}
\zeta&\coloneqq\Prob_A(\lambda)\mid\neg\Prob_A(\lambda)\mid\ActLozenge{a}\zeta\mid\knowsLozenge{A}\zeta\\
\eta&\coloneqq\Prob_A(\lambda)\mid\neg\Prob_A(\lambda)\mid\ActBox{a}\eta\mid\knows_A\eta
\end{align*}
Formulas of the type $\zeta$ and $\eta$ are called \emph{$\blacklozenge^\Prob$-literals} and \emph{$\blacksquare^\Prob$-literals}, respectively. A~literal is \emph{proper} if it contains at least one modality \emph{on the outer layer} and \emph{propositional} otherwise.
\end{definition}

Observe that all modal events from Example~\ref{example:modalevents} are expressed via $\uml$'s and can occur in $\upl$'s. Furthermore, it is important to note that $\blacksquare^\Prob$-literals can contain $\blacklozenge$, and $\blacklozenge^\Prob$-literals can contain $\blacksquare$ on the inner layer. Moreover, the fragments of~$\Lpr$ that we are considering in this section will not use connectives~$\bullet$ and~$\rightarrow_\Pi$. Thus, again (cf.~Remark~\ref{rem:WMP}), by~\cite[Lemma~3]{Hajek2005} and~\cite[Lem\-ma~4.6]{Vidal2022}, it follows that their $\prea$-satisfiability can be w.l.o.g.\ established w.r.t.\ \emph{witnessed models} and using rules from Fig.~\ref{fig:tableauxrules}.
\subsection{Universal Probabilistic Rules\label{ssec:UPR}}
Let us now define our first fragment~--- \emph{universal probabilistic rules}. As the name suggests, we can only use box-like modalities on the outer layer. Note, however, that \emph{inner formulas} can use both types of modalities.
\begin{definition}[Universal probabilistic rules]\label{def:UPR}
A \emph{universal probabilistic rule} ($\uprobrule$) is
\begin{itemize}[noitemsep,topsep=0pt]
\item either a~formula of the form $\bigodot^n_{i=1}\eta_i\rightarrow\eta'$ s.t.\ $\eta$'s are \emph{proper} $\blacksquare^\Prob$-literals;
\item or a~formula of the form $\bigodot^n_{i=1}\eta_i\rightarrow\zero$ s.t.\ $\eta$'s are $\blacksquare^\Prob$-literals.
\end{itemize}
A~\emph{$\uprobrule$ theory} is a~set of formulas containing $\uprobrule$'s.
\end{definition}
For simplicity, we will write rules $\eta\rightarrow\zero$ as $\blacklozenge^\Prob$-literals. E.g., $\ActLozenge{a}\neg\Prob_A(p)$ instead of $\ActBox{a}\Prob_A(p)\rightarrow\zero$.

Let us briefly illustrate the expressiveness of universal probabilistic rules. We recall several statements considered in Section~\ref{sec:expressivity}. One can see we can \emph{compare} two probabilities: $\Prob_A (\ActBox{a}p)\to\Prob_A(\ActBox{b}p)$ is a~$\uprobrule$. We can also express that performing~$a$ does not \emph{increase} the probability with $\ActBox{a}\Prob_\mathtt{ob} (\knows_A p) \to \Prob_\mathtt{ob}(\knows_A p)$. Moreover, $\onehalf \to \Prob_\mathtt{ob}(\knows_A p)$ can be rewritten as follows: $\{\Prob_\mathtt{ob}(q)\rightarrow\neg\Prob_\mathtt{ob}(q),\neg\Prob_\mathtt{ob}(q)\rightarrow\Prob_\mathtt{ob}(q),\Prob_\mathtt{ob}(q)\rightarrow\Prob_\mathtt{ob}(\knows_A p)\}$.

On the other hand, we cannot express that an action does not \emph{decrease} a~given probability: $\Prob_\mathtt{ob}(\ActBox{a}p) \to\ActBox{b} \Prob_\mathtt{ob}(\ActBox{a}p)$ is not a~$\uprobrule$ since $\Prob_\mathsf{ob}(\ActBox{a}p)$ is not a~proper $\blacksquare^\Prob$-literal. Likewise, qualitative uncertainty~--- $\knows_A \Prob_\mathtt{ob}(p)\leftrightarrow \knowsLozenge{A}\Prob_\mathtt{ob}(p)$~--- is not expressible as a~$\uprobrule$.

Furthermore, crisp statements with~$\triangle$ rational constants other than~$\onehalf$ (cf.\ Examples~\ref{example:bidding1} and~\ref{example:bidding2}) are not expressible either. On the other hand, all formulas in Example~\ref{example:bidding3} belong to the $\uprobrule$ fragment.

The next statement establishes the polynomial decidability of $\uprobrule$ theories. Observe from Remark~\ref{rem:WMP} that since $\uprobrule$ theories \emph{do not contain} $\bullet$ and $\rightarrow_\Pi$, our decidability result holds for \emph{all} probabilistic models.
\begin{restatable}{theorem}{UPRpolynomial}\label{theorem:UPRpolynomial}
Given a~$\uprobrule$ theory~$\Gamma$, it takes polynomial time to verify whether it is $\prea$-satisfiable.%
\end{restatable}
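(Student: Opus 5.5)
We reduce $\prea$-satisfiability of $\Gamma$ to a polynomial number of linear feasibility checks, in the spirit of the simple-clause algorithm of Bofill et al.~\shortcite{BofillManyaVidalVillaret2019}. Since $\Gamma$ is $\{\bullet,\rightarrow_\Pi\}$-free, Remark~\ref{rem:WMP} lets us work with witnessed, finitely branching models, and the rules of Fig.~\ref{fig:tableauxrules} are sound and complete for this class. The plan has two layers. First we treat the outer layer, which contains only boxes apart from the constraint rules written as $\blacklozenge^\Prob$-literals. Then we treat the inner layer, which needs a coherent measure over events given by uniform modal literals.

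\textbf{Outer layer.} A rule $\bigodot_i\eta_i\rightarrow\eta'$ with all $\eta$'s proper boxed literals is trivially satisfied by any valuation making every $\Prob_A(\lambda)$ constantly $1$ or constantly $0$ along the relevant paths. Only the constraints $\eta\rightarrow\zero$, i.e.\ $\blacklozenge^\Prob$-literals $\blacklozenge\neg\Prob_A(\lambda)$ or $\blacklozenge\Prob_A(\lambda)$ with value $1$, force nontrivial structure. I would build a canonical witnessed model as follows.
\begin{itemize}
\item Start from a root and, for each such diamond literal, create one fresh chain of successor states along its modal prefix. For $\knowsLozenge{A}$ the successor stays inside the current $\Emsf_A$-cluster.
\item Constraint literals with an empty prefix are imposed at the root, and universal rules are propagated along all edges.
\end{itemize}
The resulting model has size linear in $\lmc[\Gamma]$. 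The key claim is that $\Gamma$ is satisfiable iff this canonical model is. The backward direction is trivial. For the forward direction, map each chain state to its witness in an arbitrary witnessed model; the universal rules are preserved because boxed values at a state are bounded by the values at its witnesses and the map respects the relevant relations.

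On the canonical model, each rule instantiated at each state, with boxes unfolded into minima over the (polynomially many) successors, becomes a Łukasiewicz inequality. Here $\odot$ and $\rightarrow$ give, after the standard case split, linear constraints. Rather than split, I would follow the Horn-style approach: treat each rule as a constraint of the form $\sum_i x_i - (n-1)\le x'$, which is the exact semantics whenever the truncation is inactive and is harmless otherwise, since $\max(0,\cdot)\le x'$ holds iff the untruncated sum is at most $x'$ for $x'\ge0$. Minima on the left-hand side would be handled by one auxiliary variable $m\le x_s$ for each successor $s$. However, $m=\min$ is also needed where a boxed literal occurs on the right. Here I would use the Horn structure: the least solution obtained by iterated propagation of lower bounds, a unit-resolution analogue, decides feasibility, so each box may be taken at its least forced value. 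This yields a polynomial-size linear program over the variables $x_{s:\Prob_A(\lambda)}$.

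\textbf{Inner layer.} We need measures $\mu_{s,A}$ realising the values $x_{s:\Prob_A(\lambda)}$ for uniform modal literals $\lambda$. Because the $\lambda$'s are $\neg$-free uniform literals, I would show their events can be made independent in a fixed finite event model. For each literal, take a canonical event model in which its truth set is an arbitrary prescribed set, using fresh worlds and the fact that distinct literals over distinct variables or prefixes can be separated. The remaining difficulty is literals sharing a variable, such as $p$ and $\knows_A p$, where inclusions like $\|\knows_A p\|\subseteq\|p\|$ are forced. So I would compute, in polynomial time, the entailment preorder among the $\neg$-free uniform literals occurring in $\Gamma$; entailment between uniform literals is decidable by a syntactic prefix comparison. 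The only coherence requirement should then be the monotonicity constraints $x_\lambda\le x_{\lambda'}$ for entailed pairs, together with contradiction-freeness, and these are added to the linear program. Feasibility of the whole program is decided by linear programming in polynomial time.

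\textbf{Main obstacle.} I expect the hardest step to be the inner-layer coherence claim: that monotonicity along the entailment preorder suffices for a measure to exist. Some literal pairs are jointly exclusive, e.g.\ $\ActBox{a}p$ and $\ActLozenge{a}\neg p$ sharing a prefix are never both present since $\lambda$ is $\neg$-free. Beyond that, I would first try to realise any monotone assignment by a chain-like event model in which the truth sets are nested according to the preorder.
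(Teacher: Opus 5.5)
\paragraph{Verdict: genuine gap.} Your proposal breaks where you suspected, in the inner-layer coherence claim, which is false. Take the theory consisting of the three literal rules $\Prob_A(\ActBox{a}q)$, $\neg\Prob_A(\ActLozenge{a}q)$ and $\neg\Prob_A(\ActBox{a}p)$. No two of the events $\ActBox{a}q$, $\ActLozenge{a}q$, $\ActBox{a}p$ are related by entailment, since action modalities are $\Kmbf$-style and not serial. Each of them is also individually consistent. So your LP with monotonicity constraints accepts the values $1,0,0$. But $\ActBox{a}q\wedge\sneg\ActLozenge{a}q$ forces $\Rmsf_a(x)=\varnothing$ and hence $\ActBox{a}p$. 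Therefore any measure with $\mu(\|\ActBox{a}q\|)=1$ and $\mu(\|\ActLozenge{a}q\|)=0$ gives $\mu(\|\ActBox{a}p\|)=1$, and the theory is unsatisfiable.

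In general, coherence of a real-valued assignment means membership in the convex hull of all jointly satisfiable truth patterns. That involves multi-way constraints of exactly this kind, and your plan gives no polynomial description of that polytope. The paper never faces general coherence:
\begin{itemize}
\item It runs unit propagation: $\Theta\models_{\prea}\neg\theta$ lets you delete $\theta$ from a clause $\theta\oplus\kappa$. This ends either in a contradiction or in an irreducible theory.
\item It shows irreducible theories are satisfiable with probabilities in $\{0,\tfrac12,1\}$.
\item Unit literals get $0/1$ values, realised by point-mass measures on classical event models for the sets $\Lambda_{w',A}$ of uniform modal literals. Their joint satisfiability is a polynomial check, and it does catch the triple above, because $\{\ActBox{a}q,\ActBox{a}\neg q,\ActLozenge{a}\neg p\}$ is unsatisfiable.
\item One picked proper $\blacklozenge^\Prob$-literal per non-unit clause is made true.
\item Purely propositional clauses get value $\tfrac12$ from a two-point measure.
\end{itemize}

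\paragraph{The outer layer also needs repair.} Your canonical model creates chains only for the explicit $\blacklozenge^\Prob$-constraints, but boxes in antecedents need witnesses too. The theory $\{\ActBox{a}\Prob_A(p)\rightarrow\ActBox{b}\Prob_A(q),\ \ActLozenge{b}\neg\Prob_A(q)\}$ is satisfiable only with an $a$-successor where $\Prob_A(p)$ has value $0$. Your canonical model has no $a$-successor, so $\ActBox{a}\Prob_A(p)$ gets value $1$ there and the rule gets value $0$. Your forward direction fails because deleting non-witness successors raises box values on the left-hand side as well as the right.

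Your linearisation is also the wrong way round. A minimum on the right of $\leq$ is harmless: it is just the conjunction of the inequalities for all successors. A minimum on the left is disjunctive. Replacing it by an auxiliary $m$ with $m\leq x_s$ for all successors $s$ is unsound, since $m$ may lie strictly below the true minimum.

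This part can be fixed. Give every box-literal occurrence its own designated witness chain, and use the witness value on the left-hand side. Even then, the LP only reduces the problem to the inner-layer coherence question above, and that is where your approach breaks.
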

Let us sketch the proof. We first show that it takes polynomial time to decide whether a~set of $\upl$'s is satisfiable. Then, we consider \emph{irreducible theories}, i.e., $\uprobrule$ theories $\Gamma=\Theta\cup\Xi$ where $\Theta$ is a~set of~$\upl$'s and $\Xi$ is a~set of $\uprobrule$'s with at least two literals each s.t.\ $\Xi$~does not contain any literal~$\theta$ s.t.~$\Theta\models_{\prea}\neg\theta$. We show that it takes polynomial time to verify whether a~given $\uprobrule$ theory is irreducible.

We then prove that every irreducible theory is $\prea$-satisfiable. First, note that $\Theta$~is satisfiable. Second, we pick one \emph{proper} $\blacksquare^\Prob$-literal from the left-hand side of each rule in~$\Xi$ (that contains proper $\blacksquare^\Prob$-literals) and show that there is a~pointed probabilistic model $\langle\Mfrak,w\rangle$ s.t.\ $\Imc_\Mfrak(\theta,w)=1$ for all $\theta\in\Theta$, and every literal we picked has value~$0$. Third, we show that the propositional literals in the rules that are not satisfied by that model can be evaluated at~$\tfrac{1}{2}$ in~$w$ (whence all rules \emph{without} proper $\blacksquare^\Prob$-literals have value~$1$ at~$w$). This guarantees us that every formula in~$\Gamma$ has value~$1$ at~$w$.

Let us now sketch a~polynomial decision procedure for $\uprobrule$ theories. First, we transform the theory into the clause form, using that $\Imc_\Mfrak(\phi\rightarrow\chi,w)=\Imc_\Mfrak(\neg\phi\oplus\chi,w)$ in each probabilistic model~$\Mfrak$. Second, we apply the following rule:
\begin{align*}
\text{if }\Theta\models_{\prea}\neg\theta,&\text{ then }\Theta,\theta\oplus\kappa\models_{\prea}\kappa
\end{align*}
Using this rule, we need polynomial time to either transform~$\Gamma$ into an irreducible theory or infer a~literal $\theta'$ s.t.\ \mbox{$\Theta',\theta'\models_{\prea}\zero$} with $\Theta'$ being (possibly) augmented with other $\upl$'s inferred via the above transformation. In the first case, $\Gamma$~is $\prea$-satisfiable, and in the second, it is not.
\subsection{Existential Probabilistic Rules\label{ssec:EPR}}
Our second fragment is \emph{existential} probabilistic rules. One can think of them as a~probabilistic version of the Łukasiewicz description logic $\mathcal{EL}_\bot$ (cf., e.g., \cite[Ch.~6]{BaaderHorrocksLutzSattler2017} for a~presentation of \emph{classical}~$\mathcal{EL}_\bot$). The main difference, of course, is that we will consider the complexity of \emph{local} satisfiability (whether a~formula is true in some state of a~model) of theories containing existential probabilistic rules. This is because \emph{global} satifiability (whether a~formula is true in all states of a~model) of Łukasiewicz $\mathcal{EL}_\bot$ is undecidable~\cite{BaaderPenaloza2011,CeramiStraccia2013,BorgwardtCeramiPenaloza2015,BorgwardtCeramiPenaloza2017,BaaderBorgwardtPenaloza2017,BobilloStraccia2018}.
\begin{definition}\label{def:EPR}
An \emph{existential positive probabilistic rule} ($\eprplus$) is a~formula of one of the following kinds:
\begin{itemize}[noitemsep,topsep=0pt]
\item a~$\upl$;
\item $\bigodot^n_{i=1}\zeta_i{\rightarrow}\zero$ with $\zeta$'s being propositional literals or $\neg$-free $\blacklozenge^\Prob$-literals;
\item $\bigodot^n_{i=1}\zeta_i\rightarrow\zeta$ with $\zeta$'s being \emph{proper} $\neg$-free $\blacklozenge^\Prob$-literals.
\end{itemize}
\end{definition}
There is also a~syntactic difference between $\eprplus$ and $\mathcal{EL}_\bot$. Namely, we can freely use negation over propositional atoms. E.g., a~rule $\neg\Prob_A(\ActBox{a}p){\odot}\neg\Prob_B(\knows_A\knows_B q){\rightarrow}\zero$ is allowed, despite the negation in the left-hand side.

On the other hand, $\eprplus$'s and $\uprobrule$'s are not completely dual. In particular, in contrast to $\uprobrule$ theories, we cannot use negation in $\blacklozenge^\Prob$-literals when they occur in implications. Thus, existential rules such as $\ActLozenge{a}\neg\Prob_A(p)\rightarrow\knowsLozenge{A}\Prob_B(q)$ are not allowed (while $\ActBox{a}\neg\Prob_A(p)\rightarrow\knows_A\Prob_B(q)$ is a~$\uprobrule$). Hence, no formula from Examples~\ref{example:bidding1}--\ref{example:bidding3}, except $\Prob_A(\ActBox{m}B\,\texttt{folds})\rightarrow\Prob_A(\ActBox{a}B\,\texttt{folds})$, belongs to the $\eprplus$ fragment.

The next statement shows that $\prea$-satisfiability of $\eprplus$ theories is also decidable in polynomial time. The proof is similar to that of Theorem~\ref{theorem:UPRpolynomial}.
\begin{restatable}{theorem}{EPRpolynomial}\label{theorem:EPRpolynomial}
Given an $\eprplus$ theory~$\Gamma$, it takes polynomial time to verify whether it is $\prea$-satisfiable.%
\end{restatable}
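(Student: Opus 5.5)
The proof mirrors the one sketched for Theorem~\ref{theorem:UPRpolynomial}, with the roles of boxes and diamonds exchanged. Since $\eprplus$ theories contain neither $\bullet$ nor $\rightarrow_\Pi$, Remark~\ref{rem:WMP} lets us work w.l.o.g.\ with witnessed models and with the rules of Fig.~\ref{fig:tableauxrules}.

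\emph{Step 1: sets of $\upl$'s.} We show that satisfiability of a finite set $\Theta$ of $\upl$'s is decidable in polynomial time. A $\blacksquare^\Prob$-literal $\blacksquare\Prob_A(\lambda)$ with value $1$ forces $\Prob_A(\lambda)=1$ in every state reachable along the modal path. A $\blacklozenge^\Prob$-literal needs only one witness path. It suffices to build the witnesses for each $\blacklozenge^\Prob$-literal separately, as a fresh chain of states, and propagate along it the $\blacksquare^\Prob$-literals whose modality prefix matches. Epistemic modalities require some care: clusters are shared, so $\knows_A$ must also propagate into $\knowsLozenge{A}$-witnesses and back. The result is a polynomial set of requirements of the form $\Prob_A(\lambda)=1$ or $\Prob_A(\lambda)=0$ at a state. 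Each state's requirements are then checked for consistency. Inner literals are $\neg$-free $\uml$'s, so a requirement $\Prob_A(\lambda)=1$ asks for a measure concentrated on $\|\lambda\|$. The check therefore reduces to whether the conjunction of the positively required $\uml$'s, together with the negation of each $0$-required $\uml$ in turn, is classically satisfiable. I would argue that for uniform literals this is a polynomial (prefix-matching) check, using the fact that a $\blacksquare$-literal and a $\blacklozenge$-literal clash only when their prefixes match and their atoms are complementary.

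\emph{Step 2: reduction procedure.} Write every rule in clause form, $\bigodot\zeta_i\rightarrow\zeta$ as $\bigoplus\neg\zeta_i\oplus\zeta$. Repeatedly apply the rule: if $\Theta\models_{\prea}\neg\theta$, then from $\theta\oplus\kappa$ infer $\kappa$. A clause that shrinks to a single literal is added to $\Theta$, and an empty clause signals unsatisfiability. Each entailment test $\Theta\models\neg\theta$ is a Step~1 test, namely unsatisfiability of $\Theta\cup\{\triangle\theta'\}$ in a suitable literal form, and every pass removes at least one literal. The procedure therefore runs in polynomially many polynomial steps and ends either with a contradiction or with an \emph{irreducible} theory $\Theta\cup\Xi$.

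\emph{Step 3: irreducible theories are satisfiable (main obstacle).} Take a witnessed model of $\Theta$ at $w$, built as in Step~1 in a minimal way in which every atom not forced to $1$ gets value $0$. We must show each remaining clause in $\Xi$, which has at least two literals and none refuted by $\Theta$, evaluates to $1$. For rules $\bigodot\zeta_i\rightarrow\zero$ with propositional or $\neg$-free literals, I would set the unconstrained propositional probabilistic atoms to $\tfrac12$. Then a product of at least two such values is $0$ via $\odot$, as in the $\uprobrule$ case. For rules $\bigodot\zeta_i\rightarrow\zeta$ with proper $\neg$-free $\blacklozenge^\Prob$-literals, the difficulty is that a positive existential head cannot simply be made $0$. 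So I would make some premise $\zeta_i$ false, exploiting minimality: since $\Theta\not\models\zeta_i$ fails to be derivable, the minimal model gives $\zeta_i$ value $0$ unless $\Theta$ forces it to $1$. If every premise is forced to $1$, irreducibility plus the reduction rule would have forced the head into $\Theta$. The delicate points are making this "least model" argument work simultaneously for all rules, including the interaction with $\tfrac12$-valued atoms and shared epistemic clusters. I expect the most effort to go there.
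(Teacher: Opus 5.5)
Your architecture matches the paper's. Step~1 is polynomial satisfiability and entailment for sets of $\upl$'s. Step~2 is the reduction ``if $\Theta\models_{\prea}\neg\theta$ then $\Theta,\theta\oplus\kappa\models_{\prea}\kappa$''. Step~3 is ``irreducible implies satisfiable'', obtained by falsifying one proper body literal per rule and putting purely propositional clauses at $\tfrac12$. But Step~3 carries the whole weight of the theorem, you leave it open, and the mechanism you propose for it cannot work.

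There is no minimal model of $\Theta$ in which every probabilistic atom not forced to $1$ gets $0$, because inner events are modal and their probabilities are coupled. Let $\Theta=\{\Prob_A(\ActBox{a}p)\}$, $\zeta_1=\knowsLozenge{B}\Prob_A(\ActLozenge{a}p)$ and $\zeta_2=\knowsLozenge{B}\Prob_A(\ActBox{a}\ActBox{b}q)$.
\begin{itemize}
\item Each $\zeta_i$ can be $0$ in a model of $\Theta$. Take $\Emsf_B(w)=\{w\}$ and let $\mu_{w,A}$ be a point mass on a state $s$ with no $a$-successor (giving $\zeta_1=0$), resp.\ on a state $t$ whose only $a$-successor satisfies $p$ and has a $b$-successor falsifying $q$ (giving $\zeta_2=0$).
\item Every state of $\|\ActBox{a}p\|$ satisfies $\ActLozenge{a}p$ or $\ActBox{a}\ActBox{b}q$. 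Since $w\Emsf_Bw$, every model of $\Theta$ therefore has $\Imc_\Mfrak(\zeta_1,w)+\Imc_\Mfrak(\zeta_2,w)\geq\mu_{w,A}(\|\ActLozenge{a}p\|)+\mu_{w,A}(\|\ActBox{a}\ActBox{b}q\|)\geq 1$.
\end{itemize}
Now consider $\Gamma=\Theta\cup\{\zeta_1\odot\zeta_1\rightarrow\zero,\ \zeta_2\odot\zeta_2\rightarrow\zero\}$. It is an $\eprplus$ theory in which no literal is entailed or refuted by $\Theta$, so it is irreducible. It is satisfiable: let $\mu_{w,A}$ be the average of the two point masses. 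But every pointed model of $\Gamma$ gives $\zeta_1$ and $\zeta_2$ value exactly $\tfrac12$. So no model falsifies a body literal of either rule, and no least-model or $\{0,1\}$-valued construction can satisfy $\Gamma$.

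The paper's own sketch of this step (Lemma~\ref{lemma:preairreduciblesatisfiabilityepr}) claims exactly what this example refutes: a model in which one proper body literal of each rule is $0$. It argues by analogy with Lemma~\ref{lemma:irreduciblesatisfiabilityexistential}. There the chosen literals have the form $\blacksquare\neg p$ over independent variables and are jointly satisfiable. Here they are $\blacksquare\neg\Prob_A(\lambda)$ with modal $\lambda$, and these are in general not jointly satisfiable with $\Theta$. So you cannot close Step~3 by importing the paper's argument. What is missing is a construction in which body literals take intermediate values (e.g.\ mixtures of measures, as in the example), together with a proof that irreducibility guarantees such a model exists.

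Two smaller points.
\begin{itemize}
\item In Step~1 the $0$-required literals must be negated jointly, not in turn, and clashes among uniform literals are not pairwise. The set $\{\Prob_A(\ActBox{a}p),\neg\Prob_A(\ActLozenge{a}p),\neg\Prob_A(\ActBox{a}q)\}$ is unsatisfiable because $\ActBox{a}p\wedge\ActBox{a}\neg p\wedge\ActLozenge{a}\neg q$ is. Yet each $0$-requirement is separately consistent with $\ActBox{a}p$, and no box/diamond pair has complementary atoms. The paper avoids this by running the non-branching tableau on the whole set (Lemma~\ref{lemma:UMtermpolynomial}).
\item In Step~2, replacing $\Theta\models_{\prea}\neg\theta$ by unsatisfiability of $\Theta\cup\{\theta\}$ needs the lemma that $\Theta\not\models_{\prea}\neg\zeta$ implies $\Theta\cup\{\zeta\}$ is satisfiable. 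This is Lemma~\ref{lemma:classicalsatisfiabilityreductionlozengeprob} and its box analogue; you only gesture at it with ``a suitable literal form''.
\end{itemize}
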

We conclude the section with a~short sketch of the proof. Observe first from Theorem~\ref{theorem:UPRpolynomial} that it takes polynomial time to check the $\prea$-satisfiability of a~given set of $\upl$'s or whether an $\upl$ is entailed from a~set of $\upl$'s.

Now, as was the case in Theorem~\ref{theorem:UPRpolynomial}, it remains to show that irreducible $\eprplus$ theories are $\prea$-satisfiable. The difference is that instead of picking $\blacksquare^\Prob$-literals from each rule and assigning them value~$0$ at~$w$, we show that there is a~pointed probabilistic model $\langle\Mfrak,w\rangle$ where one proper \emph{$\blacklozenge^\Prob$-literal} from the left-hand side of each rule has value~$0$ at~$w$. Thus, given an $\eprplus$ theory, we can use the same procedure as in Theorem~\ref{theorem:UPRpolynomial} to either transform it into an irreducible (and hence, satisfiable) theory or infer a~contradiction.
\section{Related Work\label{sec:relatedwork}}
As we mentioned in the introduction, the first family of logics that combined modal (epistemic) and probabilistic reasoning was proposed by Fagin and Halpern~\shortcite{FaginHalpern1994}. There are several important differences between their logic and ours. First, probabilistic formulas in~\cite{FaginHalpern1994} are expressed via linear inequalities that are \emph{two-valued}, and thus, are not well suited to express imprecise statements such as ‘the probability of~$\alpha$ \emph{is not much} lower (higher) than the probability of~$\beta$’. Second, the expressivity of the modal probabilistic language proposed by Fagin and Halpern is incomparable with that of~$\Lpr$. On the one hand, there are no separate \emph{inner} and \emph{outer} formulas. Thus, e.g., the following formula (adapted to the notation in this paper) is permitted: $(\Prob_B(\knows_A(\Prob_A(a){\geq}\overline{\tfrac{2}{3}})){\leq}\onehalf){\rightarrow}\knows_Bq$. On the other hand, inequality formulas can contain only one agent label. Thus, statements such as ‘$A$~and~$B$ consider $\alpha$ equally likely’ ($\triangle(\Prob_A(\alpha){\leftrightarrow}\Prob_B(\alpha))$ in~$\Lpr$) cannot be expressed, unless one specifies the degree of likelihood. Furthermore, the validity of some probabilistic logics is $\mathsf{EXPTIME}$-complete~\cite[Theorem~4.5]{FaginHalpern1994}. Observe, however, that all logics in~\cite{FaginHalpern1994} have the finite model property, while in our case, this holds only for the $\{\bullet,\rightarrow_\Pi\}$-free fragment of~$\Lpr$ (recall Remark~\ref{rem:WMP}).

Our work, however, continues the study of \emph{fuzzy} modal probabilistic logics. In particular, we expand the fuzzy \emph{epistemic} probabilistic logic proposed by Corsi et al.~\shortcite{CorsiFlaminioGodoHosni2023} with \emph{action} modalities as well as with additional connectives~$\bullet$ and~$\rightarrow_\Pi$. These allow us to express (subjective) \emph{conditional probability} of~$\alpha$ given~$\beta$ via $\Prob_A(\beta)\to_\Pi\Prob_A(\alpha\land\beta)$ as well as \emph{degrees of independence} of~$\alpha$ and~$\beta$: $\Prob_A(\alpha\wedge\beta)\leftrightarrow(\Prob(\alpha)\bullet\Prob(\beta))$. Moreover, Corsi et al.\ study the axiomatisation and expressivity of their epistemic probabilistic logic, while we are concentrating on complexity. We also note that the logic in~\cite{CorsiFlaminioGodoHosni2023} is the fragment of~$\prea$ without~$\bullet$, $\rightarrow_\Pi$, and action modalities.

Another closely related work is by Majer and Sedl\'{a}r~\shortcite{MajerSedlar2025} and Kozhemiachenko and Sedl\'{a}r~\shortcite{KozhemiachenkoSedlar2025}. There, a~logic $\KLukProdProb$ with action modalities and propositional event formulas is considered. Note, however, that even though $\prea$ is more expressive than $\KLukProdProb$, the satisfiability problem in both logics is $\pspace$-complete. Thus, in this paper, we combined the frameworks of~\cite{CorsiFlaminioGodoHosni2023} and~\cite{KozhemiachenkoSedlar2025}, by proposing a~logic with both epistemic and action modalities on its outer and inner layers.

Finally, we note that to the best of our knowledge, there are no results on the complexity (and decidability) of Łukasiewicz modal logic extended with~$\triangle$ over arbitrary frames (this logic fails the FMP as we observed in the end of Section~\ref{sec:expressivity}), let alone of the modal logic over $\LukProd$ (combined Łukasiewicz and Product modal logic). Thus, there are no results on the decidability of their probabilistic expansions. Still, there is an interesting pattern: usually, the complexity of a two-layered logic built over logics~$\Lmbf_1$ and~$\Lmbf_2$ coincides with that of the harder logic. E.g., $\mathsf{FP}(\LukProd)$ by Hajek and Tulipani~\shortcite{HajekTulipani2001} uses events expressed as Boolean formulas and is in~$\pspace$, just as the propositional~$\LukProd$. Likewise, the satisfiability of arbitrary probabilistic atoms~$\Prob_A(\alpha)$ is $\pspace$-hard because (classical) modal logic is $\pspace$-hard. Thus, we conjecture that the complexity of $\prea$ (over arbitrary frames) will coincide with that of $\EALukProd$.
\section{Conclusion and Future Work\label{sec:conclusion}}
We introduced and studied $\prea$, a~modal probabilistic logic that unifies and expands the frameworks of Corsi et al.~\shortcite{CorsiFlaminioGodoHosni2023} on one hand and Majer and Sedl\'ar~\shortcite{MajerSedlar2025} and Kozhemiachenko and Sedl\'ar~\shortcite{KozhemiachenkoSedlar2025}, on the other hand. A~distinguishing feature of our framework is the presence of epistemic and action modalities in both logical layers.

We showed that $\prea$ can express probabilistic statements about modal events involving knowledge or the effects of actions and lower and upper probabilities of events arising from epistemic uncertainty or from uncertainty about outcomes of actions. Such specifications are  important in domains such as robotics and cybersecurity. We established that the satisfiability problem for the full language over finitely-branching frames is $\pspace$-complete. We also identified specific syntactic fragments for which validity is decidable in polynomial time.

There are several promising avenues for future research. A primary objective is to provide a sound and complete axiomatisation for $\prea$. To enrich the framework and improve its applicability to reasoning about programs and multi-agent systems, we also intend to extend the current language to incorporate Propositional Dynamic Logic constructs and operators for group knowledge.

Finally, we plan to explore the semantic boundaries and other computational aspects of the logic. Our current complexity results in Section~\ref{sec:decidability} for the full $\Lpr$ rely on a~restriction to finitely branching frames. Investigating the complexity of the logic without this constraint is a~necessary step towards generalising the framework. We also intend to expand the scope of uncertainty representation by incorporating fuzzy events and exploring other uncertainty measures alongside their qualitative counterparts. Identifying additional polynomially-decidable fragments of $\prea$ remains a~priority for practical implementation.

\vfill\eject
\section*{Acknowledgements}
Igor Sedlár was supported by the grant 22-16111S of the Czech Science Foundation. The authors would also like to thank the reviewers for their constructive suggestions that improved the quality of the paper.
\section*{AI Declaration}
The authors used Google Gemini to obtain feedback on the formulations of Examples in Section 3.
\bibliographystyle{kr}
\bibliography{kr-sample}
% \end{document}
\clearpage
\onecolumn
\appendix
\section{Proofs of Section~\ref{sec:decidability}}
\tableauxcompleteness*
\begin{proof}
The proof is standard, so we provide only a~sketch.

For the soundness part, we observe first that closed branches do not have realising models. Indeed, if $\Bmc$ is closed, then $\Bmc^\Imbb$ has no solutions, whence $\Bmc$ cannot have a~realising model. Now, it suffices to show that if $\Mfrak$ realises the premises of a~rule, it also realises the conclusion. For propositional rules, this follows straightforwardly from Definition~\ref{def:preasemantics}. Let us now consider the $\Box_\leq$ rule as an example. Assume that $\Mfrak$ realises $w:\ActBox{a}\chi\leq P$. Then $\Imc_\Mfrak(\ActBox{a}\chi,w)\leq P$ and there must be some $w'\in \Rmsf_a(w)$ s.t.\ $\Imc_\Mfrak(\chi,w')\leq P$. Hence, the conclusion is also realised. Rules $\Box_\geq$, $\knows_\leq$, and $\knows_\geq$ can be dealt with similarly.

Let us now consider the completeness. It suffices to show that every complete open branch $\Bmc$ has a~realising model. Let $\Smc_{\Bmc^\Imbb}$ be a~solution of the system of inequalities $\Bmc^\Imbb$ that corresponds to~$\Bmc$. We construct the model $\Mfrak=\langle W,\Emsf,\Rmsf,v\rangle$ as follows:
\begin{align*}
W&=\{w\mid w\text{ occurs on }\Bmc\}\\
\Rmsf_a&=\{\langle w,w'\rangle\mid w\Rmsf_aw'\in\Bmc\}\tag{for every $a\in\Ac$}\\
\Emsf_A&=\{\langle w,w'\rangle\mid\exists\cluster^A:w\in\cluster^A\text{ and }w'\in\cluster^A\text{ occur on~}\Bmc\}\tag{for every $A\in\Ag$}\\
v(p,w)&=\Smc_{\Bmc^\Imbb}(x_{w:p})
\end{align*}
It remains to show that $\Imc_\Mfrak(\chi,w)=\Smc_{\Bmc^\Imbb}(x_{w:\chi})$ for every $\chi$ occurring on~$\Bmc$.

We proceed by induction on~$\chi$. The base case of $\chi\in\Var$ holds by construction of~$\Mfrak$. The propositional cases can be verified by a~straightforward application of the induction hypothesis. Let us now consider the case of $\chi=\ActBox{a}\psi$. Assume that $w{:}\ActBox{a}\psi\leq P\in\Bmc$. As $\Bmc$ is complete, $\{wRw',w'{:}\psi\leq P\}\subseteq\Bmc$. By the induction hypothesis, $\Mfrak$~realises $w'{:}\psi\leq P$. Thus, there is $w'\in \Rmsf_a(w)$ s.t.\ $\Imc_\Mfrak(\psi,w')\leq P$. It follows that $\Imc_\Mfrak(\ActBox{a}\psi,w)\leq P$, as required. Now assume that $w{:}\ActBox{a}\psi\geq P\in\Bmc$. If there is no $u$ s.t.\ $w\Rmsf_au\in\Bmc$, it means that $\Rmsf_a(w)=\varnothing$, whence $\Imc_\Mfrak(\ActBox{a}\psi,w)=1$. Thus, $w{:}\ActBox{a}\psi\geq P$ is realised. If there is some $u$ s.t.\ $w\Rmsf_au\in\Bmc$, then for each such $u$, $u{:}\psi\geq P\in\Bmc$. Again, by the induction hypothesis, all these constraints are realised, whence $\Imc_\Mfrak(\ActBox{a}\psi,w)\leq P$, as required. The case of $\chi=\knows_A\psi$ can be dealt with in a~similar manner.
\end{proof}
\SIequivalence*
\begin{proof}
It is clear that if $\phi$ is valid on SI-models, then it is valid on standard models. For the converse direction, let $\phi$ be \emph{not valid} on SI-models. Then $\Imc_\Mfrak(\phi,w)<1$ for some SI-model $\Mfrak=\langle W,\Emsf,\Rmsf,\Xfrak,\mu,V\rangle$ with $\Xfrak=\langle W_\Xfrak,\Emsf_\Xfrak,\Rmsf_\Xfrak,V\rangle$. We define $\Mfrak'=\langle W',\Fmbf',\Emsf',\Rmsf',\mu',V'\rangle$:
\begin{itemize}[noitemsep,topsep=0pt]
\item $W'=W\uplus W_\Xfrak$;
\item $\Fmc_{w,A}'=2^{W'}$ for all $w\in W'$ and $A\in\Ag$;
\item $\Emsf'_A=\Emsf_A\uplus \Emsf_{A_\Xfrak}$ for every $A\in\Ag$;
\item $\Rmsf'_a=\Rmsf_a\uplus\Rmsf_{a_\Xfrak}$ for every $a\in\Ac$;
\item $\mu'_{w,A}(\{s\})=
\begin{cases}
\mu_{w,A}(\{s\})&\text{if }s\in W_\Xfrak\\
0&\text{otherwise}
\end{cases}$;
\item $V'=V$
\end{itemize}
Observe now that for every $\alpha\in\Lev$, $A\in\Ag$, and $w\in W$, $\mu_{w,A}(\|\alpha\|_\Mfrak)=\mu'_{w,A}(\|\alpha\|_{\Mfrak'})$. Let us now show by induction of $\phi\in\Lpr$ that $\Imc_\Mfrak(\phi,w)=\Imc_{\Mfrak'}(\phi,w)$ for every $w\in W$.

Let $\phi=\Prob_A(\alpha)$ with $\alpha\in\Lev$. Observe that epistemic and action relations in~$\Mfrak'$ are disjoint unions of those in $\langle W,\Emsf,\Rmsf\rangle$ and $\langle W_\Xfrak,\Emsf_\Xfrak,\Rmsf_\Xfrak\rangle$ and $V'=V$. Thus, we have that $s\in\|\alpha\|_{\Mfrak'}\cap W_\Xfrak$ iff $s\in\|\alpha\|_\Xfrak$ for every $\alpha\in\Lev$. Furthermore, the only states in~$\Mfrak'$ that can have a~positive measure are exactly the states that were in~$W_\Xfrak$. It now follows that $\mu'_{w,A}(\|\alpha\|_{\Mfrak'})=\mu_{w,A}(\|\alpha\|_\Xfrak)$. The basis case now follows.

The cases of propositional connectives $\{\neg,\rightarrow,\bullet,\rightarrow_\Pi\}$ follow by a~straightforward application of the induction hypothesis. Let us now consider the case of $\phi=\knows_A\chi$. Observe from the construction of $\Mfrak'$ that $\{t\mid s\Emsf'_At\}=\{t\mid s\Emsf_At\}$ for every $s\in W$ and $A\in\Ag$. By the induction hypothesis, we have that $\Imc_{\Mfrak'}(\chi,w')=\Imc_\Mfrak(\chi,w')$ for every $w'$ s.t.~$w\Emsf'_Aw'$. It follows that $\Imc_{\Mfrak'}(\knows_A\chi,w)=\Imc_\Mfrak(\knows_A\chi,w)$. The case of $\phi=\ActBox{a}\chi$ can be dealt with in a~similar manner.
\end{proof}

Before proceeding to the proof of Theorem~\ref{theorem:generalcasedecidability}, we introduce several technical notions.
\begin{definition}\label{def:eventsat}
Given $\alpha\in\Lev$, we say that:
\begin{itemize}[noitemsep,topsep=0pt]
\item $\alpha$ is \emph{$\EA$-satisfiable} iff there is an event model $\Efrak$ and $w\in\Efrak$ s.t.\ $\Efrak,w\vDash\alpha$;
\item $\alpha$ is \emph{$\EA$-valid} iff for every event model $\Efrak$ and every $w\in\Efrak$, it holds that $\Efrak,w\vDash\alpha$.
\end{itemize}
Given $\Gamma\subseteq\Lev$, we will say that it is $\EA$-satisfiable iff there is an event model $\Efrak$ and $w\in\Efrak$ s.t.\ $\Efrak,w\vDash\gamma$ for all $\gamma\in\Gamma$.
\end{definition}
\begin{definition}[Maximal consistent sets]\label{def:MCS}
Let $\Gamma\subseteq\Lev$ be closed under negated subformulas, i.e., satisfy the following property:
\begin{align*}
\forall\alpha:(\exists\beta:\alpha\text{ is a subformula of }\beta~\&~\beta\in\Gamma)\Rightarrow\alpha\in\Gamma\text{ or }\neg\alpha\in\Gamma
\end{align*}
A~\emph{maximal consistent subset of}~$\Gamma$ ($\Gamma$-MCS) is a~set $\Xi$ that satisfies the following property:
\begin{align*}
\forall\alpha\in\Gamma:\alpha\in\Xi\Leftrightarrow\Xi\cup\{\alpha\}\text{ is $\EA$-satisfiable}
\end{align*}
\end{definition}
\begin{definition}\label{def:MCSmodels}
Let $\Gamma\subseteq\Lev$ be closed under negated subformulas. An \emph{MCS event model for~$\Gamma$} is the following structure $\Efrak^\Gamma=\langle W^\Gamma,\Emsf^\Gamma,\Rmsf^\Gamma,V^\Gamma\rangle$:
\begin{align*}
W^\Gamma&=\{\Xi\subseteq\Gamma\mid\Xi\text{ is a }\Gamma\text{-MCS}\}\\
\Emsf^\Gamma_A&=\{\langle\Delta,\Theta\rangle\mid\forall\knows_A\alpha\in\Gamma:\knows_A\alpha\in\Delta\Leftrightarrow\knows_A\alpha\in\Theta\}\tag{$A\in\Ag$}\\
\Rmsf^\Gamma_a&=\{\langle\Delta,\Theta\rangle\mid\forall\ActBox{a}\alpha\in\Gamma:\ActBox{a}\alpha\in\Delta\Rightarrow\alpha\in\Theta\}\tag{$a\in\Ac$}\\
V^\Gamma(p)&=\{\Xi\in W^\Gamma\mid p\in\Xi\}
\end{align*}
\end{definition}

\begin{lemma}\label{lemmaMCSmodelswelldefined}
For every $\Gamma\neq\varnothing$~that is closed under negated subformulas, $\Efrak^\Gamma$ is an event model.
\end{lemma}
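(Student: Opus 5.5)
We need to check that $\Efrak^\Gamma=\langle W^\Gamma,\Emsf^\Gamma,\Rmsf^\Gamma,V^\Gamma\rangle$ satisfies the definition of an event model. That means four things: $W^\Gamma\neq\varnothing$; each $\Emsf^\Gamma_A$ is an equivalence relation on $W^\Gamma$; each $\Rmsf^\Gamma_a$ is a binary relation on $W^\Gamma$; and $V^\Gamma$ maps variables to subsets of $W^\Gamma$. The last two hold by construction, since $\Rmsf^\Gamma_a\subseteq W^\Gamma\times W^\Gamma$ and $V^\Gamma(p)\subseteq W^\Gamma$ by definition. So the work lies in non-emptiness and in the equivalence-relation property.

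For $\Emsf^\Gamma_A$, I would note that it is the kernel of the map $f_A\colon\Delta\mapsto\{\knows_A\alpha\in\Gamma\mid\knows_A\alpha\in\Delta\}$, because $\Delta\Emsf^\Gamma_A\Theta$ iff $f_A(\Delta)=f_A(\Theta)$. The kernel of any function is reflexive, symmetric, and transitive. Equivalently, one checks directly that the biconditional $\knows_A\alpha\in\Delta\Leftrightarrow\knows_A\alpha\in\Theta$ is reflexive, symmetric, and transitive in the pair $\langle\Delta,\Theta\rangle$. Either way, this part uses nothing about consistency.

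Non-emptiness is the only substantive step. I would build a $\Gamma$-MCS by a Lindenbaum-style construction inside the finite (or countable) set $\Gamma$. Take any event model $\Efrak$ and any state $w$; for instance, a one-point model with total relations and empty valuation. Set $\Xi_w=\{\alpha\in\Gamma\mid\Efrak,w\vDash\alpha\}$. This set is $\EA$-satisfiable, as witnessed by $w$.

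The main obstacle is to verify the biconditional in the definition of a $\Gamma$-MCS. The direction $\alpha\in\Xi_w\Rightarrow\Xi_w\cup\{\alpha\}$ satisfiable is immediate. The converse does not follow immediately, because another model might satisfy $\Xi_w\cup\{\alpha\}$ even though $w\nvDash\alpha$. To handle it, I would use closure under negated subformulas. Suppose $\alpha\in\Gamma$ and $w\nvDash\alpha$. Then $w\vDash\sneg\alpha$, and if $\sneg\alpha\in\Gamma$ we get $\sneg\alpha\in\Xi_w$, so $\Xi_w\cup\{\alpha\}$ is unsatisfiable, as required. The closure property only guarantees $\alpha\in\Gamma$ or $\sneg\alpha\in\Gamma$ for subformulas of members, which is weaker than what we need here. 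I would therefore read the intended closure as "both $\alpha$ and its negation are present" (as in the construction of $\subformulas^\sim_A(w)$). If that reading is not available, I would instead choose $\Xi$ maximally: enumerate $\Gamma$ and greedily add each formula that keeps the set satisfiable. This standard Lindenbaum argument yields a set meeting the biconditional by maximality, and it is non-empty in $W^\Gamma$ because the empty set is satisfiable and $\Gamma\neq\varnothing$. With this established, $W^\Gamma\neq\varnothing$, and $\Efrak^\Gamma$ is an event model.
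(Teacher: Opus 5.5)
Your proposal is correct and takes the same route as the paper's one-line proof, which simply asserts from Definition~\ref{def:MCSmodels} that $W^\Gamma\neq\varnothing$ and that each $\Emsf^\Gamma_A$ is an equivalence relation; your kernel observation and greedy Lindenbaum construction fill in the details the paper leaves implicit. The greedy construction works under the literal closure condition, so no reinterpretation of that condition is needed; for countably infinite $\Gamma$, the satisfiability of the limit set relies on compactness of the classical multimodal event logic, which you should state explicitly.
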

\begin{proof}
We note briefly that $W^\Gamma\neq\varnothing$ and that $\Emsf^\Gamma_A$'s are equivalence relations in~$\Gamma$ as one can see from Definition~\ref{def:MCSmodels}. The statement now follows.
\end{proof}

The next statement can now be shown by induction on~$\alpha\in\Lev$ same as~\cite[Theorem~4.1]{FaginHalpern1994}.
\begin{lemma}\label{lemma:MCSmodels}
Let $\Gamma\subseteq\Lev$ be closed under negated subformulas and $\Efrak^\Gamma=\langle W^\Gamma,\Emsf^\Gamma,\Rmsf^\Gamma,V^\Gamma\rangle$ be an MCS event model for~$\Gamma$. Then the following holds:
\begin{align*}
\forall\phi\in\Gamma,\Xi\text{ is a }\Gamma\text{-MCS}:\phi\in\Xi\Leftrightarrow\Efrak^\Gamma,\Xi\vDash\phi
\end{align*}
\end{lemma}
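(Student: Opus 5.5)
We prove the truth lemma by induction on the construction of $\phi\in\Gamma$. Closure of $\Gamma$ under negated subformulas guarantees that every subformula of a member of $\Gamma$ (or its negation) is in $\Gamma$, so the induction hypothesis is available for the immediate subformulas. Before the induction we record three facts about $\Gamma$-MCSs, all read off Definition~\ref{def:MCS}. First, every $\Gamma$-MCS $\Xi$ is itself $\EA$-satisfiable. Second, for each $\beta$ with $\beta,\sneg\beta\in\Gamma$, exactly one of $\beta,\sneg\beta$ lies in $\Xi$. Third, membership is closed under $\EA$-consequence within $\Gamma$: if $\beta\in\Gamma$ and $\Xi\models\beta$, then $\beta\in\Xi$. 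We also read $\Xi$ as the \emph{complete} subsets of $\Gamma$ that are satisfiable, i.e.\ the sets $\Gamma\cap\{\beta\mid\Efrak,w\vDash\beta\}$ for some pointed event model. This is the reading under which the three facts hold.

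The base case $p$ is immediate from $V^\Gamma$. The cases $\sneg\beta$ and $\beta_1\land\beta_2$ follow from completeness and consistency of $\Xi$ together with the induction hypothesis.

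For $\ActBox{a}\beta\in\Gamma$, the direction ``$\ActBox{a}\beta\in\Xi$ implies truth'' is immediate: every $\Theta$ with $\Xi\Rmsf^\Gamma_a\Theta$ contains $\beta$ by definition of $\Rmsf^\Gamma_a$, and the induction hypothesis gives $\Efrak^\Gamma,\Theta\vDash\beta$. For the converse, suppose $\ActBox{a}\beta\notin\Xi$. Then $\Xi$ is satisfied at a state $w$ of some event model $\Efrak$ with $\Efrak,w\nvDash\ActBox{a}\beta$. Pick $u\in\Rmsf_a(w)$ with $u\nvDash\beta$. Let $\Theta=\{\gamma\in\Gamma\mid\Efrak,u\vDash\gamma\}$. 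This set is a $\Gamma$-MCS, contains $\gamma$ whenever $\ActBox{a}\gamma\in\Xi$, and does not contain $\beta$. Hence $\Xi\Rmsf^\Gamma_a\Theta$, and by the induction hypothesis $\Theta\nvDash\beta$.

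The case $\knows_A\beta$ is the main obstacle, because $\Emsf^\Gamma_A$ is defined only by agreement on $\knows_A$-formulas and does not directly force $\beta$ into successors. For ``$\knows_A\beta\in\Xi$ implies truth'', take $\Theta$ with $\Xi\Emsf^\Gamma_A\Theta$. Then $\knows_A\beta\in\Theta$, and $\Theta$ is satisfiable at some state $v$. By reflexivity of $\Emsf_A$, $v\vDash\beta$. Since $\beta\in\Gamma$ or $\sneg\beta\in\Gamma$, and $\Theta$ is exactly the $\Gamma$-theory of $v$, we get $\beta\in\Theta$. The induction hypothesis then gives $\Theta\vDash\beta$.

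For the converse, suppose $\knows_A\beta\notin\Xi$, with $\Xi$ realised at $w$ in $\Efrak$. Choose $u\in\Emsf_A(w)$ with $u\nvDash\beta$, and let $\Theta$ be the $\Gamma$-theory of $u$. Because $\Emsf_A$ is an equivalence relation, $w$ and $u$ agree on every formula $\knows_A\gamma$ (the S5 property). So $\Xi\Emsf^\Gamma_A\Theta$ and $\beta\notin\Theta$, and the induction hypothesis closes the case.

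The delicate point is making sure the MCS notion used really coincides with ``$\Gamma$-theory of a point of some event model''. If Definition~\ref{def:MCS} is read literally, we first prove this equivalence, via compactness, or more simply via finiteness of $\Gamma$ and the finite model property of the multimodal $\sfive\otimes\Kmbf$ fusion. Once that is in place, every step above is a routine S5/K argument, as in \cite[Theorem~4.1]{FaginHalpern1994}.
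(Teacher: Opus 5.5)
Your proof is correct for the kind of $\Gamma$ the paper actually uses, namely $\subformulas^\sim_A(w)$. However, your modal cases take a different route from the paper's.

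\textbf{The paper's route.} It follows the syntactic canonical-model template of Fagin and Halpern. For the converse directions it shows that $\Xi^a\cup\{\sneg\beta\}$, resp.\ $\Xi^A\cup\{\sneg\beta\}$, is unsatisfiable; here $\Xi^A$ consists of the $\knows_A$-formulas of $\Xi$ and their negations. Unsatisfiability holds because a model of that set would yield a falsifying successor in $W^\Gamma$. The paper then pushes this through necessitation, K, and the introspection validity $\bigwedge\Xi^A\to\knows_A\bigwedge\Xi^A$ to obtain $\ActBox{a}\beta\in\Xi$, resp.\ $\knows_A\beta\in\Xi$.

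\textbf{Your route.} You exploit that MCSs are defined semantically, so each one is the $\Gamma$-theory of a point. You realise $\Xi$ at $w$, take a falsifying $\Rmsf_a$- or $\Emsf_A$-successor, and use its $\Gamma$-theory directly as the witness in $\Efrak^\Gamma$. S5 then enters only as reflexivity and as the agreement of $\Emsf_A$-related points on $\knows_A$-formulas.

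\textbf{What each buys.} Your version is shorter. It needs no finiteness of $\Gamma$, whereas the paper forms the conjunction $\bigwedge\Xi^a$. It also actually justifies the forward $\knows_A$ step, which the paper merely asserts. The paper's version is the one that would carry over to a proof-theoretic notion of consistency, e.g.\ toward an axiomatisation.

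\textbf{Your ``delicate point''.} Compactness and the FMP play no role here.
\begin{itemize}
\item A $\Gamma$-MCS is automatically satisfiable: otherwise it could contain no member of $\Gamma$, so it would be $\varnothing$, which is satisfiable. Hence it equals the $\Gamma$-theory of any point satisfying it.
\item The other direction is that the $\Gamma$-theory of a point is a $\Gamma$-MCS. You use it for $\Theta$, and the paper uses it when extending to some $\Theta\in W^\Gamma$. It needs $\Gamma$ to be closed under subformulas and to contain $\sneg\beta$ for each subformula $\beta$, as $\subformulas^\sim_A(w)$ does.
\end{itemize}
The literal closure condition of Definition~\ref{def:MCS} is weaker, and under it the lemma is false. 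Take $\Gamma=\{\sneg\ActBox{a}\sneg p,\sneg p\}$, which satisfies that condition. Its only MCS is $\Gamma$ itself. $\Rmsf^\Gamma_a$ is the loop on $\Gamma$ because $\Gamma$ contains no $\ActBox{a}$-formula, and $V^\Gamma(p)=\varnothing$. So $\sneg\ActBox{a}\sneg p\in\Gamma$ is false at $\Gamma$.
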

\begin{proof}
We proceed by induction on~$\alpha\in\Lev$. The basis case of $\alpha=p$ for some $p\in\Var$ follows by Definition~\ref{def:MCSmodels}. The cases of $\alpha=\beta\wedge\gamma$ and $\alpha=\sneg\beta$ can be obtained by a~simple application of the induction hypotheses.

Now, consider $\alpha=\knows_A\beta$. If $\knows_A\beta\in\Xi$, then for every $\Theta$ s.t.\ $\Emsf^\Gamma_A(\Xi,\Theta)$ holds, we have that $\beta\in\Theta$, whence by the induction hypothesis, we have that $\Efrak^\Gamma,\Theta\vDash\beta$ for all $\Theta$ s.t.\ $\Emsf^\Gamma_A(\Xi,\Theta)$. Hence, $\Efrak^\Gamma,\Xi\vDash\knows_A\beta$.

Conversely, assume that $\Efrak^\Gamma,\Xi\vDash\knows_A\beta$. Set $\Xi^A=\{\knows_A\gamma\mid\knows_A\gamma\in\Xi\}\cup\{\sneg\knows_A\gamma\mid\neg\knows_A\gamma\in\Xi\}$. We show that $\knows_A\beta\in\Xi^A$. For that observe that $\Xi^A\cup\{\sneg\beta\}$ is \emph{$\EA$-unsatisfiable}. Indeed, otherwise there would be some $\Theta\in W^\Gamma$ s.t.\ $\Xi^A\cup\{\sneg\beta\}\subseteq\Theta$. But in such case, by Definition~\ref{def:MCSmodels}, we would have that $\Emsf_A(\Xi,\Theta)$ holds and, furthermore, that $\Efrak,\Theta\nvDash\beta$, by the induction hypothesis. Now, since $\Xi^A\cup\{\sneg\beta\}$ is $\EA$-unsatisfiable, we have that $\knows_A(\bigwedge_{\xi\in\Xi^A}\xi\rightarrow\beta)$ is $\EA$-valid. Thus, $\knows_A(\bigwedge_{\xi\in\Xi^A}\xi)\rightarrow\knows_A\beta$ is $\EA$-valid. Now, observe that $\bigwedge_{\xi\in\Xi^A}\xi\rightarrow\knows_A(\bigwedge_{\xi\in\Xi^A}\xi)$ is also $\EA$-valid. Hence, $\bigwedge_{\xi\in\Xi^A}\xi\rightarrow\knows_A\beta$ is $\EA$-valid, $\Xi\cup\{\knows_A\beta\}$ is satisfiable, and $\knows_A\beta\in\Xi$, as required.

The case of $\alpha=\ActBox{a}\beta$ can be tackled in a~similar manner. If $\ActBox{a}\beta\in\Xi$, then for every $\Theta$ s.t.\ $\Rmsf^\Gamma_a(\Xi,\Theta)$ holds, we have that $\beta\in\Theta$. By the induction hypothesis, we have that $\Efrak^\Gamma,\Theta\vDash\beta$ for all $\Theta$ s.t.\ $\Rmsf^\Gamma_a(\Xi,\Theta)$. Hence, $\Efrak^\Gamma,\Xi\vDash\ActBox{a}\beta$.

Conversely, assume that $\Efrak^\Gamma,\Xi\vDash\ActBox{a}\beta$. Set $\Xi^a=\{\gamma\mid\ActBox{a}\gamma\in\Xi\}$. We show that $\Xi^a\cup\{\sneg\beta\}$ is $\EA$-unsatisfiable. Assume for contradiction that it is \emph{$\EA$-satisfiable}. Then, there would be some $\Theta\in W^\Gamma$ s.t.\ $\Xi^a\cup\{\sneg\beta\}\subseteq\Theta$. But by Definition~\ref{def:MCSmodels}, we would have that $\Rmsf^\Gamma_A(\Xi,\Theta)$ holds. Additionally, by the induction hypothesis, $\Efrak^\Gamma,\Theta\nvDash\beta$. Hence, $\Efrak^\Gamma,\Xi\nvDash\ActBox{a}\beta$, contrary to the assumption. It follows that $\Xi^a\cup\{\sneg\beta\}$ is $\EA$-unsatisfiable. Hence, $\bigwedge_{\xi\in\Xi^a}\xi\rightarrow\beta$ is $\EA$-valid, and so is $\bigwedge_{\xi\in\Xi^a}\ActBox{a}\xi\rightarrow\ActBox{a}\beta$. Thus, $\Xi\cup\{\ActBox{a}\beta\}$ is $\EA$-satisfiable and $\ActBox{a}\beta\in\Xi$.
\end{proof}
\generalcasedecidability*
\begin{proof}
We begin by noting that tableaux for finite sets of formulas always terminate since all rules have the branching factor of at most~$2$ and decompose formulas into their subformulas. We construct a~tableau for $\{w^0_1{:}\phi\geq1\}$ in a~standard way. We begin with propositional rules working with branches in a~\emph{depth-first} manner. Once no more propositional rules are applicable, we begin applying rules for modalities as follows. First, we pick \emph{one} constraint $w_0{:}\ActBox{a}\chi\leq P$ or $w_0{:}\knows_A\chi\leq P$ and apply $\Box_\leq$ or $\knows_\leq$, respectively. This generates a~new state, say, $w^1_{1,1}$ and a~relational term $w_0\Rmsf_aw^1_{1,1}$ or $w^1_{1,1}\in\cluster^A$ (depending on the rule we applied). Using this new state, we apply rules $\Box_\geq$ or $\knows_\geq$ that utilise it as a premise. In that new state, we then apply propositional rules and repeat the process.

It is clear that in at most $\lmc(\phi)$ repetitions we will generate a~state, say $w^m_{1,1}$ with $m\leq\lmc(\phi)$, s.t.\ all constraints $w^m_{1,1}{:}\psi\triangledown P$ are propositional (i.e., do not contain modalities on the \emph{outer layer}). Then we apply propositional rules until all constraints are decomposed into the ones of the form $w^m_{1,1}{:}\Prob_A(\alpha)\triangledown Q$ with $\alpha\in\Lev$. This produces the following system of polynomial inequalities that corresponds to the current branch~$\Bmc$ and state~$w^m_{1,1}$ (below, $\psi$'s are complex formulas s.t.\ $w^m_{1,1}{:}\psi$ occurs on~$\Bmc$, and $P$'s and $Q$'s are polynomials over~$[0,1]$ with integer coefficients):
\begin{align}\label{equ:branch}
\left\{x_{w^m_{1,1}{:}\psi}\triangledown P\mid(w^m_{1,1}{:}\psi\triangledown P){\in}\Bmc\right\}\cup\left\{x_{w^m_{1,1}{:}\Prob_A(\alpha)}\triangledown Q\mid(w^m_{1,1}{:}\Prob_A(\alpha)\triangledown Q){\in}\Bmc\right\}\cup\left\{Q'{\leq}Q''\mid(w^m_{1,1}{:}Q'{\leq}Q''){\in}\Bmc\right\}\tag{$\Bmc^\Imbb_{w^m_{1,1}}$}
\end{align}

For every $A\in\Ag$, we define
\begin{align*}
\atom_A(w^m_{1,1})&=\{\alpha\mid w{:}\Prob_A(\alpha)\triangledown P\in\Bmc\}\\
\subformulas^\sim_A(w^m_{1,1})&=\{\beta\mid\exists\alpha\in\atom_A(w^m_{1,1}):\beta\text{ is a subformula of }\alpha\}\cup\{\neg\beta\mid\exists\alpha\in\atom_A(w^m_{1,1}):\beta\text{ is a subformula of }\alpha\}
\end{align*}
Now, consider an MCS event model for $\subformulas^\sim_A(w^m_{1,1})$ (recall Definition~\ref{def:MCSmodels}) and denote it with $\Xfrak_{w^m_{1,1},A}$. Observe that~$\Xfrak_{w^m_{1,1},A}$ has an at most exponential size w.r.t.~$\lmc(\phi)$.

We now need to ensure that the values of $x_{w^m_{1,1}{:}\Prob_A(\alpha)}$'s are coherent with a~measure on~$\Xfrak_{w^m_{1,1},A}$, i.e., there is a~measure $\mu_{w^m_{1,1},A}$ on~$\Xfrak_{w^m_{1,1},A}$ s.t\ $\mu_{w^m_{1,1},A}(\|\alpha\|)=x_{w^m_{1,1}{:}\Prob_A(\alpha)}$. For that, we introduce new variables $y^A_{\alpha}$ for every $\alpha\in\atom_A(w^m_{1,1})$ and $u^A_\Xi$ for every $\Xi\subseteq\subformulas^\sim_A(w^m_{1,1})$. Here, $y$'s are auxiliary variables that will stand for the values of $\Prob_A(\alpha)$'s at~$w$, and $u$'s are the values of $\mu_{w^m_{1,1},A}(\{\Xi\})$'s (i.e., we encode the \emph{probability distribution} induced by~$\mu_{w^m_{1,1},A}$). Furthermore, we set $a_{\Xi,\alpha}=1$ if $\alpha\in\Xi$ and $a_{\Xi,\alpha}=0$, otherwise. Now, we consider the following system of linear equations:
\begin{align}\label{equ:biglineqappendix}
\underbrace{\left\{\sum_{\Xi\subseteq\subformulas^\sim_A(w)}u^A_\Xi=1\right\}}_{\mathtt{add}}\cup\underbrace{\left\{\sum_{\Xi\subseteq\subformulas^\sim_A(w)}(a_{\Xi,\alpha}\cdot u^A_\Xi)=y^A_\alpha\mid\alpha\in\atom_A(w)\right\}}_{\mathtt{form}}\tag{$\mathsf{big}^A_{w^m_{1,1}}$}
\end{align}
In this system, $\mathtt{add}$ ensures that the measures of all states in~$\Xfrak_{w,A}$, and $\mathtt{form}$ guarantees that $\Prob_A(\alpha)$'s are obtained as the sums of measures of states where $\alpha$'s are true. One can see that \eqref{equ:biglineqappendix}~contains $|\atom_A(w)|+1$ equations

Thus, applying~\cite[Theorem~9.3]{Chvatal1983}, we can guess a~list $\Lmsf$ containing $|\atom_A(w^m_{1,1})|+1$ variables $u^A_\Xi$ s.t.\ $\Xi$'s~are maximal consistent subsets of $\subformulas^\sim_A(w^m_{1,1})$ and for every $\alpha\in\atom_A(w^m_{1,1})$, there is some~$\Xi$ s.t.~$\alpha\in\Xi$. Then we set $u^A_\Xi=0$ for every $u^A_\Xi\notin\Lmsf$. This gives us the following system of equations whose size is \emph{polynomial} w.r.t.\ $\atom_A(w^m_{1,1})$:
\begin{align}\label{equ:smalllineqappendix}
\left\{\sum_{u^A_\Xi\in\Lmsf}u^A_\Xi=1\right\}\cup\left\{\sum_{u^A_\Xi\in\Lmsf}(a_{\Xi,\alpha}\cdot u^A_\Xi)=y^A_\alpha\mid\alpha\in\atom_A(w)\right\}\tag{$\mathsf{small}^A_{w^m_{1,1}}$}
\end{align}
Observe that it takes polynomial space to produce \eqref{equ:smalllineqappendix}. First, the values of a~given~$a$ can be determined in linear time w.r.t.~$\lmc(\phi)$. Second, to check that a~given $\Xi\subseteq\subformulas^\sim_A(w^m_{1,1})$ is a~maximal consistent subset, it suffices to check that $\Xi$ is satisfiable but $\Xi\cup\{\alpha\}$ is \emph{unsatisfiable} for every $\alpha\notin\Xi$. Since formulas in $\subformulas^\sim_A(w^m_{1,1})$ contain only $\sfive$ and $\Kmbf$ modalities, it follows that these checks require only polynomial space.

We repeat this process, obtaining \eqref{equ:smalllineqappendix}'s for every $A\in\Ag$. Then, we consider the following system of polynomial inequalities:
\begin{align*}
\Smsf_{w^m_{1,1}}&=\Bmc^\Imbb_w\cup\bigcup_{A\in\Ag}\eqref{equ:smalllineqappendix}\cup\bigcup_{\myoverset{\alpha\in\atom_A(w)}{A\in\Ag}}\left\{y^A_\alpha=x_{w^m_{1,1}{:}\Prob_A(\alpha)}\right\}
\end{align*}
Note that the size of~$\Smsf_{w^m_{1,1}}$ is polynomial in~$\lmc(\phi)$. Thus, it takes polynomial space to determine whether $\Smsf_{w^m_{1,1}}$~has solutions. If it \emph{does not}, then we close~$\Bmc$, and pick the next branch in the tableau. If \emph{it does}, we delete~\eqref{equ:branch} and~\eqref{equ:smalllineq}, mark the constraint that produced it (i.e., $w^{m-1}_{1,1}{:}\ActBox{a}\chi\leq Q$ or $w^{m-1}_{1,1}{:}\knows_A\chi\leq Q$) as ‘safe’ and also delete its associated relational term ($w^{m-1}_{1,1}\Rmsf_aw^m_{1,1}$ or $w^m_{1,1}\in\cluster^A$, depending on whether we had applied an action rule or an epistemic rule). We then pick the next constraint in $w^{m-1}_{1,1}$ that can produce a~new state (say, $w^{m-1}_{1,1}{:}\ActBox{b}\tau\leq Q'$) and repeat the process. Once all constraints of the form $w^{m-1}_{1,1}{:}\ActBox{a}\chi'\leq Q'$ and $w^{m-1}_{1,1}{:}\knows_A\chi'\leq Q'$ are marked safe, we guess $\mathsf{small}^A_{w^{m-1}_{1,1}}$ as shown above and consider $\Smsf_{w^{m-1}_{1,1}}$. Note, furthermore, that if all probabilistic atoms occurring in $w^{m-1}_{1,1}$ are in the scope of modalities, it suffices to check that $\Bmc^\Imbb_{w^{m-1}_{1,1}}$ has solutions. We repeat the process until either $w^0_1$ is marked safe (in which case, $\phi$ is satisfiable) or all branches of our tableau are closed (whence, $\phi$ is unsatisfiable). Finally, recall that the depth of the model is bounded from above by $\lmc(\phi)$; moreover, each state~$w$ contains only $\Bmc^\Imbb_w$ and $\Smsf_w$, whose sizes are polynomial w.r.t.\ $\lmc(\phi)$. Thus, we need polynomial space to execute the procedure. The result follows.
\end{proof}
\section{Proofs of Section~\ref{sec:polynomial}}
Before proceeding to the proofs, let us introduce the following notational convention.
\begin{convention}\label{conv:RboxRlozenge}
Given a~\emph{sequence} of box-like modalities, $\Rmsf_\blacksquare$ is the \emph{composition} of relations that correspond to modalities in that sequence. Similarly, if $\blacklozenge$~is a~sequence of diamond-like modalities, $\Rmsf_\blacklozenge$ is the composition of relations corresponding to modalities in~$\blacklozenge$.  E.g., if $\blacksquare=\ActBox{a}\knows_B\ActBox{a}$, then $\Rmsf_\blacksquare=\Rmsf_a\circ\Emsf_B\circ\Rmsf_a$. Note that $\Rmsf_\blacksquare=\Rmsf_\blacklozenge$ when modalities in~$\blacksquare$ and~$\blacklozenge$ are over the same sequence of relations. Continuing our example, setting $\blacklozenge=\ActLozenge{a}\knowsLozenge{B}\ActLozenge{b}$ implies $\Rmsf_\blacklozenge=\Rmsf_a\circ\Emsf_B\circ\Rmsf_a=\Rmsf_\blacksquare$.
\end{convention}
\subsection{Proofs of Section~\ref{ssec:UPR}}
In this section, we will prove Theorem~\ref{theorem:UPRpolynomial}. Our plan is as follows. First, we will prove the result for the \emph{outer-layer counterparts} of~$\uprobrule$'s. Then, we will lift it to the probabilistic case. Let us now introduce \emph{universal modal rules}.
\begin{definition}[Universal modal rules]\label{def:UMR}
A \emph{universal modal rule} ($\umodrule$) is
\begin{itemize}[noitemsep,topsep=0pt]
\item either a~formula of the form $\bigodot^n_{i=1}\nu_i\rightarrow\nu'$ s.t.\ $\nu$'s are \emph{proper} $\blacksquare$-literals;
\item or a~formula of the form $\bigodot^n_{i=1}\nu_i\rightarrow\zero$ s.t.\ $\nu$'s are $\blacksquare$-literals.
\end{itemize}
A~\emph{$\umodrule$ theory} is a~set of $\Lmod$-formulas containing $\umodrule$'s. For simplicity, we will write rules of the form $\nu\rightarrow\zero$ as $\blacklozenge$-literals.
\end{definition}
We abuse the notation slightly and assume that $\umodrule$'s are formulated in~$\Lmod$ and interpreted over~$\EALukProd$-models. Furthermore, we observe from Remark~\ref{rem:fewermodels} that $\umodrule$'s have the witnessed model property. Notice also that $\umodrule$'s can be equivalently represented as follows:
\begin{align*}
\bigodot^n_{i=1}\nu_i\rightarrow\nu'&\equiv\bigoplus^n_{i=1}\pi_i\oplus\nu'&\bigodot^n_{i=1}\nu_i\rightarrow\zero&\equiv\bigoplus^n_{i=1}\pi_i\tag{$\neg\nu_i\equiv\pi_i$}
\end{align*}
$\umodrule$'s represented with $\oplus$ are called \emph{clauses}.

Let us now show that the satisfiability of sets of $\uml$'s and $\upl$'s is polynomially decidable. In addition, we establish that the entailment of a~$\uml$ from a~set of~$\uml$'s (and likewise, the entailment of a~$\upl$ from a~set of~$\upl$'s) can also be verified in polynomial time.
\begin{lemma}\label{lemma:UMtermpolynomial}
Given a~set of $\uml$'s $\Lambda$ and a~$\uml$~$\lambda'$, it takes polynomial time to verify whether
\begin{enumerate}[noitemsep,topsep=0pt]
\item $\Lambda$ is $\EALukProd$-satisfiable;
\item $\Lambda\models_{\EALukProd}\lambda'$.
\end{enumerate}
\end{lemma}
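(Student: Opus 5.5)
The plan is to reduce both items to a polynomial-size ``canonical'' witnessed model, built from the $\blacklozenge$-literals of~$\Lambda$ and then constrained by its $\blacksquare$-literals. Two observations drive this. First, a $\uml$ is just $\blacksquare\ell$ or $\blacklozenge\ell$ with $\ell\in\{p,\neg p\}$, and we want value~$1$ for it at the root. Second, by the witnessed model property for $\{\bullet,\rightarrow_\Pi\}$-free formulas (Remark~\ref{rem:WMP}, via Remark~\ref{rem:fewermodels}), we may assume witnessed models. In such a model $\blacklozenge\ell$ has value~$1$ at~$w$ iff some $\Rmsf_\blacklozenge$-successor has $\ell$ at value~$1$ (Convention~\ref{conv:RboxRlozenge}). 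Likewise $\blacksquare\ell$ has value~$1$ iff every $\Rmsf_\blacksquare$-successor has $\ell$ at value~$1$. Since $\ell$ is a literal, ``$\ell$ has value~$1$'' just says $v(p,s)=1$ or $v(p,s)=0$. Hence all constraints are crisp, and the fuzzy character only matters through the freedom to give unconstrained variables intermediate values such as~$\tfrac12$.

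For item~1, the key steps are these. (i) For every $\blacklozenge$-literal $\blacklozenge\ell\in\Lambda$, attach to a root~$w$ a fresh chain of states, one per modality in~$\blacklozenge$. An action step $\ActLozenge{a}$ gives a $\Rmsf_a$-edge. An epistemic step $\knowsLozenge{A}$ places the new state in the $\Emsf_A$-class of its predecessor. Every state additionally gets a fresh singleton class for every other agent, as the rules $\knows_\leq$ and $\Box_\leq$ of Fig.~\ref{fig:tableauxrules} do. The resulting frame $\Mfrak_\Lambda$ has size linear in $\lmc[\Lambda]$. (ii) For each $\blacksquare\ell\in\Lambda$, compute the set of states of $\Mfrak_\Lambda$ reachable from~$w$ via $\Rmsf_\blacksquare$. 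This is a product-automaton reachability computation over pairs (state, position in the modality sequence), which is polynomial. Reflexivity and transitivity of the $\Emsf_A$ are handled automatically because the classes are explicit. (iii) Force $\ell$ to value~$1$ at each such state, and force the endpoint literal of each $\blacklozenge$-chain. Declare $\Lambda$ unsatisfiable iff some state gets both $p\mapsto1$ and $p\mapsto0$. Otherwise give unforced variables value~$\tfrac12$, which yields a model of~$\Lambda$. For the converse, take any witnessed model of~$\Lambda$, pick the witnessing paths, and map $\Mfrak_\Lambda$ into it homomorphically. Box constraints pull back along this map, so a clash in $\Mfrak_\Lambda$ would be a clash in the given model.

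For item~2, I would show that $\Lambda\models\lambda'$ iff $\Lambda\cup\{\lambda'^{\partial}\}$ is unsatisfiable, where $\lambda'^{\partial}$ is the dual $\uml$. For $\lambda'=\blacksquare\ell$ the dual is $\blacklozenge\bar\ell$, and for $\lambda'=\blacklozenge\ell$ it is $\blacksquare\bar\ell$, with $\bar p=\neg p$ and $\overline{\neg p}=p$. Then item~1 applies. The right-to-left direction is immediate. For left-to-right, suppose a witnessed model of~$\Lambda$ gives $\lambda'$ value ${<}1$. If $\lambda'=\blacksquare\ell$, a witnessing path ends at a state~$s$ where $\ell<1$. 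Add that path to the canonical construction. In $\Mfrak_{\Lambda}$ extended by that path, $\ell$ at~$s$ is not forced to~$1$, since the homomorphism argument shows any forcing would transfer. So we may set $\bar\ell$ to~$1$ there without a clash, which satisfies $\blacklozenge\bar\ell$. The case $\lambda'=\blacklozenge\ell$ is symmetric: all $\Rmsf_\blacklozenge$-successors have $\ell<1$, so none is forced to $\ell=1$. Setting unforced variables to make $\bar\ell$ true at those states satisfies $\blacksquare\bar\ell$.

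The main obstacle I expect is the homomorphism step that justifies the canonical model in the epistemic case. The $\Emsf_A$ are equivalence relations, so $\knows_A$-steps in a box sequence can ``stay put'' or move inside a class. I must check that reachability in $\Mfrak_\Lambda$ under $\Rmsf_\blacksquare$ maps into reachability in any model realising the chosen witnesses. I also need to check that the singleton classes added for other agents create no spurious clashes, nor miss real ones. I would handle this by showing that the map sends classes into classes and $\Rmsf_a$-edges to $\Rmsf_a$-edges, so composed relations are preserved. Everything else is routine bookkeeping.
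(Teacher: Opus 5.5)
Your proposal is correct. Your canonical model $\Mfrak_\Lambda$ is exactly the model read off the single open branch of the paper's tableau: your $\blacklozenge$-chains are what $\Box_\leq$/$\knows_\leq$ produce from $\neg\blacksquare\neg\ell$, and your reachability propagation is exhaustive application of $\Box_\geq$/$\knows_\geq$.

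The difference is in how tractability and correctness are certified. The paper builds this one-branch tableau and notes that it has at most as many states as there are $\blacklozenge$'s in $\Lambda$. It then decides the resulting polynomial-size linear system by linear programming, implicitly relying on Theorem~\ref{theorem:tableauxcompleteness}. Item~2 uses the same tableau with the extra constraints $w{:}\lambda'\leq z$, $z<1$, so a strict inequality goes into the LP.

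You take three different steps instead:
\begin{enumerate}
\item You observe that every constraint is crisp, so feasibility reduces to a clash check.
\item You prove completeness by a homomorphism into an arbitrary witnessed model, not via the tableau completeness theorem.
\item You handle item~2 by reducing entailment to unsatisfiability of $\Lambda\cup\{\lambda'^{\partial}\}$.
\end{enumerate}
That reduction needs the fact that a countermodel with $\lambda'<1$ can be turned into one with $\lambda'=0$. This is essentially the crispness property the paper proves only later, by valuation rounding, in Lemma~\ref{lemma:classicalsatisfiabilityreductionlozenge}; your homomorphism transfer gives it directly.

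What each route buys: yours is self-contained and LP-free. If you set unforced variables to $0$ rather than $\tfrac12$, it also directly yields the classical models that the paper later says this lemma provides. The paper's route is shorter given the existing calculus and reuses the machinery of the $\pspace$ procedure.

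One small slip: in item~2 your direction labels are swapped. The immediate direction is that $\Lambda\models\lambda'$ makes $\Lambda\cup\{\lambda'^{\partial}\}$ unsatisfiable, since a model of the latter gives $\lambda'$ value $0$. Your argument starting from a model with $\lambda'<1$ is the contrapositive of the converse. The content is unaffected.
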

\begin{proof}
For the first item, we construct a~tableau for $\{w{:}\lambda{\geq}y,y{=}1\mid\lambda\in\Lambda\}$ (we rewrite $\blacklozenge l$ as $\neg\blacksquare\neg l$). We apply the rules in a~standard manner: first propositional, then rules that generate new states, and then rules that use the states on the branch. As rules for modalities and~$\neg$ do not produce branching, the tableau will contain only one branch~$\Bmc$. As one $\uml$ contains either $\blacksquare$'s or $\blacklozenge$'s but not both, $\Bmc$ will contain at most as many states as there were $\blacklozenge$'s in~$\Lambda$, and each state will contain only subformulas of the formulas in~$\Lambda$. Thus, $\Bmc^\Imbb$ will be of polynomial size w.r.t.\ the size of~$\Lambda$ and will only contain linear inequalities. Thus, it can be solved in polynomial time. For the second item, the reasoning is similar, but we construct a~tableau for $\{w{:}\lambda{\geq}y,y{=}1\mid\lambda\in\Lambda\}\cup\{w{:}\lambda'{\leq}z,z{<}1\}$.
\end{proof}

\begin{lemma}\label{lemma:UPLentailmentpolynomial}
Given a~set of $\upl$'s $\Theta$ and a~$\upl$~$\theta'$, it takes polynomial time to verify whether
\begin{enumerate}[noitemsep,topsep=0pt]
\item $\Theta$ is $\prea$-satisfiable;
\item $\Theta\models_{\prea}\theta'$.
\end{enumerate}
\end{lemma}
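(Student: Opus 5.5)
The plan is to follow the proof of Lemma~\ref{lemma:UMtermpolynomial} on the outer layer, and to reduce the remaining probabilistic constraints at each generated state to a classical satisfiability check for a set of (negated) uniform modal literals. By Remark~\ref{rem:WMP}, $\upl$'s contain neither $\bullet$ nor $\rightarrow_\Pi$, so we may work with witnessed, hence finitely branching, models and use the rules of Fig.~\ref{fig:tableauxrules}.

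\emph{Step 1 (outer layer).} For item~1, build a tableau for $\{w{:}\theta\geq y,\,y=1\mid\theta\in\Theta\}$. For item~2, add $\{w{:}\theta'\leq z,\,z<1\}$. Rewrite $\blacklozenge$-prefixes as $\neg\blacksquare\neg$ and apply $\neg$- and modal rules. These rules do not branch, so there is a single branch. Each $\upl$ is uniform on the outer layer: a $\blacklozenge^\Prob$-literal creates exactly one chain of fresh states, and a $\blacksquare^\Prob$-literal only propagates to existing states. Hence the branch has at most linearly many states, and every constraint that reaches a state~$w$ has the form $w{:}\Prob_A(\lambda)\geq 1$, $w{:}\Prob_A(\lambda)\leq 0$, or (only along the path of~$\theta'$) a strict bound $\Prob_A(\lambda')<1$ or $\Prob_A(\lambda')>0$. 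As in the proof of Theorem~\ref{theorem:generalcasedecidability}, the outer branch is consistent iff for every state~$w$ and agent~$A$ there is a probability measure realising the constraints on the atoms $\Prob_A(\cdot)$ at~$w$. Proposition~\ref{prop:SIequivalence} lets us choose the event model freely.

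\emph{Step 2 (coherence check).} Fix $w$ and $A$, and let $\Lambda^+$, $\Lambda^-$ be the inner literals with constraints $\geq 1$ and $\leq 0$, respectively. The claim is that a suitable finitely additive measure exists iff the set $\Lambda^+\cup\{\sneg\lambda\mid\lambda\in\Lambda^-\}$, together with $\sneg\lambda'$ (if $\Prob_A(\lambda')<1$ is required) or $\lambda'$ (if $\Prob_A(\lambda')>0$ is required), is $\EA$-satisfiable.
\begin{itemize}
\item For the ``only if'' direction: finitely many events of measure~$1$ intersect in an event of measure~$1$, and intersecting that event with an event of positive measure leaves a nonempty set.
\item For the ``if'' direction: take a Dirac measure at a witnessing point.
\end{itemize}
Since $\sneg$ of a $\neg$-free $\uml$ is again a $\uml$ of the dual kind (a $\blacksquare$-literal becomes a $\blacklozenge$-literal over $\neg p$, and vice versa), this check is a classical satisfiability problem for a set of $\uml$'s.

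\emph{Step 3 (polynomiality of the inner check).} This is where I expect the main obstacle, because event satisfiability over $\Kmbf$ and $\sfive$ is in general $\pspace$-hard. The point to exploit is again uniformity. Run a classical labelled tableau for the set of $\uml$'s. Each $\blacklozenge$-literal generates one fresh chain of states (for $\knowsLozenge{A}$, a new member of the current $\Emsf_A$-cluster). Each $\blacksquare$-literal only propagates along existing edges. There is no branching, since there are no disjunctions. The resulting structure is therefore of linear size. Closure occurs exactly when some state receives both $p$ and $\neg p$, which can be checked in polynomial time. The $\sfive$ clusters need care: a $\knows_A$-literal must propagate to all members of the cluster, including the root. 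This is still a polynomial fixpoint over a linear number of states. Combining Steps~1--3, both items reduce to polynomially many polynomial-time checks, which proves the lemma.
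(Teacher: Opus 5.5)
Your proposal is correct and follows essentially the same route as the paper: a single non-branching outer tableau for $\Theta$ (plus $\theta'\leq z$, $z<1$ for item~2). This is followed, for each generated state and agent, by a classical satisfiability check of $\Lambda_{w',A}$ extended with $\lambda'$ or $\sneg\lambda'$, and Dirac measures on a disjoint union of the witnessing event models then give the required model. The only difference is that you give more detail: you justify the ``only if'' direction of the coherence step (measure-one events intersected with a positive-measure event are nonempty) and explain why the uniform inner tableau runs in polynomial time, both of which the paper merely asserts.
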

\begin{proof}
We begin with Item~1. Let $\Theta=\{\theta_1,\ldots,\theta_n\}$. We construct a~tableau for $\Theta^\Tmc=\{w{:}\theta{\geq}y,y{=}1\mid\theta\in\Theta\}$. From Lemma~\ref{lemma:UMtermpolynomial}, we know that it will take polynomial time to build it and check whether it is closed. Furthermore, one can see that the tableau will have only one branch~$\Bmc$ and (if open) will induce a~model that evaluates probabilistic atoms over~$\{0,1\}$. If the tableau is open, we proceed as follows: for every state $w'$ occurring on~$\Bmc$ and for every agent~$A$ s.t.\ $w'{:}\Prob_A(\lambda)\triangledown P$ occurs on~$\Bmc$, construct a~pointed event model $\langle\Mfrak_{w',A},x_{w',A}\rangle$ for the following set of formulas:
\begin{align}\label{equ:zeroonemodel}
\Lambda_{w',A}&=\{\lambda\mid w'{:}\Prob_A(\lambda)\geq1\text{ occurs on~}\Bmc\}\cup\{\neg\lambda'\mid w'{:}\Prob_A(\lambda')\leq0\text{ occurs on~}\Bmc\}
\end{align}
Note from Fig.~\ref{fig:tableauxrules} that we can reduce polynomials occurring on~$\Bmc$ to either~$1$ or~$0$ in polynomial time because $c=1$ (by the construction of the tableau) is the only constant that can occur on~$\Bmc$, and we only use the $1-P$ operation. Note furthermore, that if $\Lambda_{w',A}$ is classically unsatisfiable, then $\Theta$ is $\prea$-unsatisfiable.

Since $\Lambda_{w',A}$'s contain only \emph{uniform} literals, it is clear that it takes polynomial time to construct a~classical event model for them (or verify that no model exists). It now remains to define a~$\prea$-model for $\Theta$. We set $\Mfrak=\langle W,\Emsf,\Rmsf,\Xfrak,\mu,V\rangle$ as follows: $W=\{w'\mid w'\text{ occurs on~}\Bmc\}$; $w'\Emsf_Aw''$ iff $w'{\in}\cluster^A$ and $w''{\in}\cluster^A$ occur on~$\Bmc$ for some epistemic relational label $\cluster^A$ (recall Definition~\ref{def:relationalterms}); $w'\Rmsf_aw''$ iff $w'\Rmsf_aw''$ occurs on~$\Bmc$; $\langle\Xfrak,V\rangle=\biguplus_{\overset{w'\in W}{A\in\Ag}}\Mfrak_{w',A}$; $\mu_{w',A}(\{x\})=1$ if $x=x_{w',A}$ and $\mu(\{x\})=0$, otherwise. It is clear that $\Imc_\Mfrak(\theta,w)=1$ (with $w$ being the state label occurring in~$\Theta^\Tmc$) for all $\theta\in\Theta$.

For Item~2, we construct a~tableau for $\{w{:}\theta{\geq}y,y{=}1\mid\theta\in\Theta\}\cup\{w{:}\theta'{\leq}z,z<1\}$. Again, from Lemma~\ref{lemma:UMtermpolynomial}, it follows that the tableau can be constructed in polynomial time. Observe also that the application of tableaux rules to $w{:}\theta'{\leq}d$ will produce either $s{:}\Prob_{A'}(\lambda'){<}1$ (if $\theta'$ \emph{does not} contain~$\neg$) or $s{:}\Prob_{A'}(\lambda'){>}0$ (if it \emph{does}). Hence, we need to construct pointed event models $\langle\Mfrak'_{w',A},x'_{w',A}\rangle$ for the following sets of formulas for each~$w'$ and~$A$ occurring on~$\Bmc$:
\begin{align*}
\Lambda'_{w',A}&=\Lambda_{w',A}\cup\{\lambda'\mid w'{:}\Prob_A(\lambda'){>}0\text{ occurs on~}\Bmc\}\cup\{\neg\lambda'\mid w'{:}\Prob_A(\lambda'){<}1\text{ occurs on~}\Bmc\}
\end{align*}
Just as in the previous case, we can build models for $\Lambda'_{w',A}$ (or establish that they do not exist) in polynomial time. We note briefly that if there is \emph{no} model for $\Lambda'_{w',A}$, this means that $\Lambda_{w',A}\models_{\sfive+\Kmbf}\lambda'$ ($\Lambda_{w',A}\models_{\sfive+\Kmbf}\neg\lambda'$, respectively). In this case, there cannot be an event model $\Xfrak$ and a~measure $\mu$ on it s.t.\ $\mu(\|\lambda\|_\Xfrak)=1$ for every $\lambda\in\Lambda_{w',A}$ but $\mu(\|\lambda'\|_\Xfrak)<1$ ($\mu(\|\lambda'\|_\Xfrak)>0$, respectively). Finally, the model witnessing $\Theta\not\models_{\prea}\theta'$ can be built in polynomial time.
\end{proof}

Next, we show that entailment between modal literals can be decided on \emph{classical event models}, i.e., event models s.t.\ $V(p,w)\in\{0,1\}$ for every state and variable. In particular, it will follow that $\Lambda\models_{\EALukProd}\neg\pi$ is equivalent to $\Lambda,\lambda\models_{\EALukProd}\zero$. Note that this is not the case for arbitrary (even propositional) formulas:
\begin{align*}
q,\neg\neg\neg((p\rightarrow\neg p)\rightarrow\neg p)\models_{\EALukProd}\zero&&\text{but}&&q\not\models_{\EALukProd}\neg\neg((p\rightarrow\neg p)\rightarrow\neg p)
\end{align*}
\begin{lemma}\label{lemma:classicalsatisfiabilityreductionlozenge}
Let $\Lambda$ be a~set of $\uml$'s and $\pi$ a~$\blacklozenge$-literal s.t.\ $\Lambda\not\models_{\EALukProd}\neg\pi$. Then $\Lambda,\pi\not\models_{\EALukProd}\zero$ and $\Lambda\cup\{\pi\}$ is satisfiable in a~(classical) event model.
\end{lemma}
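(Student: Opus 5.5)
We will go through the witnessed-model semantics and the one-branch tableau of Lemma~\ref{lemma:UMtermpolynomial}, and then ``crispify'' the resulting fuzzy model. Since $\Lambda\not\models_{\EALukProd}\neg\pi$, there is a (witnessed, finitely branching) $\EALukProd$-model $\Mfrak$ and a state $w$ with $\Imc_\Mfrak(\lambda,w)=1$ for all $\lambda\in\Lambda$ and $\Imc_\Mfrak(\pi,w)>0$. Equivalently, the single-branch tableau for
\[
\{w{:}\lambda\geq 1\mid\lambda\in\Lambda\}\cup\{w{:}\neg\pi\leq z,\ z<1\}
\]
is open. Rewriting $\pi=\blacklozenge l$ as $\neg\blacksquare\neg l$, the constraint on $\pi$ becomes $w{:}\blacksquare\neg l\leq 1-\epsilon$ for some $\epsilon>0$. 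The $\Box_\leq$/$\knows_\leq$ rules then generate a chain of fresh states along $\Rmsf_\blacklozenge$ ending in a state $s$ with $s{:}l\geq\epsilon$. The constraints coming from $\Lambda$ are of two shapes: box-literals propagate lower bounds $\geq 1$ on literals, and diamond-literals generate their own chains with terminal constraints $\geq 1$. Every constraint on the branch therefore has the form $u{:}\psi\geq c$ or $u{:}\psi\leq 1-c$ with $c\in\{1,\epsilon\}$, applied to subformulas of uniform literals.

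The key step is to turn this into a classical model. We take the realising model $\Mfrak$ of the open branch from Theorem~\ref{theorem:tableauxcompleteness} and define a classical valuation $v'(p,u)=1$ iff $v(p,u)>0$. We then show by induction on uniform literals that whenever $\Imc_\Mfrak(\nu,u)=1$ we have $\Mfrak',u\vDash\nu$, and whenever $\Imc_\Mfrak(\pi',u)>0$ we have $\Mfrak',u\vDash\pi'$.

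For literals $p$ and $\neg p$ the threshold must be chosen per literal, and this is the main obstacle. A value of $p$ equal to $1$ gives $v'=1$, which is fine. A value of $\neg p$ equal to $1$ means $p=0$, so $v'(p)=0$, which is also fine. The difficulty is a state $u$ where $\Imc(p,u)>0$ and also $\Imc(\neg p,u)>0$ is required: this would need $p$ both true and false. The resolution is uniformity and the tableau shape. Because each uniform literal is a single modal chain, the only constraints at a state $u$ that demand positivity come from the one $\blacklozenge$-chain that created $u$, or from the chain of $\pi$. All other constraints at $u$ are box-propagated lower bounds of $1$. So at each generated state at most one literal is merely required to be positive, and the remaining literals are required to be $1$.

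Consistency with the branch then rules out a conflict. A positive demand on $l$ together with a box-demand of $1$ on $\bar l$ would force $l\leq 0$, which contradicts openness. So we may set $p$ true at $u$ exactly when the literals demanded there say so, and otherwise leave the value given by ``$>0$''. With this choice, every box-literal and diamond-literal holds classically at the corresponding state. In the modal step we use that the relations are unchanged, that equivalence classes stay equivalence classes, and that boxes and diamonds are monotone.

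This yields a classical event model satisfying $\Lambda\cup\{\pi\}$. Since classical models are a special case of $\EALukProd$-models in which all values are $0$ or $1$, it follows that $\Lambda,\pi\not\models_{\EALukProd}\zero$.
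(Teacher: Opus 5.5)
Your argument is correct in substance, but it takes a detour the paper avoids. The paper works directly with an arbitrary finitely branching witness and crispifies it with a \emph{global} threshold. The threshold is chosen by the polarity of the single propositional literal at the bottom of $\pi$: $V'(p,u)=1$ iff $V(p,u)>0$ when $\pi$ is $\neg$-free, and $V'(p,u)=1$ iff $V(p,u)=1$ otherwise.

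This works for the reason you also exploit. Every threshold in $(0,1]$ preserves literals of value exactly $1$: a value-$1$ occurrence of $p$ stays true, and a value-$1$ occurrence of $\neg p$ forces $V(p,u)=0$. So $\Lambda$ survives either way, and only the one merely-positive demand coming from $\pi$ dictates the choice. With the second threshold, a merely-positive $\neg p$ at the witness is read correctly, since $V(p)<1$ gives $V'(p)=0$. No tableau and no analysis of the branch is needed.

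You instead locate that demand at a fresh state $s$, via the one-branch tableau and Theorem~\ref{theorem:tableauxcompleteness}, and override the valuation only there. That costs you a dependence on the tableau machinery and an informal description of the branch. In exchange it is state-local: each $\blacklozenge$-chain on the branch gets its own fresh witness, so the same trick could handle several merely-positive $\blacklozenge$-literals of mixed polarity at once. A single global threshold applied to one given witness model need not manage that.

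Two corrections. First, the induction claim as you first state it ($\Imc_\Mfrak(\pi',u)>0$ implies $\Mfrak',u\vDash\pi'$ for all $\pi'$) is false under the $>0$ threshold for any $\pi'$ containing $\neg$; take $\neg p$ with $v(p,u)=\tfrac12$. It remains false at states other than $s$ even after your override, so it must be restricted to $\neg$-free $\pi'$, or to $\pi$ at $w$ and the states of its chain. Second, the real obstacle is not a state where both $p$ and $\neg p$ must be positive, since nothing on an open branch demands that. It is simply that the $>0$ threshold misreads a merely-positive $\neg p$, which the paper's polarity-dependent threshold already fixes.
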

\begin{proof}
Let $\Mfrak=\langle W,\Emsf,\Rmsf,V\rangle$ be an $\EALukProd$-model witnessing $\Lambda\not\models_{\EALukProd}\neg\pi$. This means that there is some $w\in W$ s.t.\ $\Imc_\Mfrak(\lambda,w)=1$ for all $\lambda\in\Lambda$ and $\Imc_\Mfrak(\pi,w)>0$. We will define a~(classical) model $\Mfrak'=\langle W,\Emsf,\Rmsf,V'\rangle$ on the same frame and show that $\Imc_{\Mfrak'}(\pi,w)=1$ and $\Imc_{\Mfrak'}(\lambda,w)=1$ for all $\lambda\in\Lambda$. We assume w.l.o.g.\ that $\Mfrak$ is finitely branching and consider two cases: (1)~$\pi$ does not contain~$\neg$ and (2)~$\pi$ contains~$\neg$.

We begin with the first case and proceed by induction on the number of modalities in~$\pi$. Define $V'$ as follows: for all $w{\in}W$, $V'(p,w){=}1$ iff $V(p,w){>}0$. We show that (a) $\Imc_{\Mfrak'}(\lambda,w){=}1$ for every $\neg$-free $\blacklozenge$-literal~$\lambda$ s.t.\ $\Imc_\Mfrak(\lambda,w)>0$ and (b) $\Imc_{\Mfrak'}(\lambda,w)=1$ for every $\uml$~$\lambda$ s.t.\ $\Imc_\Mfrak(\lambda,w)=1$. In the basis case for (a) and~(b), $\lambda=p$. The statements follow by the construction of~$V'$. Now let $\lambda=\neg p$. In this case, (a) holds vacuously, and we consider~(b). If $\Imc_\Mfrak(\neg p,w)=1$, then $V(p,w)=0$, whence $V'(p,w)=0$ and thus, $\Imc_{\Mfrak'}(\neg p,w)=1$, as required. Assume now that $\lambda=\ActLozenge{a}\pi'$ for some \emph{$\neg$-free} $\pi'$. For~(a), let $\Imc_\Mfrak(\ActLozenge{a}\pi',w)>0$. Then $\Imc_\Mfrak(\pi',w')>0$ for some $w'\in \Rmsf_a(w)$, whence by the induction hypothesis, we have that $\Imc_{\Mfrak'}(\pi',w')=1$, and thus, $\Imc_{\Mfrak'}(\ActLozenge{a}\pi',w)=1$. The reasoning for~(b) is similar.

Now let $\lambda=\ActLozenge{a}\pi'$ with $\pi'$ \emph{containing} negation. Again, (a) holds vacuously, so we consider~(b). If $\Imc_\Mfrak(\ActLozenge{a}\pi',w)=1$, then $\Imc_\Mfrak(\pi',w')=1$ for some $w'\in \Rmsf_a(w)$. By the induction hypothesis, we have $\Imc_{\Mfrak'}(\pi',w')=1$, whence $\Imc_{\Mfrak'}(\ActLozenge{a}\pi',w)=1$. The case $\lambda=\knowsLozenge{A}\pi$ can be considered in the same manner.

Finally, let $\lambda=\ActBox{a}\nu'$. We consider~(b). If $\Imc_\Mfrak(\ActBox{a}\nu',w)=1$, then $\Imc_\Mfrak(\nu',w')=1$ for all $w'\in \Rmsf_a(w)$. Hence, by the induction hypothesis, we have $\Imc_{\Mfrak'}(\nu',w')=1$ for all $w'\in \Rmsf_a(w)$. It follows that $\Imc_{\Mfrak'}(\ActBox{a}\nu',w)=1$. The case of $\lambda=\knows_A\nu'$ is similar.

Now consider the second case. Again, we prove the statement by induction on the number of modalities in~$\pi$. We define $V'$ as follows: $V'(p,w)=1$ iff $V(p,w)=1$. Just as in the previous case, we show that $\Imc_{\Mfrak'}(\lambda,w)=1$ for every $\uml$~$\lambda$ s.t.\ $\Imc_\Mfrak(\lambda,w)>0$. The basis case of $\lambda=p$ follows by construction. If $\lambda=\neg p$, then $\Imc_\Mfrak(\neg p,w)>0$ entails that $V(p,w)<1$, whence $V'(p,w)=0$ and $\Imc_{\Mfrak'}(\neg p,w)=1$, as required. If $\lambda=\ActLozenge{a}\pi'$, then $\Imc_\Mfrak(\ActLozenge{a}\pi',w)>0$ implies $\Imc_\Mfrak(\pi',w')>0$ for some $w'\in \Rmsf_a(w)$. By the induction hypothesis, we obtain $\Imc_{\Mfrak'}(\pi',w'){=}1$ for some $w'{\in}\Rmsf_a(w)$, whence \mbox{$\Imc_{\Mfrak'}(\ActLozenge{a}\pi',w)=1$.} The cases of other modalities can be handled in the same way.

It now follows that if $\Mfrak$ witnesses $\Lambda\not\models_{\EALukProd}\neg\pi$, then $\Mfrak'$ witnesses $\Lambda,\pi\not\models_{\EALukProd}\zero$.
\end{proof}

Now, we can establish our first key result.
\begin{definition}\label{def:KLukirreducibleset}
A~$\umodrule$ theory $\Gamma=\Xi\cup\Lambda$ with $\Xi=\{\kappa_1,\ldots,\kappa_m\}$ and $\Lambda=\{\lambda_1,\ldots,\lambda_n\}$ is called \emph{$\EALukProd$-irreducible} if there is no $\lambda$ occurring in~$\Xi$ s.t.\ $\Lambda\models_{\EALukProd}\neg\lambda$.
\end{definition}

Observe from Lemmas~\ref{lemma:UPLentailmentpolynomial} and~\ref{lemma:classicalsatisfiabilityreductionlozenge} that it takes polynomial time to check whether a~given $\umodrule$ theory is $\EALukProd$-irreducible.
\begin{lemma}\label{lemma:irreduciblesatisfiabilityuniversal}
Let $\Gamma=\Xi\cup\Lambda$ with $\Xi=\{\kappa_1,\ldots,\kappa_m\}$ and $\Lambda=\{\lambda_1,\ldots,\lambda_n\}$ be an $\EALukProd$-irreducible $\umodrule$ theory. Then it is $\EALukProd$-satisfiable.
\end{lemma}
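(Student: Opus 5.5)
The plan follows the sketch given after Theorem~\ref{theorem:UPRpolynomial}, now at the level of $\uml$'s and $\umodrule$'s. Write each $\kappa_j\in\Xi$ in clause form $\bigoplus_i\pi^j_i\oplus\nu'_j$ or $\bigoplus_i\pi^j_i$, where $\pi^j_i\equiv\neg\nu^j_i$. By irreducibility, $\Lambda\not\models_{\EALukProd}\neg\lambda$ for every literal $\lambda$ occurring in~$\Xi$. In particular, $\Lambda\not\models\neg\pi^j_i$ for every $\blacklozenge$-literal $\pi^j_i$ that comes from a proper $\nu^j_i$ on a left-hand side.

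\textbf{Step 1 (choice).} For every rule $\kappa_j$ whose left-hand side contains a proper $\blacksquare$-literal, fix one such $\nu^j$. Then $\pi^j=\neg\nu^j$ is a proper $\blacklozenge$-literal. The aim is a pointed $\EALukProd$-model $\langle\Mfrak,w\rangle$ with $\Imc_\Mfrak(\lambda,w)=1$ for all $\lambda\in\Lambda$ and $\Imc_\Mfrak(\pi^j,w)=1$ for every chosen $\pi^j$.

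\textbf{Step 2 (gluing).} By Lemma~\ref{lemma:classicalsatisfiabilityreductionlozenge}, each $\Lambda\cup\{\pi^j\}$ separately has a classical pointed model $\langle\Mfrak_j,w_j\rangle$. Write $\pi^j=\blacklozenge_j l_j$, where the first modality of $\blacklozenge_j$ is, say, $\ActLozenge{a}$ or $\knowsLozenge{A}$. Each $\Mfrak_j$ contains a path from $w_j$ along $\Rmsf_{\blacklozenge_j}$ to a state where $l_j$ holds. I will build a single model by taking a fresh root~$w$ and using a tableau-style construction. Run the tableau for $\{w{:}\lambda\geq 1\mid\lambda\in\Lambda\}$ together with $\{w{:}\pi^j\geq1\}$ for all~$j$. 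As in Lemma~\ref{lemma:UMtermpolynomial}, there is a single branch, no products appear, and every $\blacklozenge$ generates its own fresh chain of states. The constraints at each state are then literals forced by boxes from $\Lambda$ (the $\blacksquare$-literals propagate along chains), plus at most one diamond-chain target.

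\textbf{Step 3 (consistency).} I must show that the resulting branch is open. The potential conflict at any generated state~$s$ is between the propagated $\blacksquare$-literals from $\Lambda$ and the atom $l_j$ (or the remaining diamond chain) of the single $\pi^j$ that created~$s$. The key point is that constraints coming from \emph{different} chosen $\pi^j$'s never meet at a fresh state. For action chains this is immediate. For epistemic steps, the fresh cluster label given to each new state by $\knows_\leq$ keeps them apart as well, since $\knows_\geq$ only propagates within a shared cluster. So each potential clash involves $\Lambda$ and one $\pi^j$ only. Such a clash would already close the tableau for $\Lambda\cup\{\pi^j\}$, which contradicts $\Lambda\not\models\neg\pi^j$ (by Lemma~\ref{lemma:classicalsatisfiabilityreductionlozenge} and the soundness and completeness of Theorem~\ref{theorem:tableauxcompleteness}). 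At the root itself, $\Lambda$ is satisfiable, because $\Lambda\not\models\neg\lambda$ for some $\lambda$ whenever $\Xi\neq\varnothing$; otherwise we just use any model of~$\Lambda$. \emph{This disjointness argument is the main obstacle.} The delicate part is the $\sfive$ clusters: a diamond $\knowsLozenge{A}$ inside $\pi^j$ must be realised in the cluster of~$w$, and $\knows_A$-literals of $\Lambda$ hold throughout that cluster. So I would check that the only constraints imposed on a new cluster member are the $\knows_A$-bodies from $\Lambda$ and the tail of the one $\pi^j$ that generated it, which reduces again to the single-$j$ case. A realising model then exists by Theorem~\ref{theorem:tableauxcompleteness}, and I take it to be classical as in Lemma~\ref{lemma:classicalsatisfiabilityreductionlozenge}.

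\textbf{Step 4 (remaining rules).} Every rule with a chosen $\nu^j$ now has $\Imc(\nu^j,w)=0$. So its $\odot$-antecedent is $0$ and the rule has value~$1$. The remaining rules consist only of propositional $\blacksquare$-literals $p$ or $\neg p$ followed by $\rightarrow\zero$. Their variables do not occur in $\Lambda$ as forced literals at~$w$: by irreducibility, $\Lambda\not\models\neg p$ and $\Lambda\not\models p$ for each such literal. I therefore change the valuation at the root only, setting these variables to~$\tfrac12$. Then each such clause $\bigoplus(\text{literals of value }\tfrac12)$ with at least two literals has value $\geq1$. A single-literal clause would instead be a member of $\Lambda$ after the rewriting convention. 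The change at the root does not affect $\Lambda$ or the $\pi^j$'s, since proper literals are evaluated only at successors, provided the root is not its own $\Emsf_A$-successor relevant to these variables. To ensure this, I would keep the propositional part of the root's $\knows$-literals consistent: a variable $p$ with $\knows_A p\in\Lambda$ would give $\Lambda\models p$, and $p$ is excluded by irreducibility, so it is not among those reassigned. Hence every formula of $\Gamma$ has value~$1$ at~$w$.
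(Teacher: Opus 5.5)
Your route is the paper's. You choose one proper $\blacklozenge$-literal per clause and show that $\Lambda$ plus all chosen literals is jointly satisfiable: each literal separately via Lemma~\ref{lemma:classicalsatisfiabilityreductionlozenge}, glued along fresh witness chains. You then set the variables of the leftover purely propositional clauses to $\tfrac12$ at the root. Your combined-tableau gluing is a cleaner rendering of the paper's incremental ``add a fresh chain'' surgery, and it works. Diamond chains only produce $\leq$-constraints on boxed formulas, so different $\pi^j$'s never meet. Any $\geq$-constraint propagated inside the root's $\sfive$-clusters is already propagated from the root itself by reflexivity.

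Step~4, however, fails as written. Irreducibility gives $\Lambda\not\models_{\EALukProd}\neg\lambda$ only for the literal $\lambda$ that occurs in the clause, not for its complement. So your claim that $\Lambda$ forces neither $p$ nor $\neg p$ is false. Take $\Lambda=\{p\}$ and $\Xi=\{\neg p\odot q\rightarrow\zero\}$, i.e.\ the clause $p\oplus\neg q$. This theory is irreducible, yet resetting all variables of the remaining rules to $\tfrac12$ gives $p\in\Lambda$ the value $\tfrac12$.

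The repair uses the paper's observation that a clause containing a $\Lambda$-entailed literal is already true at the root. Reassign to $\tfrac12$ exactly the variables of propositional clauses that are not forced by $\Lambda$. Do this for all such variables, not only those of currently false clauses; otherwise a clause that is true via such a variable can drop to $\tfrac12$. By irreducibility, a clause with no $\Lambda$-entailed literal then has all its literals at $\tfrac12$, hence value $1$.

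Your reason why the root change leaves $\Lambda$ and the $\pi^j$ intact also needs fixing, since in $\sfive$ the root is always its own $\Emsf_A$-successor. What is true is the following. Every $\blacklozenge$-witness is a fresh state (the paper's ``witnessed by some $w'\neq w$''). In the tableau model the root sees itself only through $\Emsf$-steps. So a $\blacksquare$-literal of $\Lambda$ can depend on the root's own valuation only if its whole prefix consists of $\knows$'s. By reflexivity it then entails its propositional tail at the root, so that variable is never reassigned.

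Finally, like the paper, you tacitly need $\Lambda$ to be satisfiable when $\Xi=\varnothing$; for instance, $\{p,\neg p\}$ is vacuously irreducible. This is supplied in Lemma~\ref{lemma:UMRpolynomial}, which checks $\Lambda$ first.
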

\begin{proof}
Assume w.l.o.g.\ that $\kappa$'s contain at least two literals each and $\kappa$'s are written using $\oplus$'s. First, observe that $\Lambda$ is $\EALukProd$-satisfiable: otherwise, $\Lambda\models_{\EALukProd}\neg\lambda$ for every literal~$\lambda$. Now let $\langle\Mfrak,w\rangle$ be a~pointed model s.t.\ $\Imc_\Mfrak(\lambda,w)=1$ for all $\lambda\in\Lambda$. Observe from the proof of Lemma~\ref{lemma:UMtermpolynomial} that since we use tableaux to build the model of~$\Lambda$, it is \emph{classical}, that is, all variables have values in~$\{0,1\}$. Now, in every clause $\kappa\in\Xi$, pick one \emph{proper} $\blacklozenge$-literal $\pi_\kappa$.

Note that there is no $\blacklozenge$-literal $\pi_\kappa$ s.t.\ $\Lambda\models_{\EALukProd}\neg\pi_\kappa$. Thus, by Lemma~\ref{lemma:classicalsatisfiabilityreductionlozenge}, there is a~classical model satisfying $\Lambda\cup\{\pi_\kappa\}$ for each~$\pi_\kappa$. Moreover, it is clear that $\{\pi_\kappa\mid\kappa\in\Xi\}$ is also $\EA$-satisfiable. Let us now show that $\Lambda'=\Lambda\cup\{\pi_\kappa\mid\kappa\in\Xi\}$ is $\EA$-satisfiable.

We proceed as follows. Let $\pi_{\kappa_i}=\blacklozenge_il_i$ for $i\in\{1,\ldots,m\}$ with $l$'s being propositional variables or their negations. Consider a~(classical) pointed event model $\langle\Mfrak,w\rangle$ that $\EA$-satisfies $\Lambda\cup\{\blacklozenge_1l_1\}$. We now need to build an event model that $\EA$-satisfies $\Lambda\cup\{\pi_{\kappa_1},\pi_{\kappa_2}\}$. Since $\Lambda\not\models_{\EALukProd}\neg\blacklozenge_2l_2$, we can w.l.o.g.\ assume that if $\Rmsf_{\blacklozenge_2}(w)\neq\varnothing$, then there is at least one $w'\in\Rmsf_{\blacklozenge_2}$ s.t.\ $V(l_2,w')\neq0$. Since the model is classical, we have $V(l_2,w')=1$ for that state, which gives $\Imc_\Mfrak(\blacklozenge_2l_2,w')=1$, as required. If $\Rmsf_{\blacklozenge_2}(w)=\varnothing$, let $\blacklozenge_2=\blacklozenge_{2,1}\ldots\blacklozenge_{2,k}$ and add states $w_1$, \ldots, $w_k$ to~$\Mfrak$ s.t.\ $w_i\Rmsf_{\blacklozenge_{2,i}}w_{i+1}$ for $i\leq k-1$. For every $w_i\in\{w_1,\ldots,w_k\}$ s.t.\ there is some $\blacksquare'l'\in\Lambda$ with $w_i\in\Rmsf_{\blacksquare'}(w)$, additionally set $V(l',w_i)=1$. Repeating this construction for all $\pi_\kappa$'s we obtain a~classical pointed event model that $\EA$-satisfies $\Lambda'$. Call it  $\langle\Mfrak_\Xi,w\rangle$.

Now, observe that $\langle\Mfrak_\Xi,w\rangle$ satisfies~$\Lambda$ and every clause that contains at least one proper $\blacklozenge$-literal. Thus, the only clauses that are not yet satisfied by~$\langle\Mfrak_\Xi,w\rangle$ are the ones that contain only propositional literals. For those, we set $V(p,w)=\tfrac{1}{2}$ for each variable~$p$ occurring in such clauses. Note that this does not affect the value of formulas in~$\Lambda$: if $\Lambda\models_{\EALukProd}p$, then we do not need to consider propositional clauses containing~$p$, and if $\Lambda\models_{\EALukProd}\neg p$, then no clause can contain~$p$ because $\Gamma$~is irreducible (and dually for~$\neg p$). Similarly, we can w.l.o.g.\ assume that if some proper $\blacklozenge$-literal $\pi$ is true at~$w$, then its value is witnessed by some $w'\neq w$. Finally, we do not consider the values of $\blacksquare$-literals in clauses. It follows that $\Mfrak_\Xi$ (with the updated valuation of variables at~$w$) satisfies $\Gamma$.
\end{proof}

We can now show that $\EALukProd$-satisfiability of $\umodrule$ theories is polynomially decidable.
\begin{lemma}\label{lemma:UMRpolynomial}
Given a~$\umodrule$ theory~$\Gamma$, it takes polynomial time to verify whether it is $\EALukProd$-satisfiable.
\end{lemma}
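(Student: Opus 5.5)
The plan is a saturation procedure that reduces $\Gamma$ either to an $\EALukProd$-irreducible theory or to an explicit contradiction. The unit-resolution rule sketched in Section~\ref{ssec:UPR} drives it, Lemma~\ref{lemma:UPLentailmentpolynomial} (in its modal form, Lemma~\ref{lemma:UMtermpolynomial}) answers each entailment query, and Lemma~\ref{lemma:irreduciblesatisfiabilityuniversal} is invoked at the end. First we rewrite every rule of~$\Gamma$ in clause form, $\bigoplus_i\pi_i\oplus\nu'$ or $\bigoplus_i\pi_i$, using $\neg\nu_i\equiv\pi_i$. This is linear in $\lmc[\Gamma]$. We then split $\Gamma$ into $\Lambda$, the single-literal clauses, and $\Xi$, the clauses with at least two literals.

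The main loop runs as follows. Using Lemma~\ref{lemma:UMtermpolynomial}, we check whether $\Lambda$ is $\EALukProd$-satisfiable; if it is not, we answer ``unsatisfiable''. Otherwise, for every literal $\lambda$ occurring in a clause $\kappa\in\Xi$, we test whether $\Lambda\models_{\EALukProd}\neg\lambda$. By Lemma~\ref{lemma:classicalsatisfiabilityreductionlozenge} (for $\blacklozenge$-literals) and dually for $\blacksquare$-literals, this is the same as testing unsatisfiability of $\Lambda\cup\{\lambda\}$, so it is again a polynomial check via Lemma~\ref{lemma:UMtermpolynomial}. If such a $\lambda$ exists, we delete it from~$\kappa$. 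This step is sound and preserves models exactly: in any model of~$\Lambda$ at~$w$, $\lambda$ has value~$0$, so $\lambda\oplus\kappa'$ has value~$1$ iff $\kappa'$ does. If $\kappa$ becomes empty, we answer ``unsatisfiable'', since the empty clause is~$\zero$. If $\kappa$ shrinks to a single literal, we move it into~$\Lambda$. When no deletion applies, $\Gamma$ is irreducible in the sense of Definition~\ref{def:KLukirreducibleset}, and Lemma~\ref{lemma:irreduciblesatisfiabilityuniversal} yields satisfiability. Because each deletion removes a literal occurrence, there are at most $\lmc[\Gamma]$ rounds. Each round makes polynomially many polynomial-time calls, so the whole procedure runs in polynomial time.

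The step I expect to be the main obstacle is keeping the theory inside the $\umodrule$ fragment throughout, so that Lemmas~\ref{lemma:UMtermpolynomial} and~\ref{lemma:irreduciblesatisfiabilityuniversal} remain applicable. Deleting literals from a clause must leave a $\umodrule$; this should hold because subclauses of $\bigoplus\pi_i\oplus\nu'$ are again of an allowed shape. The delicate case is a clause reduced to a lone propositional $\blacksquare$-literal $\nu'$. For such a clause we must check that the side condition in Definition~\ref{def:UMR} (properness of the head) is not needed for Lemma~\ref{lemma:UMtermpolynomial}, which only requires $\Lambda$ to be a set of $\uml$'s. We must also check that $\Lambda$ only ever grows by $\uml$'s, so that the entailment tests stay within the scope of Lemma~\ref{lemma:UMtermpolynomial}. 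I would verify this carefully and, if needed, treat single-literal residues separately as additions to~$\Lambda$.
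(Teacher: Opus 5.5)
Your proposal is correct and follows essentially the same route as the paper. Both check that the unit part $\Lambda$ is satisfiable, repeatedly delete every literal $\lambda$ with $\Lambda\models\neg\lambda$ (each test polynomial via Lemma~\ref{lemma:UMtermpolynomial}), move single-literal residues into $\Lambda$, and conclude with Lemma~\ref{lemma:irreduciblesatisfiabilityuniversal} once the theory is irreducible. The paper additionally discards clauses containing a literal already entailed by $\Lambda$, which is not needed for correctness; and the obstacle you flag does not arise, because subclauses of $\umodrule$ clauses are again $\umodrule$ clauses or $\uml$'s, so $\Lambda$ only ever grows by $\uml$'s.
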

\begin{proof}
First, observe briefly that if $\Gamma$ contains only clauses with at least two literals each, then it is satisfiable. Indeed, it suffices to consider a~model with one state $w$ s.t.\ all accessibility relations are equal to $\{\langle w,w\rangle\}$ and all variables are evaluated at $\tfrac{1}{2}$. Each literal in every clause will also have value~$\tfrac{1}{2}$. Since there are at least two literals per clause, this satisfies~$\Gamma$.

Now let $\Gamma=\Xi\cup\Lambda$ with $\Xi=\{\kappa_1,\ldots,\kappa_m\}$ and $\Lambda=\{\lambda_1,\ldots,\lambda_n\}$ and $\kappa$'s having at least two literals each. We proceed as follows. First, we check that $\Lambda$ is satisfiable. By Lemma~\ref{lemma:UMtermpolynomial}, this takes us polynomial time. Then for every literal $\lambda$ occurring in~$\Xi$, we check whether (1)~$\Lambda\models_{\EALukProd}\neg\lambda$ or (2)~$\Lambda\models_{\EALukProd}\neg\lambda$. In the first case, we remove from~$\Xi$ all $\kappa$'s that contain~$\lambda$. In the second case, we remove all occurrences of~$\lambda$ from~$\Xi$. Again, this takes us polynomial time to check all literals. If at some stage we have a~clause $\lambda'\oplus\lambda''$ s.t.\ $\Lambda\models_{\EALukProd}\neg\lambda''$, then we remove occurrences of~$\lambda''$ and check that $\Lambda\cup\{\lambda'\}$ is satisfiable. We then use $\Lambda'=\Lambda\cup\{\lambda'\}$ to check the remaining literals. We proceed in this way until either (i) we encounter a~clause $\lambda_1\oplus\lambda_2$ s.t.\ $\Lambda'\models_{\EALukProd}\neg\lambda_1$ and $\Lambda',\lambda_2\models_{\EALukProd}\zero$ or (ii) we produce an $\EALukProd$-irreducible theory. In the first case, $\Gamma$ is unsatisfiable, in the second case, $\Gamma$ is satisfiable.
\end{proof}

It remains to lift our argument to universal \emph{probabilistic} rules. The next statement can be shown similarly to Lemma~\ref{lemma:classicalsatisfiabilityreductionlozenge}.
\begin{lemma}\label{lemma:classicalsatisfiabilityreductionlozengeprob}
Let $\Theta$ be a~set of $\upl$'s and $\zeta$ a~$\blacklozenge^\Prob$-literal s.t.\ $\Theta\not\models_{\prea}\neg\zeta$. Then $\Theta,\zeta\not\models_{\prea}\zero$.
\end{lemma}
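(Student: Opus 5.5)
We mirror the proof of Lemma~\ref{lemma:classicalsatisfiabilityreductionlozenge}, with the valuation of outer-layer variables replaced by a transformation of the probabilistic atoms. Fix a pointed (witnessed, w.l.o.g.\ finitely branching, by Remark~\ref{rem:WMP}) SI-model $\langle\Mfrak,w\rangle$, available by Proposition~\ref{prop:SIequivalence}, with $\Imc_\Mfrak(\theta,w)=1$ for all $\theta\in\Theta$ and $\Imc_\Mfrak(\zeta,w)>0$. We build a new model $\Mfrak'$ on the same outer frame in which every probabilistic atom $\Prob_A(\lambda)$ gets a value in $\{0,1\}$. Then $\Imc_{\Mfrak'}(\zeta,w)=1$ and $\Imc_{\Mfrak'}(\theta,w)=1$ for all $\theta\in\Theta$, so $\Theta,\zeta\not\models_{\prea}\zero$.

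The crux is to realise the $\{0,1\}$-valued atoms by genuine measures. Since every atom has the form $\Prob_A(\lambda)$ with $\lambda$ a $\neg$-free $\uml$, we proceed separately at each outer state $s$ and agent $A$.
\begin{itemize}
\item \textbf{Case (1): $\zeta$ is $\neg$-free.} We want $\Prob_A(\lambda)$ to become $1$ at $s$ whenever its old value is $>0$, and $0$ whenever its old value is $0$.
\item \textbf{Case (2): $\zeta$ contains $\neg$.} We want $\Prob_A(\lambda)$ to become $1$ iff its old value is $1$.
\end{itemize}
In each case the set $\Lambda_{s,A}$ of atoms to be made true, together with the negations of those to be made $0$, must be classically ($\EA$-)satisfiable. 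A Dirac measure on a pointed event model of $\Lambda_{s,A}$ then realises the desired values. This parallels the construction in the proof of Lemma~\ref{lemma:UPLentailmentpolynomial}: take the disjoint union of these event models as the inner frame $\Xfrak$, and concentrate $\mu'_{s,A}$ on the designated point.

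Justifying this classical satisfiability is the main obstacle.
\begin{itemize}
\item \textbf{Case (2).} The atoms with value $1$ all have measure $1$ under $\mu_{s,A}$, so their intersection has measure $1$. Any atom $\lambda'$ with $\mu_{s,A}(\|\lambda'\|)<1$ leaves positive measure on $\bigcap_i\|\lambda_i\|\setminus\|\lambda'\|$. This is not quite enough to separate all the $<1$ atoms simultaneously. However, we only need the atoms whose value is $1$ to stay $1$, and $\zeta$'s inner atom to move in the right direction. So we make true exactly the atoms with old value $1$: the Dirac point is any state in the intersection of their extensions, which is nonempty because it has measure $1$. Atoms true there may also be set to $1$; this is harmless for the monotonicity argument below, which only uses ``value $1$ is preserved'' and ``value $>0$ becomes $1$''.
\item \textbf{Case (1).} Uniform $\neg$-free $\uml$'s are either pure $\blacksquare$- or pure $\blacklozenge$-literals over positive atoms. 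A conjunction of positive $\blacklozenge$/$\blacksquare$ literals with positive atoms is always classically satisfiable, for instance in a reflexive one-point model where every variable is true. So all atoms of positive value can be made $1$ simultaneously. Atoms required to be $0$ would need $\neg\lambda$; these arise only from $\neg\Prob_A(\lambda)$ with value $1$ in $\Theta$. We handle the interaction by taking the Dirac point in a state witnessing the positive measure of the relevant atom while avoiding the null events, which is possible because a finite union of null sets is null.
\end{itemize}

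Finally, induction on the number of outer modalities shows that literals behave as required. In Case (1), $\neg$-free $\blacklozenge^\Prob$-literals with positive value become $1$, and every $\upl$ with value $1$ keeps value $1$. In Case (2), every $\upl$ with positive value becomes $1$. The modal steps are exactly as in Lemma~\ref{lemma:classicalsatisfiabilityreductionlozenge}, using the unchanged outer frame and the witnessing property for diamonds.
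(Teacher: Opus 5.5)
\textbf{Gap.} The proposal breaks down where you realise your target $\{0,1\}$-assignments by Dirac measures, and the invariants in your final paragraph fall with it.

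In Case (2) you take the Dirac point to be any element of $\bigcap\{\|\lambda\|:\mu_{s,A}(\|\lambda\|)=1\}$ and call it harmless if further atoms become $1$. It is not harmless, for two reasons. First, $\Theta$ may contain negated literals, so null atoms must stay null. Second, $\zeta=\blacklozenge'\neg\Prob_A(\lambda')$ needs $\Prob_A(\lambda')$ to drop to $0$ at its witness. Concretely, take $\Theta=\{\neg\Prob_A(p)\}$, $\zeta=\neg\Prob_A(q)$, $\mu_{w,A}(\|p\|)=0$ and $\mu_{w,A}(\|q\|)=\tfrac{1}{2}$. No atom has value $1$, so your recipe allows a point satisfying $p\wedge q$, and then both $\neg\Prob_A(p)$ and $\zeta$ get value $0$.

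In Case (1), your fallback (a point in the relevant atom's extension that avoids null events) makes only one atom $1$ per state. It therefore does not deliver your claim that every $\neg$-free $\blacklozenge^\Prob$-literal of positive value becomes $1$. That claim is in fact unattainable in general. Let $\mu_{s,A}$ put half its mass on a point with no $a$-successor and half on a point whose only $a$-successor satisfies $q\wedge\neg p$. Then $\ActBox{a}p$ and $\ActLozenge{a}q$ each get measure $\tfrac{1}{2}$ and $\ActLozenge{a}p$ gets measure $0$. But $\ActBox{a}p\wedge\ActLozenge{a}q\wedge\neg\ActLozenge{a}p$ is classically unsatisfiable, so if these atoms occur in $\Theta\cup\{\zeta\}$, no measure realises your target. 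Inner-layer atoms cannot be rounded independently the way variables are in Lemma~\ref{lemma:classicalsatisfiabilityreductionlozenge}, so that lemma is the wrong template.

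\textbf{Missing idea.} In a witnessed model, whether a $\upl$ has value $1$ at $w$ depends only on which atoms have value exactly $0$ or exactly $1$ along the relevant paths. Moreover, $\zeta$ needs only one atom at one state. The paper therefore does no global rounding. It leaves everything unchanged except at a witness $w'\in\Rmsf_{\blacklozenge'}(w)$ with $\mu_{w',A}(\|\lambda'\|)>0$. There it replaces $\mu_{w',A}$ by the Dirac measure at a point of
$\|\lambda'\|\cap\bigcap\{\|\lambda\|:\mu_{w',A}(\|\lambda\|)=1\}\setminus\bigcup\{\|\lambda\|:\mu_{w',A}(\|\lambda\|)=0\}$,
where $\lambda$ ranges over the inner formulas of the finitely many $A$-atoms of $\Theta$. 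This set has measure $\mu_{w',A}(\|\lambda'\|)>0$, so it is nonempty. For the negated case you replace $\|\lambda'\|$ by $W_\Xfrak\setminus\|\lambda'\|$, which has measure $1-\mu_{w',A}(\|\lambda'\|)>0$.

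You already have the ingredients: your remark that $\bigcap_i\|\lambda_i\|\setminus\|\lambda'\|$ has positive measure, and your remark that a finite union of null sets is null. Apply them uniformly in both cases, choosing the point inside all measure-$1$ extensions, outside all null ones, and on the correct side of $\|\lambda'\|$. The case split and the $\{0,1\}$-rounding then become unnecessary.
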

\begin{proof}
Let $\zeta=\blacklozenge'\Prob_A(\lambda')$. We construct a~tableau for $\{w{:}\theta\geq y,y=1\mid\theta\in\Theta\}\cup\{w{:}\blacksquare\neg\Prob_A(\lambda')\leq z,z<1\}$. Since $\Theta\not\models_{\prea}\neg\zeta$, this tableau should be open and also have only one branch~$\Bmc$. Now, let $w'$ be the state s.t.\ $w'{:}\Prob_A(\lambda')\geq1-z$ occurs on~$\Bmc$ and that was generated by applying $\Box_\leq$ and $\knows_\leq$ rules to~$\blacksquare\neg\Prob_A(\lambda')$. Observe, furthermore, that apart from $\Prob_A(\lambda')$, all other formulaic constraints labelled with~$w'$ on~$\Bmc$ have the form $w'{:}\phi\geq1$ or $w'{:}\phi\leq0$.

Using the reasoning from Theorem~\ref{theorem:generalcasedecidability}, we can assume w.l.o.g.\ that there is an SI model $\Mfrak=\langle W,\Rmsf,\Emsf,\Xfrak,\mu,V\rangle$ that witnesses $\Theta\not\models_{\prea}\neg\zeta$ s.t.\ $W=\{w\mid w\text{ occurs on }\Bmc\}$ and $\Xfrak=\biguplus_{w\in W}\Xfrak_w$. It follows that $\mu_{w',A}(\|\lambda\|_{\Xfrak_{w'}})=1$ for every $\lambda$ s.t.\ $w'{:}\Prob_A(\lambda)\geq1$ occurs on~$\Bmc$ and $\mu(\|\lambda\|_{\Xfrak_{w'}})=0$ for every $\lambda$ s.t.\ $w'{:}\Prob_A(\lambda)\leq0$ occurs on~$\Bmc$. Additionally, we have that $\mu_{w',A}(\|\lambda'\|_{\Xfrak_{w',A}})=z'>0$. 

Now, we can construct $\Lambda_{w',A}$ as shown in~\eqref{equ:zeroonemodel}. It follows that there is some $x\in\Xfrak_{w'}$ s.t.\ $\Xfrak_{w'},x\vDash\lambda'\wedge\bigwedge_{\lambda''\in\Lambda_{w',A}}\lambda''$. Now, we can replace $\mu_{w',A}$ with $\mu'_{w',A}$ defined as follows: $\mu'_{w',A}(X)=1$ if $x\in X$ and $\mu'_{w',A}(X)=0$, otherwise. Observe that $\mu'_{w',A}(\|\lambda'\|_{\Xfrak_{w'}})=1$. Now let $\Mfrak'$ be $\Mfrak$ where $\mu_{w,A}$ is replaced with $\mu'_{w,A}$. It is clear that $\Imc_{\Mfrak}(\blacklozenge\Prob_A(\lambda'),w)=1$ and $\Imc_{\Mfrak'}(\theta,w)=1$ for all $\theta\in\Theta$. Thus, $\Mfrak'$~witnesses $\Theta,\zeta\not\models_{\prea}\zero$.

Finally, if $\zeta=\blacklozenge'\neg\Prob_A(\lambda')$, the reasoning is similar. The only difference is that we will have $\Xfrak_{w'}$ and $x$ in such a~way that $\Xfrak_{w'},x\vDash\neg\lambda'\wedge\bigwedge_{\lambda''\in\Lambda_{w',A}}\lambda''$. Thus, redefining the measure as above, will give us $\mu'_{w',A}(\|\neg\lambda'\|_{\Xfrak_{w'}})=1$, whence $\mu'_{w',A}(\|\lambda'\|_{\Xfrak_{w'}})=0$, $\Imc_{\Mfrak'}(\neg\Prob_A(\lambda'),w')=1$ and $\Imc_{\Mfrak}(\blacklozenge\neg\Prob_A(\lambda'),w)=1$.
\end{proof}
\begin{definition}\label{def:preairreducibleset}
A~$\uprobrule$ theory $\Gamma=\Theta\cup\Xi$ with $\Theta=\{\theta_1,\ldots,\theta_m\}$ and $\Xi=\{\kappa_1,\ldots,\kappa_n\}$ is called \emph{$\prea$-irreducible} iff there is no $\upl$ $\theta$ occurring in~$\Xi$ s.t.\ $\Theta\models_{\prea}\neg\theta$.
\end{definition}
Just as it was with $\umodrule$ theories, one can see that it takes polynomial time to check that a~given $\uprobrule$ theory is $\prea$-irreducible. Let us now show that irreducible $\uprobrule$ theories are $\prea$-satisfiable.
\begin{lemma}\label{lemma:preairreduciblesatisfiabilityupr}
Let $\Gamma=\Theta\cup\Xi$ with $\Theta=\{\theta_1,\ldots,\theta_m\}$ and $\Xi=\{\kappa_1,\ldots,\kappa_n\}$ be an $\prea$-irreducible $\uprobrule$ theory. Then it is $\prea$-satisfiable.
\end{lemma}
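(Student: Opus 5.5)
The argument follows Lemma~\ref{lemma:irreduciblesatisfiabilityuniversal}, using Lemmas~\ref{lemma:UPLentailmentpolynomial} and~\ref{lemma:classicalsatisfiabilityreductionlozengeprob} in place of their modal counterparts. We may assume that every rule in~$\Xi$ has at least two literals. Rules with a single literal are $\upl$'s (written as $\blacklozenge^\Prob$-literals) and can be moved into~$\Theta$. We also assume the rules are written as clauses $\bigoplus_i\zeta_i\oplus\eta'$ or $\bigoplus_i\zeta_i$, using $\neg\eta_i\equiv\zeta_i$.

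First, $\Theta$ is $\prea$-satisfiable: otherwise $\Theta\models_{\prea}\neg\theta$ for every $\theta$, which contradicts irreducibility as soon as $\Xi\neq\varnothing$. If $\Xi=\varnothing$ there is nothing more to show. By the proof of Lemma~\ref{lemma:UPLentailmentpolynomial}, a model of $\Theta$ can be taken whose outer frame comes from the single tableau branch. In it, every probabilistic atom takes a value in $\{0,1\}$, and each measure $\mu_{w',A}$ is a Dirac measure on a point $x_{w',A}$ of an event model for $\Lambda_{w',A}$.

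Next, in each clause $\kappa$ containing a proper $\blacklozenge^\Prob$-literal, pick one such literal $\zeta_\kappa=\blacklozenge_\kappa\Prob_A(\lambda)$ or $\blacklozenge_\kappa\neg\Prob_A(\lambda)$. By irreducibility, $\Theta\not\models_{\prea}\neg\zeta_\kappa$, so Lemma~\ref{lemma:classicalsatisfiabilityreductionlozengeprob} gives a model of $\Theta\cup\{\zeta_\kappa\}$ for each~$\kappa$ separately. To satisfy all $\zeta_\kappa$ simultaneously, add them one at a time, as in the proof of Lemma~\ref{lemma:irreduciblesatisfiabilityuniversal}.

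\begin{itemize}
\item If $\Rmsf_{\blacklozenge_\kappa}(w)$ is nonempty, use Lemma~\ref{lemma:classicalsatisfiabilityreductionlozengeprob} to find a state $w'$ in it where $\Lambda_{w',A}\cup\{\lambda\}$ (resp.\ $\Lambda_{w',A}\cup\{\neg\lambda\}$) is classically satisfiable. Redefine $\mu_{w',A}$ as the Dirac measure on a witnessing inner point, taken from a fresh disjoint copy of the inner model.
\item If no such path exists, append a fresh $\blacklozenge_\kappa$-chain of outer states. At each new state $s$, collect the inner literals forced by those $\blacksquare'\Prob_B(\lambda'')$ and $\blacksquare'\neg\Prob_B(\lambda'')$ in~$\Theta$ with $s\in\Rmsf_{\blacksquare'}(w)$. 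Then equip $s$ with Dirac measures on inner models of these literals, together with $\lambda$ (resp.\ $\neg\lambda$) at the endpoint.
\end{itemize}

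Because each measure is chosen independently for each pair (state, agent), and the inner frames are disjoint, a modification at one outer state never changes the value of any $\Prob$-atom at another state. Hence previously satisfied literals are preserved.

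The main obstacle is checking that these added or re-measured states do not falsify some $\blacksquare^\Prob$-literal of $\Theta$. This requires the classical event-model analogue of the argument of Lemma~\ref{lemma:irreduciblesatisfiabilityuniversal}. Concretely, we must show that $\Lambda_{s,B}$ together with $\lambda$ is classically satisfiable whenever $\Theta\not\models_{\prea}\neg\zeta_\kappa$. I would derive this from the tableau for $\Theta\cup\{w{:}\neg\zeta_\kappa\leq z,z<1\}$, exactly as in the proof of Lemma~\ref{lemma:classicalsatisfiabilityreductionlozengeprob}. The new chain can be identified with the state generated there, so it inherits exactly the constraints of the form $\geq1$ or $\leq0$ coming from~$\Theta$.

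Finally, consider the clauses satisfied by none of the chosen proper $\blacklozenge^\Prob$-literals. These have no proper $\blacklozenge^\Prob$-literals, and a non-proper $\eta'$ is excluded by the definition of $\uprobrule$. So they consist only of propositional probabilistic literals $\Prob_A(\lambda)$ or $\neg\Prob_A(\lambda)$ at~$w$. For each atom $\Prob_A(\lambda)$ occurring in such a clause, redefine $\mu_{w,A}$ as $\tfrac12$ times a Dirac measure on a point satisfying $\lambda$ plus $\tfrac12$ times a Dirac measure on a point satisfying $\neg\lambda$, both in fresh inner models that also satisfy $\Lambda_{w,A}$. Two points need checking.
\begin{itemize}
\item The needed points exist. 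Irreducibility rules out $\Theta\models_{\prea}\neg\Prob_A(\lambda)$ and $\Theta\models_{\prea}\Prob_A(\lambda)$ whenever $\Prob_A(\lambda)$ or $\neg\Prob_A(\lambda)$ respectively occurs in a rule. Lemma~\ref{lemma:UPLentailmentpolynomial} then yields both points.
\item The literals of $\Theta$ at~$w$ keep the value~$1$. Each such literal is $\Prob_A(\lambda'')$ or $\neg\Prob_A(\lambda'')$ with $\lambda''$ or $\neg\lambda''$ in $\Lambda_{w,A}$, which holds at both points.
\end{itemize}
Every affected atom then has value $\tfrac12$, so each remaining clause, having at least two literals, gets value~$1$. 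Proper literals are evaluated at other states, which we may assume are distinct from~$w$, so they are unaffected. Hence the resulting SI-model satisfies~$\Gamma$ at~$w$, and Proposition~\ref{prop:SIequivalence} transfers this to a standard probabilistic model.
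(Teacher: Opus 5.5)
Your overall route is the paper's. You satisfy $\Theta$ together with one chosen proper $\blacklozenge^\Prob$-literal per clause, using Lemma~\ref{lemma:classicalsatisfiabilityreductionlozengeprob} and the gluing of Lemma~\ref{lemma:irreduciblesatisfiabilityuniversal}. You then repair the purely propositional clauses with a half--half measure at~$w$. That last step is where the argument breaks.

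\emph{The gap.} You redefine $\mu_{w,A}$ once per affected atom. If two atoms $\Prob_A(\lambda_1),\Prob_A(\lambda_2)$ of the same agent occur in leftover clauses, the second redefinition overwrites the first, so ``every affected atom then has value~$\tfrac12$'' does not follow. If you instead mean one pair $x^+,x^-$ per agent serving all atoms, you need $\Lambda_{w,A}\cup\{\lambda_1,\dots,\lambda_k\}$ and $\Lambda_{w,A}\cup\{\neg\lambda_1,\dots,\neg\lambda_k\}$ to be satisfiable. At minimum you need some measure on models of $\Lambda_{w,A}$ with all marginals $\tfrac12$. Irreducibility only gives you the per-atom facts, and these do not combine.

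\emph{Example.} Take $\Theta=\{\neg\Prob_A(\ActLozenge{a}p)\}$, so $\Lambda_{w,A}=\{\ActBox{a}\neg p\}$, and the single clause $\Prob_A(\ActLozenge{a}q)\oplus\Prob_A(\ActBox{a}p)$. The theory is irreducible, yet $\ActBox{a}\neg p\wedge\ActLozenge{a}q\wedge\ActBox{a}p$ is unsatisfiable. Your sequential choice can end with a point $z^+$ that has no $a$-successor and a point $z^-$ whose $a$-successors all falsify $q$. That leaves the clause at value~$\tfrac12$.

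A good measure does exist here: $x^+\vDash\ActLozenge{a}q\wedge\neg\ActBox{a}p$ and $x^-$ without $a$-successors. But you give no argument that, for every agent, the undetermined atoms at $w$ always admit a measure concentrated on models of $\Lambda_{w,A}$ giving each of them probability~$\tfrac12$. That is a genuine combinatorial claim about uniform literals, and neither irreducibility nor Lemma~\ref{lemma:UPLentailmentpolynomial} supplies it.

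\emph{Two smaller points.} First, the $\neg\lambda$-point exists only if $\Lambda_{w,A}$ does not entail $\lambda$. So discard clauses already satisfied by the model of $\Theta$ first. Second, proper $\knows_B$-literals are also evaluated at $w$ by reflexivity. This is harmless only because the tableau already puts those constraints into $\Lambda_{w,A}$, not because their states are distinct from~$w$.

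\emph{Comparison with the paper.} The paper avoids the joint choice by taking a single $x^+$ satisfying all the relevant $\lambda'$ and a single $x^-$ falsifying all of them. This is trivial for $\neg$-free literals. It then weights each by $\tfrac12$ for every agent. However, it does not require $x^\pm$ to satisfy $\Lambda_{w,A}$, so it never checks that the literals of $\Theta$ at $w$ keep value~$1$. Your insistence on $\Lambda_{w,A}$ is the right correction. But it is exactly what makes the existence of the half--half measure nontrivial, and that step still has to be proved.
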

\begin{proof}
We reason in the same manner as for Lemma~\ref{lemma:irreduciblesatisfiabilityuniversal}, but use Lemma~\ref{lemma:classicalsatisfiabilityreductionlozengeprob} instead of Lemma~\ref{lemma:classicalsatisfiabilityreductionlozenge}. For every $\kappa\in\Xi$, we pick one proper $\blacklozenge^\Prob$-literal $\zeta_\kappa$ and consider a~pointed $\prea$-model $\langle\Mfrak,w\rangle$ that satisfies $\Theta'=\Theta\cup\{\zeta_\kappa\mid\kappa\in\Xi\}$. For every clause $\kappa'\in\Xi$ that \emph{does not} contain proper literals, we set its literals at~$\tfrac{1}{2}$. To do that, we consider a~disjoint union of two event models: $\Xfrak^+\uplus\Xfrak^-$ s.t.\ $\Xfrak^+,x^+\vDash\lambda'$ and $\Xfrak^-,x^-\nvDash\lambda'$ for every $\lambda'$ that occurs in \emph{some} $\kappa'$ that does not contain proper literals. Then, for every $A\in\Ag$, we define $\mu_{w,A}(\{x^+\})=\mu_{w,A}(\{x^-\})=\tfrac{1}{2}$. Now, given a~clause $\kappa'$ that does not contain proper literals, we have $\Imc_\Mfrak(\zeta',w)=\tfrac{1}{2}$ for every $\zeta'$ in~$\kappa'$. Thus, $\Mfrak$ satisfies~$\Gamma$.
\end{proof}
Now, using the same reasoning as in Lemma~\ref{lemma:UMRpolynomial} and applying Lemmas~\ref{lemma:UPLentailmentpolynomial} and~\ref{lemma:preairreduciblesatisfiabilityupr}, we can obtain the main result of the section.
\UPRpolynomial*
\subsection{Proofs of Section~\ref{ssec:EPR}}
In this section, we prove Theorem~\ref{theorem:EPRpolynomial}. Our approach is similar to the case of universal probabilistic rules. That is, we first prove our results for the modal Łukasiewicz logic and then expand them to the probabilistic case.
\begin{definition}\label{def:EMR}~
An \emph{existential positive modal rule} ($\emrplus$) is a~formula of one of the following kinds:
\begin{itemize}[noitemsep,topsep=0pt]
\item a~$\uml$;
\item $\bigodot^n_{i=1}\pi_i\rightarrow\zero$ with $\pi$'s being propositional literals or $\neg$-free $\blacklozenge$-literals;
\item $\bigodot^n_{i=1}\pi_i\rightarrow\pi$ with $\pi$'s being \emph{proper} $\neg$-free $\blacklozenge$-literals.
\end{itemize}
\end{definition}

Note that $\emrplus$'s can also be represented as $\oplus$-clauses that contain at most one proper $\blacksquare$-literal. Thus, in a~sense they are dual to $\umodrule$'s (though, they are a~bit more restrictive since they cannot have both~$\neg$ and~$\knows$ in the same theory).

Let us now establish an analogue of Lemma~\ref{lemma:classicalsatisfiabilityreductionlozenge} for $\blacksquare$-literals. The proof is similar.
\begin{lemma}\label{lemma:classicalsatisfiabilityreductionbox}
Let $\Lambda$ be a~set of $\uml$'s and $\nu$ a~$\blacksquare$-literal s.t.\ $\Lambda\not\models_{\EALukProd}\neg\nu$. Then $\Lambda,\nu\not\models_{\EALukProd}\zero$ and $\Lambda\cup\{\nu\}$ is satisfiable in a~(classical) event model.
\end{lemma}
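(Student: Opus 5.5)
We mirror the proof of Lemma~\ref{lemma:classicalsatisfiabilityreductionlozenge}. Take an $\EALukProd$-model $\Mfrak=\langle W,\Emsf,\Rmsf,V\rangle$ and $w\in W$ witnessing $\Lambda\not\models_{\EALukProd}\neg\nu$, i.e.\ $\Imc_\Mfrak(\lambda,w)=1$ for all $\lambda\in\Lambda$ and $\Imc_\Mfrak(\nu,w)>0$. By the witnessed model property (Remark~\ref{rem:WMP}), we may take $\Mfrak$ finitely branching. We then keep the frame, define a classical valuation $V'$, and show by induction on the number of modalities that every $\uml$ with value~$1$ at a state keeps value~$1$ under~$V'$, while $\nu$ (positive value) gets value~$1$. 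As before, we split on whether $\nu$ contains~$\neg$.

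\emph{Case 1: $\nu=\blacksquare p$ is $\neg$-free.} Set $V'(p,s)=1$ iff $V(p,s)>0$. We prove two claims simultaneously: (a) if $\lambda$ is a $\neg$-free $\blacksquare$-literal with $\Imc_\Mfrak(\lambda,s)>0$, then $\Imc_{\Mfrak'}(\lambda,s)=1$; (b) if $\lambda$ is any $\uml$ with $\Imc_\Mfrak(\lambda,s)=1$, then $\Imc_{\Mfrak'}(\lambda,s)=1$. For (a), the key step is that $\Imc_\Mfrak(\ActBox{a}\nu',s)>0$ means $\inf$ over $\Rmsf_a(s)$ is positive, hence \emph{every} successor has $\Imc_\Mfrak(\nu',s')>0$. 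The induction hypothesis gives value~$1$ at all successors, so the box gets value~$1$ (vacuously if there are no successors). The $\knows_A$ case is identical. For (b), the literal cases are as before: $\Imc_\Mfrak(\neg p,s)=1$ gives $V(p,s)=0$, hence $V'(p,s)=0$. For boxes, value~$1$ propagates to all successors; for diamonds, value~$1$ is witnessed by some successor, using finite branching.

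\emph{Case 2: $\nu$ contains $\neg$.} Set $V'(p,s)=1$ iff $V(p,s)=1$, and show that every $\uml$ with positive value at $s$ gets value~$1$. For $\neg p$: a positive value gives $V(p,s)<1$, hence $V'(p,s)=0$. Boxes are handled via the infimum argument above, and diamonds via a positive-valued successor. Since every $\lambda\in\Lambda$ has value $1>0$, all of $\Lambda$ and $\nu$ are true in the classical $\Mfrak'$ at~$w$. This also yields $\Lambda,\nu\not\models_{\EALukProd}\zero$.

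\textbf{Main obstacle.} The main subtlety is that in Case~1 a $\uml$ in~$\Lambda$ may contain negation, and a $\neg$-literal $\neg p$ with positive but non-$1$ value is not preserved. This is why claim~(b) is stated only for value~$1$, and why the positivity claim~(a) is restricted to $\neg$-free literals, where the threshold valuation $V(p,s)>0$ is exactly right. I will check that the induction for~(b) never needs~(a) on a negated subliteral, which holds because $\uml$'s are uniform: each literal is a single modal prefix over one literal.
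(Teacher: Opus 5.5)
Your route is the paper's: the same two cases with the same thresholds ($V'(p,s)=1$ iff $V(p,s)>0$ when $\nu=\blacksquare p$, and iff $V(p,s)=1$ when $\nu=\blacksquare\neg p$), the same induction on modalities, and the same appeal to a witnessed/finitely branching model. Your Case~1, with its split into (a) and (b), is correct.

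Case~2, however, rests on a claim that is false as stated: that \emph{every} $\uml$ with positive value at $s$ gets value~$1$ under $V'$. Its basis case already fails for $\lambda=p$. If $V(p,s)=\tfrac{1}{2}$, then $\Imc_\Mfrak(p,s)>0$ but $V'(p,s)=0$. The same happens for $\ActBox{a}p$ at a state all of whose $\Rmsf_a$-successors give $p$ the value $\tfrac{1}{2}$. This is the mirror image of the obstacle you correctly identified in Case~1: there a positive-but-not-$1$ value of $\neg p$ is lost, here one of $p$ is. You did not apply the same fix. The paper's sketch inherits this over-general claim from its $\blacklozenge$-version, whose basis case ``$\lambda=p$ follows by construction'' has the same defect. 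So this is not a divergence from the paper, but the step cannot be proved as you wrote it.

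The repair is the same split as in Case~1.
\begin{itemize}
\item Prove (a$'$): if $\lambda$ is a $\blacksquare$-literal with base $\neg p$ and $\Imc_\Mfrak(\lambda,s)>0$, then $\Imc_{\Mfrak'}(\lambda,s)=1$. The basis is $V(p,s)<1$, which gives $V'(p,s)=0$; the inductive step is your infimum argument.
\item Prove (b) exactly as in Case~1: every $\uml$ with value $1$ keeps value $1$. This goes through under the new threshold because $V(p,s)=1$ gives $V'(p,s)=1$ and $V(p,s)=0$ gives $V'(p,s)=0$.
\end{itemize}
In (b), a $\blacklozenge$-literal of $\Lambda$ with base $p$ needs a successor of value exactly $1$, supplied by witnessing. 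A merely positive-valued successor, as your Case~2 text suggests, is not enough. Applying (b) to $\Lambda$ and (a$'$) to $\nu$ yields the classical model, and the rest of your argument stands.
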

\begin{proof}
Let $\Mfrak=\langle W,\Emsf,\Rmsf,V\rangle$ be an $\EALukProd$-model witnessing $\Lambda\not\models_{\EALukProd}\neg\nu$. This means that there is some $w\in W$ s.t.\ $\Imc_\Mfrak(\lambda,w)=1$ for all $\lambda\in\Lambda$ and $\Imc_\Mfrak(\nu,w)>0$. We will define a~classical model $\Mfrak'=\langle W,\Emsf,\Rmsf,V'\rangle$ on the same frame and show that $\Imc_{\Mfrak'}(\nu,w)=1$ and $\Imc_{\Mfrak'}(\lambda,w)=1$ for all $\lambda\in\Lambda$. We assume w.l.o.g.\ that $\Mfrak$ is finitely branching and consider two cases: (1)~$\nu$ does not contain~$\neg$ and (2)~$\nu$ contains~$\neg$.

In the first case, for all $w\in W$, we define $V'(p,w)=1$ iff $V(p,w)>0$ (and $V'(p,w)=0$, otherwise). In the second case, for all $w\in W$, we set $V'(p,w)=1$ iff $V(p,w)=1$ (and $V'(p,w)=0$, otherwise). We can then show by induction on the number of modalities in~$\nu$ that $\Mfrak'$~satisfies $\Lambda\cup\{\nu\}$. The reasoning is the same as in Lemma~\ref{lemma:classicalsatisfiabilityreductionlozenge}.
\end{proof}
We define $\EALukProd$-irreducible $\emrplus$ theories in the same way as $\umodrule$ theories (recall Definition~\ref{def:KLukirreducibleset}). Let us now show the analogue of Lemma~\ref{lemma:irreduciblesatisfiabilityuniversal}.
\begin{lemma}\label{lemma:irreduciblesatisfiabilityexistential}
Let $\Gamma=\Xi\cup\Lambda$ with $\Xi=\{\kappa_1,\ldots,\kappa_m\}$ and $\Lambda=\{\lambda_1,\ldots,\lambda_n\}$ be an $\EALukProd$-irreducible $\emrplus$ theory. Then it is $\EALukProd$-satisfiable.
\end{lemma}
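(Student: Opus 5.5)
We follow the proof of Lemma~\ref{lemma:irreduciblesatisfiabilityuniversal}, dualising it. Use Lemma~\ref{lemma:classicalsatisfiabilityreductionbox} where that proof used Lemma~\ref{lemma:classicalsatisfiabilityreductionlozenge}, and proper $\blacksquare$-literals where it used proper $\blacklozenge$-literals. Write every $\kappa\in\Xi$ in $\oplus$-clause form. An implication $\bigodot\pi_i\rightarrow\pi$ becomes $\bigoplus\nu_i\oplus\pi$, with $\nu_i\equiv\neg\pi_i$ proper $\blacksquare$-literals built over negated variables. A constraint $\bigodot\pi_i\rightarrow\zero$ becomes $\bigoplus\neg\pi_i$, whose disjuncts are $\blacksquare$-literals or propositional literals. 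As before, assume w.l.o.g.\ that each $\kappa$ has at least two literals; unit clauses are moved into~$\Lambda$.

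First, $\Lambda$ is $\EALukProd$-satisfiable: otherwise $\Lambda\models_{\EALukProd}\neg\lambda$ for every $\lambda$, contradicting irreducibility when $\Xi\neq\varnothing$. By Lemma~\ref{lemma:UMtermpolynomial}, we get a classical pointed model $\langle\Mfrak,w\rangle$ of $\Lambda$. In each clause $\kappa$ that contains a proper $\blacksquare$-literal, pick one such literal $\nu_\kappa$. By irreducibility, $\Lambda\not\models_{\EALukProd}\neg\nu_\kappa$, so Lemma~\ref{lemma:classicalsatisfiabilityreductionbox} gives, for each $\kappa$ separately, a classical model of $\Lambda\cup\{\nu_\kappa\}$.

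The main step, and the main obstacle, is to satisfy $\Lambda'=\Lambda\cup\{\nu_\kappa\mid\kappa\in\Xi\}$ \emph{simultaneously}. Boxes do not merge as freely as diamonds: a new $\blacksquare$-requirement constrains \emph{all} accessible states, including the witnesses of the $\blacklozenge$-literals in $\Lambda$. This is where the $\neg$-free restriction of $\eprplus$ helps. All chosen $\nu_\kappa$ are conjunctions of boxes over \emph{negated} variables. Meanwhile, the diamonds in the theory are $\neg$-free and occur only positively, so they sit on the other side of a clause.

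I would build the model as a tree-like (unravelled) classical model of $\Lambda$ obtained from the tableau, and then set $V(p,s)=0$ at every state $s\in\Rmsf_{\blacksquare_\kappa}(w)$ such that $\nu_\kappa=\blacksquare_\kappa\neg p$. This is consistent with $\Lambda$ unless some $\blacklozenge$- or $\blacksquare$-literal in $\Lambda$ forces $p$ along a path of the same relational shape. In that case, the path would witness $\Lambda\models\neg\nu_\kappa$, which irreducibility excludes.

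I expect the delicate point to be the interaction with $\knows_A$. The relation $\Emsf_A$ is an equivalence relation, so paths collapse: for example, $\knows_A\knows_A$ and $\knows_A$ reach the same states. The check therefore has to be made modulo $\sfive$-reachability rather than purely syntactically. I would handle this by comparing the compositions $\Rmsf_\blacksquare$ from Convention~\ref{conv:RboxRlozenge} rather than the strings of modalities. Then, as in Lemma~\ref{lemma:irreduciblesatisfiabilityuniversal}, every clause containing a proper $\blacksquare$-literal has value~$1$ at~$w$.

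Finally, for clauses consisting only of propositional literals, set $V(p,w)=\tfrac12$ for each variable occurring in them. Each such clause has at least two literals, so its value becomes~$1$. Irreducibility guarantees that no literal of $\Lambda$ on these variables is affected. We may also assume that all $\blacklozenge$- and $\blacksquare$-literals are witnessed at states other than~$w$, so changing values at~$w$ does not disturb them. The resulting model satisfies~$\Gamma$.
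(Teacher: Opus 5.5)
Your construction is the same as the paper's. The tableau (minimal) classical model of $\Lambda$ already makes every chosen $\nu_\kappa=\blacksquare_\kappa\neg p$ true, because a forced $p$ on an $\Rmsf_{\blacksquare_\kappa}$-successor would give $\Lambda\models_{\EALukProd}\neg\nu_\kappa$. Then the variables of the unsatisfied propositional clauses are set to $\tfrac12$ at $w$. The gap is in this last step. It comes from the interaction with $\knows_A$ that you flagged, but not where you looked for it. You assume all box-literals are witnessed at states other than $w$. However, when $\blacksquare_\kappa$ consists of epistemic modalities only, $w\in\Rmsf_{\blacksquare_\kappa}(w)$, because each $\Emsf_A$ is reflexive. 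So setting $q$ to $\tfrac12$ at $w$ lowers a chosen $\nu_\kappa=\knows_A\neg q$ from $1$ to $\tfrac12$. If the other literals of $\kappa$ are $0$ (say, a head $\ActLozenge{a}p$ while $w$ has no $a$-successor), then $\kappa$ is falsified. Irreducibility does not exclude this, since it only tests single literals against $\Lambda$.

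This cannot be patched by choosing the $\nu_\kappa$ or the values more cleverly, because the statement itself fails as given. Let $\Lambda=\{\ActBox{a}s,\ActBox{a}t\}$ and let $\Xi$ contain $\neg q\odot\neg r\rightarrow\zero$, $\knowsLozenge{A}q\rightarrow\ActLozenge{a}p$, $\knowsLozenge{A}r\rightarrow\ActLozenge{a}p$ and $\ActLozenge{a}s\odot\ActLozenge{a}t\rightarrow\zero$. In clause form these are $q\oplus r$, $\knows_A\neg q\oplus\ActLozenge{a}p$, $\knows_A\neg r\oplus\ActLozenge{a}p$ and $\ActBox{a}\neg s\oplus\ActBox{a}\neg t$. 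This is an $\emrplus$ theory, and it is irreducible. One-state models of $\Lambda$ without $a$-successors (with any values of $q,r$), together with a model having a single $a$-successor where $p=s=t=1$, show that $\Lambda$ entails the negation of none of these literals. Yet at any state $w$ satisfying $\Gamma$, the last clause together with $\Lambda$ forces $\Rmsf_a(w)=\varnothing$. Hence $\ActLozenge{a}p$ has value $0$, so $\knows_A\neg q$ and $\knows_A\neg r$ have value $1$. Since $w\Emsf_Aw$, $q$ and $r$ are then $0$ at $w$, which falsifies $q\oplus r$.

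The paper's proof ends with the same $\tfrac12$-step and has the same gap. The lemma needs an extra hypothesis, for instance the restriction mentioned after Definition~\ref{def:EMR} that a theory does not contain both $\neg$ and $\knows_A$. Under that restriction the construction works:
\begin{itemize}
\item Without $\neg$ in $\Xi$, the propositional clauses have the form $\bigoplus\neg q_i$ with no $q_i$ forced at $w$ (by irreducibility). Taking unforced variables to be $0$ satisfies them, so no $\tfrac12$-step is needed.
\item Without $\knows_A$, no box-path of the tableau model returns to $w$, so the $\tfrac12$-step affects only propositional literals.
\end{itemize}
Separately, the case $\Xi=\varnothing$ that you noticed also requires satisfiability of $\Lambda$ as an explicit hypothesis.
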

\begin{proof}
The proof is similar to that of Lemma~\ref{lemma:irreduciblesatisfiabilityuniversal}. The main difference is that we will use Lemma~\ref{lemma:classicalsatisfiabilityreductionbox} and the fact that any set~$\Lambda'$ of proper $\blacksquare$-literals is classically satisfiable if $\nu=\blacksquare\neg p$ for all $\nu\in\Lambda$.

Now, if $\Gamma$~is an $\emrplus$ theory, we use this fact as follows. Since $\Gamma$ is irreducible, there are a~classical model $\Mfrak=\langle W,\Emsf,\Rmsf,V\rangle$ and $w\in W$ s.t.\ $\Mfrak,w\vDash\lambda$ for all $\lambda\in\Lambda$. Now, we can represent rules in~$\Xi$ as $\oplus$-clauses of the following form: $\blacklozenge p\oplus\bigoplus^k_{i=1}\blacksquare\neg p_i$ and $\bigoplus^k_{i=1}\blacksquare\neg p_i$, depending on whether the rules had empty head. From Definition~\ref{def:EMR}, it follows that either (A) a~clause contains a~proper $\blacksquare$-literal or (B) it is propositional. Now for every clause $\kappa$ that satisfies~(A), we choose a~proper $\blacksquare$-literal $\nu_\kappa$.

We show that $\Lambda\cup\{\nu_\kappa\mid\kappa\in\Xi\}$ is $\EA$-satisfiable. For that, construct a~pointed event model $\langle\Mfrak,w\rangle$ that $\EA$-satisfies~$\Lambda$ using tableaux. Observe that $\blacksquare\neg p\in\Theta$ s.t.\ $\Rmsf_\blacksquare(w)\neq\varnothing$, we have $V(p,w')=0$ for all $w'\in\Rmsf_\blacksquare(w)$. Indeed, if there were some state $w''\in\Rmsf_\blacksquare$ s.t.\ $V(p,w)=1$, it would mean that $\Lambda\models_{\EALukProd}\blacklozenge p$, i.e., $\Lambda\models_{\EALukProd}\neg\blacksquare\neg p$. But this would contradict our assumption that $\Gamma$~is irreducible. Hence, $\langle\Mfrak,w\rangle$ satisfies $\Lambda\cup\Theta$. Now, the only clauses in~$\Gamma$ that are not satisfied by~$\langle\Mfrak,w\rangle$ are propositional. We proceed in the same way as in Lemma~\ref{lemma:irreduciblesatisfiabilityuniversal} and set $V(p,w)=\tfrac{1}{2}$ for every variable in every propositional clause that is not satisfied by~$\langle\Mfrak,w\rangle$. It follows that $\Mfrak$ (with an updated valuation) satisfies~$\Gamma$.
\end{proof}

The next statement can now be shown in the same way as Lemma~\ref{lemma:UMRpolynomial}.
\begin{lemma}\label{lemma:EMRpolynomial}
Given an $\emrplus$ theory~$\Gamma$, it takes polynomial time to verify whether it is $\EALukProd$-satisfiable.
\end{lemma}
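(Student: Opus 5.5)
The procedure mirrors the proof of Lemma~\ref{lemma:UMRpolynomial}, with Lemmas~\ref{lemma:classicalsatisfiabilityreductionbox} and~\ref{lemma:irreduciblesatisfiabilityexistential} replacing their universal counterparts.

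\emph{Normal form.} First rewrite every rule of $\Gamma$ as an $\oplus$-clause, using that $\phi\rightarrow\chi$ is equivalent to $\neg\phi\oplus\chi$ and $\neg\blacklozenge p\equiv\blacksquare\neg p$. This takes linear time. Afterwards $\Gamma=\Xi\cup\Lambda$, where $\Lambda$ is the set of unit clauses (these are $\uml$'s) and $\Xi$ is the set of clauses with at least two literals. Each clause in $\Xi$ contains at most one $\blacklozenge$-literal, and every other literal is a $\blacksquare$-literal or a propositional literal. If $\Xi$ is the whole theory, i.e.\ $\Lambda=\varnothing$, then $\Gamma$ is satisfiable by the one-state reflexive model with all variables at $\tfrac{1}{2}$, exactly as in Lemma~\ref{lemma:UMRpolynomial}.

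\emph{Reduction loop.} Otherwise, check that $\Lambda$ is satisfiable by Lemma~\ref{lemma:UMtermpolynomial}, which is polynomial. Then iterate over the literals $\lambda$ occurring in $\Xi$, handling two cases.
\begin{itemize}
\item If $\Lambda\models_{\EALukProd}\lambda$, delete every clause containing $\lambda$, since it is already satisfied.
\item If $\Lambda\models_{\EALukProd}\neg\lambda$, delete $\lambda$ from every clause.
\end{itemize}
By Lemma~\ref{lemma:classicalsatisfiabilityreductionbox} (for $\blacksquare$-literals) and Lemma~\ref{lemma:classicalsatisfiabilityreductionlozenge} (for $\blacklozenge$-literals), the test $\Lambda\models\neg\lambda$ coincides with $\Lambda,\lambda\models\zero$, so each test is a single polynomial tableau check via Lemma~\ref{lemma:UMtermpolynomial}.

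When a clause shrinks to a single literal $\lambda'$, move it to $\Lambda$, recheck satisfiability of $\Lambda\cup\{\lambda'\}$, and restart the scan with the enlarged $\Lambda$. If a clause becomes empty, or $\Lambda$ becomes unsatisfiable, output "unsatisfiable". Each pass either deletes a literal or clause or terminates, so there are at most $\lmc[\Gamma]$ passes, each consisting of polynomially many polynomial checks.

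\emph{Correctness.} If the loop ends without contradiction, the resulting theory is $\EALukProd$-irreducible, and Lemma~\ref{lemma:irreduciblesatisfiabilityexistential} gives a model for it. Every transformation preserves satisfiability in both directions: deleting a literal entailed false by $\Lambda$ loses no models of $\Lambda$, and deleting an entailed clause is harmless. Hence a model of the reduced theory is a model of $\Gamma$. Conversely, reaching an empty clause means $\Lambda$ entails $\neg\lambda$ for all literals of some original clause, and since the values of these literals are then $0$, their $\oplus$ is $0$, so $\Gamma$ is unsatisfiable.

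\emph{Main obstacle.} The delicate point is that the enlarged $\Lambda$ must remain a set of $\uml$'s, so that Lemma~\ref{lemma:UMtermpolynomial} still applies. Unit clauses extracted from $\Xi$ can be $\blacklozenge$-literals or $\blacksquare\neg p$-literals, and both are uniform, so this holds. One must also confirm that the single-branch tableau argument of Lemma~\ref{lemma:UMtermpolynomial} stays polynomial when $\Lambda$ mixes such literals. I expect this to go through because each literal is uniform, so no rule branches.
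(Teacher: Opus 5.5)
Your proposal is correct and follows the paper's proof. The paper proves this lemma only by saying it goes "in the same way as Lemma~\ref{lemma:UMRpolynomial}", that is, by the same unit-propagation reduction to an irreducible theory, with Lemma~\ref{lemma:irreduciblesatisfiabilityexistential} supplying the model at the end. Your write-up is slightly more careful than the paper's text for Lemma~\ref{lemma:UMRpolynomial}: you state the two deletion cases correctly ($\Lambda\models\lambda$ deletes the clause, $\Lambda\models\neg\lambda$ deletes the literal), and you rescan all literals after enlarging $\Lambda$ rather than only the remaining ones.
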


Now, the following statements can be obtained in the same way as Lemmas~\ref{lemma:classicalsatisfiabilityreductionlozengeprob} and~\ref{lemma:preairreduciblesatisfiabilityupr}.
\begin{lemma}\label{lemma:classicalsatisfiabilityreductionboxprob}
Let $\Theta$ be a~set of $\upl$'s and $\eta$ a~$\blacksquare^\Prob$-literal s.t.\ $\Theta\not\models_{\prea}\neg\zeta$. Then $\Theta,\zeta\not\models_{\prea}\zero$.
\end{lemma}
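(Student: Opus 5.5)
The statement contains a typo: it quantifies over a $\blacksquare^\Prob$-literal $\eta$ but then speaks of $\zeta$. I read it as: if $\Theta$ is a set of $\upl$'s, $\eta$ is a $\blacksquare^\Prob$-literal, and $\Theta\not\models_{\prea}\neg\eta$, then $\Theta,\eta\not\models_{\prea}\zero$. The plan follows the proof of Lemma~\ref{lemma:classicalsatisfiabilityreductionlozengeprob}, with the roles of boxes and diamonds swapped. The extra work is that a box now constrains \emph{all} accessible states rather than a single witness.

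Write $\eta=\blacksquare'\Prob_A(\lambda')$ (the case $\blacksquare'\neg\Prob_A(\lambda')$ is symmetric). Consider the tableau for
\[
\{w{:}\theta\geq y,\;y=1\mid\theta\in\Theta\}\cup\{w{:}\neg\eta\leq z,\;z<1\},
\]
that is, $w{:}\blacksquare'\Prob_A(\lambda')\geq 1-z$ with $1-z>0$. As in Lemma~\ref{lemma:UPLentailmentpolynomial}, it has a single branch $\Bmc$. Because $\Theta\not\models_{\prea}\neg\eta$ and the rules are sound (witnessed models, Remark~\ref{rem:WMP}), $\Bmc$ is open. The rules $\Box_\geq$ and $\knows_\geq$ then propagate $s{:}\Prob_A(\lambda')\geq 1-z$ to every state $s\in\Rmsf_{\blacksquare'}(w)$ on $\Bmc$. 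Every other probabilistic constraint at such an $s$ has the form $\geq1$ or $\leq0$.

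By the reasoning of Theorem~\ref{theorem:generalcasedecidability}, pass to an SI model $\Mfrak$ over the states of $\Bmc$, with inner models $\Xfrak_s$, that realises $\Bmc$. For each $s\in\Rmsf_{\blacksquare'}(w)$ and each relevant agent, form $\Lambda_{s,A}$ as in \eqref{equ:zeroonemodel}. Since $\mu_{s,A}$ gives measure $1$ to every set $\|\lambda\|$ with $\lambda\in\Lambda_{s,A}$ and positive measure to $\|\lambda'\|$, some inner state $x_s$ satisfies $\lambda'\wedge\bigwedge\Lambda_{s,A}$. Replace $\mu_{s,A}$ by the Dirac measure at $x_s$. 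This keeps all $0/1$ constraints and makes $\Prob_A(\lambda')$ equal to $1$ at every $s\in\Rmsf_{\blacksquare'}(w)$. At every other state $t$, replace $\mu_{t,A}$ by the Dirac measure at a point satisfying $\Lambda_{t,A}$; this keeps all probabilistic atoms $0/1$-valued there.

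The main obstacle is verifying that the modified model still satisfies $\Theta$ at $w$. The two points to check are these.
\begin{enumerate}
\item The same state $s$ may also be reached by other modality paths from literals in $\Theta$ that demand $\Prob_A(\lambda'')\leq0$ there. Any such requirement is already recorded in $\Lambda_{s,A}$ as $\neg\lambda''$, which $x_s$ satisfies, so it survives the change of measure.
\item Every literal in $\Theta$ was realised on $\Bmc$ with a $0/1$ value at the relevant states, so changing non-extremal values only among unconstrained atoms cannot lower any value in $\Theta$.
\end{enumerate}
Using that the model is finitely branching and realises $\Bmc$ exactly, it follows that $\Imc_{\Mfrak'}(\eta,w)=1$ and $\Imc_{\Mfrak'}(\theta,w)=1$ for all $\theta\in\Theta$. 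Hence $\Theta,\eta\not\models_{\prea}\zero$.
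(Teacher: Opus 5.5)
Your proposal is correct and follows the paper's route. The paper proves this lemma only by saying it goes ``in the same way'' as Lemma~\ref{lemma:classicalsatisfiabilityreductionlozengeprob}: a single open branch, an SI model over the branch states, and a Dirac measure at an inner state satisfying $\lambda'\wedge\bigwedge\Lambda_{s,A}$. Your adaptation, doing this Dirac replacement at every $s\in\Rmsf_{\blacksquare'}(w)$ rather than at a single witness (justified by completeness of the branch under $\Box_\geq$ and $\knows_\geq$), is exactly the needed modification, and your reading of the $\eta$/$\zeta$ typo is the intended one.
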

\begin{lemma}\label{lemma:preairreduciblesatisfiabilityepr}
Let $\Gamma=\Theta\cup\Xi$ with $\Theta=\{\theta_1,\ldots,\theta_m\}$ and $\Xi=\{\kappa_1,\ldots,\kappa_n\}$ be an $\prea$-irreducible $\eprplus$ theory. Then it is $\prea$-satisfiable.
\end{lemma}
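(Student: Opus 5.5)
The plan is to mirror the proof of Lemma~\ref{lemma:preairreduciblesatisfiabilityupr}, with Lemmas~\ref{lemma:classicalsatisfiabilityreductionboxprob} and~\ref{lemma:irreduciblesatisfiabilityexistential} used in place of Lemmas~\ref{lemma:classicalsatisfiabilityreductionlozengeprob} and~\ref{lemma:irreduciblesatisfiabilityuniversal}.

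\textbf{Step 1: clause form and choice of literals.} Rewrite every rule in~$\Xi$ as an $\oplus$-clause, as in Lemma~\ref{lemma:irreduciblesatisfiabilityexistential}. A rule $\bigodot_i\zeta_i\rightarrow\zeta$ or $\bigodot_i\zeta_i\rightarrow\zero$ becomes $\bigoplus_i\neg\zeta_i\oplus\zeta$ or $\bigoplus_i\neg\zeta_i$. Since the $\zeta_i$ are $\neg$-free $\blacklozenge^\Prob$-literals, each $\neg\zeta_i$ is equivalent to a proper $\blacksquare^\Prob$-literal of the form $\blacksquare\neg\Prob_A(\lambda)$, or to a propositional literal. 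By Definition~\ref{def:EPR}, every clause is therefore either (A) a clause containing at least one proper $\blacksquare^\Prob$-literal, or (B) a purely propositional clause in literals $\Prob_A(\lambda)$, $\neg\Prob_A(\lambda)$. For each clause of type~(A), fix one proper $\blacksquare^\Prob$-literal $\eta_\kappa=\blacksquare_\kappa\neg\Prob_{A_\kappa}(\lambda_\kappa)$. By irreducibility, $\Theta\not\models_{\prea}\neg\eta_\kappa$ for each~$\kappa$. Lemma~\ref{lemma:classicalsatisfiabilityreductionboxprob} then gives that each $\Theta\cup\{\eta_\kappa\}$ is satisfiable.

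\textbf{Step 2: a single model for all chosen literals.} Build the tableau for $\Theta$ as in Lemma~\ref{lemma:UPLentailmentpolynomial}. Its single branch yields an SI-model $\langle\Mfrak,w\rangle$ with Dirac measures $\mu_{w',A}$ concentrated on points $x_{w',A}$ of classical event models for $\Lambda_{w',A}$. We claim this model already makes every $\eta_\kappa$ true at~$w$. Suppose instead that for some $w'\in\Rmsf_{\blacksquare_\kappa}(w)$ we had $\mu_{w',A}(\|\lambda_\kappa\|)>0$. That would mean $\Lambda_{w',A}\models\lambda_\kappa$ is forced. We must show this is forced at every $w'$ of the branch model, which holds because the branch model is canonical (minimal) for~$\Theta$. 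In that case $\Theta\models_{\prea}\blacklozenge_\kappa\Prob_A(\lambda_\kappa)=\neg\eta_\kappa$, contradicting irreducibility. This is the probabilistic analogue of the argument in Lemma~\ref{lemma:irreduciblesatisfiabilityexistential}.

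To make the canonicity claim safe, choose $x_{w',A}$ to satisfy $\Lambda_{w',A}$ together with the negations of all inner literals $\lambda$ not entailed by $\Lambda_{w',A}$. Such a point exists because uniform literals that are not entailed can be falsified jointly in a classical event model. One gets this by taking disjoint witnesses and, if necessary, using Lemma~\ref{lemma:classicalsatisfiabilityreductionbox}.

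\textbf{Step 3: propositional clauses.} It remains to handle the clauses of type~(B) that are still false at~$w$. Exactly as in Lemma~\ref{lemma:preairreduciblesatisfiabilityupr}, use event models $\Xfrak^+\uplus\Xfrak^-$ with points $x^\pm$ such that $\lambda$ holds at $x^+$ and fails at $x^-$, for every $\lambda$ occurring in such clauses. Then set $\mu_{w,A}(\{x^+\})=\mu_{w,A}(\{x^-\})=\tfrac12$. This makes every literal in those clauses take value $\tfrac12$, so every clause with at least two literals gets value~$1$. Clauses with one literal were already absorbed into~$\Theta$ by the reduction. Irreducibility ensures this re-weighting does not clash with~$\Theta$: a literal entailed by~$\Theta$ cannot occur negated in a clause of~$\Xi$, and literals fixed by~$\Theta$ at~$w$ belong to clauses that are already satisfied and so need no change. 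Type~(A) clauses are unaffected, because their chosen literals concern states reached via at least one outer modality. We may assume these states differ from~$w$ by generating fresh successors in the tableau.

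\textbf{Main obstacle.} The crux is Step~2: showing that the single model built from~$\Theta$ simultaneously verifies all the chosen $\blacksquare\neg\Prob$-literals. This needs a careful choice of the Dirac points $x_{w',A}$ so that only entailed inner literals receive positive measure.
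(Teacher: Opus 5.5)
Your outline is the route the paper intends. Its proof of this lemma is only ``in the same way as'' the $\uprobrule$ case: fix one proper $\blacksquare^{\Prob}$-literal per clause, make all of them true together with $\Theta$, then repair the propositional clauses at $w$ with a $\tfrac12$-mixture. As written, however, Step~2 has a genuine gap. The claim that all inner literals not entailed by $\Lambda_{w',A}$ can be falsified at a single point is false. For $\Lambda=\{\ActBox{a}q\}$, neither $\ActBox{a}r$ nor $\ActLozenge{a}q$ is entailed, yet $\{\ActBox{a}q,\sneg\ActBox{a}r,\sneg\ActLozenge{a}q\}$ is unsatisfiable. Falsifying the inner $\blacksquare$-literal needs a new $a$-successor, and the positive box in $\Lambda$ and the falsification of the inner $\blacklozenge$-literal constrain that successor in opposite ways. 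For propositional and $\blacklozenge$-type inner literals your canonicity argument can be made to work, since the inner model that makes variables true only where forced falsifies all non-entailed ones at once. The weaker statement you actually need also fails. Take $\Theta=\{\Prob_A(\ActBox{a}q)\}$ with the rules $\knowsLozenge{B}\Prob_A(\ActBox{a}r)\to\knowsLozenge{B}\Prob_A(p_1)$ and $\knowsLozenge{B}\Prob_A(\ActLozenge{a}q)\to\knowsLozenge{B}\Prob_A(p_2)$. This theory is irreducible, and each clause has exactly one proper $\blacksquare^{\Prob}$-literal. But $\Theta$ together with $\knows_B\neg\Prob_A(\ActBox{a}r)$ and $\knows_B\neg\Prob_A(\ActLozenge{a}q)$ is unsatisfiable: since $w\Emsf_Bw$, $\mu_{w,A}$ would have to give measure $1$ to the empty event $\ActBox{a}q\wedge\sneg\ActBox{a}r\wedge\sneg\ActLozenge{a}q$. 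The theory itself is satisfiable through the heads, e.g.\ by a Dirac measure at a point with no $a$-successors where $p_1$ and $p_2$ hold.

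Step~3 has a second, independent gap. You claim the chosen type-(A) literals only concern states other than $w$. But $\Emsf_B$ is reflexive, so a literal $\knows_B\neg\Prob_A(\lambda)$ constrains $\mu_{w,A}$ itself, and your re-weighting of $\mu_{w,A}$ can destroy it. For example, take $\Theta=\varnothing$ and $\Xi=\{\knowsLozenge{B}\Prob_A(p)\to\knowsLozenge{B}\Prob_A(s),\ \neg\Prob_A(p)\odot\neg\Prob_A(p)\to\zero\}$. This theory is irreducible and satisfiable (Dirac measure at a point where $p$ and $s$ hold). Yet the forced choice $\knows_B\neg\Prob_A(p)$ requires $\mu_{w,A}(\|p\|)=0$, while the second rule requires $\mu_{w,A}(\|p\|)\geq\tfrac12$. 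The points $x^\pm$ would moreover have to satisfy $\Lambda_{w,A}$, which runs into the same joint-consistency problem as above. So fixing in advance one proper $\blacksquare^{\Prob}$-literal per clause cannot be the selection mechanism: in both examples the clause must be satisfied through its head. A correct argument needs a different way of choosing which literal, or which intermediate values, witness each clause. The paper's one-line proof does not supply such a mechanism, so the gap cannot be closed by appeal to it.
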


Finally, the main result of the section can be obtained from Lemmas~\ref{lemma:UPLentailmentpolynomial}, \ref{lemma:classicalsatisfiabilityreductionlozengeprob}, and~\ref{lemma:preairreduciblesatisfiabilityepr} in the same way as Theorem~\ref{theorem:UPRpolynomial}.
\EPRpolynomial*
\end{document}